\newtheorem{theorem}{Theorem}[section]
\newtheorem{proposition}[theorem]{Proposition}
\newtheorem{corollary}[theorem]{Corollary}
\newtheorem{definition}[theorem]{Definition}
\newtheorem{example}[theorem]{Example}
\newtheorem{remark}[theorem]{Remark}
\numberwithin{equation}{section}
\newcommand{\R}{{\mathbb{R}}}
\newcommand{\N}{{\mathbb{N}}}
\def\Sig{\Sigma}
\def\e{\epsilon}
\def\a{\alpha}
\DeclareMathOperator{\aug}{aug}
\DeclareMathOperator{\suc}{Succ}
\DeclareMathOperator{\post}{Post}
\begin{document}

\begin{abstract}
	In this paper, we propose several opacity-preserving (bi)simulation relations for general nondeterministic
	transition systems (NTS) in terms of initial-state opacity, current-state opacity,
	$K$-step opacity, and infinite-step opacity. We also show how one can leverage quotient construction to compute such relations.
	In addition, we use a two-way observer method to verify opacity of
	nondeterministic finite transition systems (NFTSs).
	As a result, although the verification of opacity for infinite NTSs is generally \emph{undecidable}, if one can find such an opacity-preserving relation from an infinite NTS to
	an NFTS, the (lack of) opacity of the NTS can be easily verified over the NFTS which is \emph{decidable}.
\end{abstract}

\title[Opacity of nondeterministic transition systems: A (bi)simulation relation approach]{Opacity of nondeterministic transition systems: A (bi)simulation relation approach}

\author[K. Zhang]{Kuize Zhang$^1$}
\author[X. Yin]{Xiang Yin$^2$}
\author[M. Zamani]{Majid Zamani$^3$}
\address{$^1$ACCESS Linnaeus Center, School of Electrical Engineering and Computer Science, KTH Royal Institute of Technology, 10044 Stockholm, Sweden.}
\email{kuzhan@kth.se}
\urladdr{https://www.kth.se/en/eecs}
\address{$^2$Department of Automation, Shanghai Jiao Tong University, Shanghai, China}
\email{xiangyin@umich.edu}
\urladdr{http://automation.sjtu.edu.cn/en}
\address{$^1$Department of Electrical and Computer Engineering, Technical University of Munich, D-80290 Munich, Germany.}
\email{zamani@tum.de}
\urladdr{http://www.hcs.ei.tum.de}

\maketitle

\section{Introduction}

The notion of opacity is introduced in the analysis of cryptographic protocols \cite{Mazare2004Opacity},
and describes the ability that a system forbids leaking secret information.
Given a system, we assume that an intruder (outside the system)
can only observe the external behaviors of the system, i.e., the outputs of the system, but cannot see
the states of the system directly. Then, intuitively the system is called opaque if the intruder cannot determine
whether some states of the system prior to the current time step are secret via observing the outputs
prior to the current time step.

For discrete-event systems (DESs) in the framework of finite automata, the opacity problem has been widely investigated.
In different practical situations, opacity of DESs can be formulated as whether a system can prevent an intruder from
observing whether the initial state (resp., the current state, each state within $K$ steps prior to the current state
for some positive integer $K$,
each state prior to the current state) of the system is secret, i.e., the so-called
initial-state \cite{Saboori2013InitialStateOpacity} (resp. current-state \cite{Saboori2007CurrentStateOpacity},
$K$-step \cite{Saboori2011KStepOpacityJournal}, and infinite-step \cite{Saboori2012InfiniteStepOpacity}) opacity.
It is known that the existing algorithms for verifying these types of opacity have exponential time
complexity (cf. the above references and \cite{Yin2017TWObserverInfiniteStepOpacity}). Unfortunately, it is unlikely that there exist polynomial time algorithms for verifying them
since the problems of determining initial-state opacity, $K$-step opacity, and infinite-step opacity of DESs
are all {PSPACE}-complete \cite{Saboori2013InitialStateOpacity,Saboori2007CurrentStateOpacity,Saboori2011KStepOpacityJournal,Saboori2012InfiniteStepOpacity}.
When the original system is not opaque, several different approaches have also been proposed to enforce opacity;
see, e.g., \cite{takai2008formula,darondeau2014enforcing,wu2014synthesis,zhang2015max,yin2016uniform,tong2017synthesis}.

Nondeterministic transition systems (NTSs), particularly
nondeterministic finite transition systems (NFTSs),
play a fundamental role as a unified modeling framework in the verification and controller synthesis of hybrid systems \cite{tab09,klo08}, and
model checking \cite{bai08}.
Note that for general infinite-state NTSs, the opacity verification problem is undecidable
\cite{Bryans2008OpacityTransitionSystems},
e.g., the initial-state opacity and current-state opacity for labeled Petri nets are undecidable \cite{Tong2017DecidabilityOpacityPetriNets}.
Recently, opacity has also been investigated for other infinite-state systems, e.g.,  pushdown systems \cite{kobayashi2013verification} and
recursive tile systems \cite{chedor2014diagnosis}, where classes of infinite-state systems are identified for which opacity is decidable.
However, for finite-state systems, e.g., finite automata, though {PSPACE}-hard,
the opacity verification problem is always decidable \cite{Saboori2013InitialStateOpacity,Saboori2007CurrentStateOpacity,Saboori2011KStepOpacityJournal,Saboori2012InfiniteStepOpacity}.

Since the opacity verification problem for general NTSs is undecidable and even for NFTSs is {PSPACE}-hard, in this paper we develop a theory based on (bi)simulation relation to verify opacity using (potentially \emph{simpler}) NFTSs. Since the classical notions of (bi)simulation relations \cite{tab09} do not necessarily preserve opacity,
in this framework we first introduce stronger versions of (bi)simulation relations that preserve opacity. As a result, if one can find an NFTS being (bi)simulated by an infinite-state NTS in the sense of the stronger version,
then the opacity of the NTS (undecidable in general) can be verified over the NFTS (decidable). In addition,
if one can find a smaller NFTS being (bi)simulated by a larger NFTS in the sense of the stronger version, then the opacity of the larger NFTS can be efficiently verified over the smaller one.
Particularly, we modify the existing quotient-based construction \cite{tab09} to synthesize quotient systems of NTSs (resp. NFTSs) in terms of proposed opacity-preserving (bi)simulation relations to implement the above idea.

Intuitively, for two NTSs $\Sig_1$ and $\Sig_2$, $\Sig_2$ simulates $\Sig_1$ if each output sequence
generated by $\Sig_1$ can also be generated by $\Sig_2$; $\Sig_2$ bisimulates $\Sig_1$ if
$\Sig_2$ simulates $\Sig_1$ and vice versa (cf. \cite{tab09}).
Usually, (bi)simulation relation can be used to abstract a large-scale system to a smaller one. Then in some sense the smaller system can take place of the larger one in analysis and synthesis
(cf. \cite{girard07,zam12,tab09}).
In this paper, we first define new notions of opacity-preserving (bi)simulation relation,
then we use the proposed notions to give some necessary and sufficient conditions for the opacity of NTSs.
Hence, if one can find an appropriate opacity-preserving (bi)simulation relation
from the original infinite-state NTS $\Sig_1$ to an NFTS $\Sig_2$
(resp. from the original NFTS $\Sig_1$ to an NFTS $\Sig_2$ with remarkably smaller size than that of $\Sig_1$),
then the opacity of $\Sig_1$ can be checked (resp. much faster) by verifying that of
$\Sig_2$. 
In details, we first define a new notion of initial-state opacity-preserving (InitSOP) simulation
relation from one NTS to another NTS, which is actually not the classical simulation relation \cite{tab09}.
Second, because the InitSOP simulation relation does not suffice to preserve the other three types of opacity,
we define also
a notion of infinite-step opacity-preserving (InfSOP) bisimulation relation that preserves
the other three types of opacity and is actually
a stronger version of the classical bisimulation relation \cite{tab09}. 
In addition, we show that under some mild assumptions, the simulation/bisimulation relation from
an NTS to its quotient system becomes InitSOP simulation/InfSOP bisimulation relation, which
provides a constructive scheme for computing opacity-preserving abstractions of NTSs or large NFTSs.
A preliminary investigation of our results on only infinite-step opacity-preserving bisimulation relation appeared in \cite{kuizecdc17}. In this paper we
present a detailed and mature description of the results announced in \cite{kuizecdc17}, including investigating other notions of opacity (initial-state, current-state, and $K$-step opacity).

The remainder of this paper is organized as follows.
In Section \ref{sec:TS}, the basic notions of NTSs/NFTSs and (bi)simulation relation are introduced.
In Section \ref{sec:OpacityPreSimulation}, we show the main results of the paper, i.e, the notions of
opacity-preserving (bi)simulation relations, and their implementation based on quotient systems.
In Section \ref{sec:opacity_relation}, we prove the implication relationship between different notions of
opacity. In Section \ref{sec:checkingNFTS}, we use a two-way observer method to verify different notions of
opacity of NFTSs. In Section \ref{sec:application}, we show how to use the main results of this paper to
verify opacity of an infinite transition system.
Section \ref{sec:conc} concludes the paper.

\section{Preliminaries}\label{sec:TS}
We use the following notations throughout the paper:

\begin{itemize}

\item $\emptyset$: the empty set;

\item $\N$: the set of natural numbers;
\item $\R$: the set of real numbers;
\item $[a,b]:=\{a,a+1,\dots,b\}$, where $a,b\in \N,\;a\le b$;
\item $|X|$: the cardinality of set $X$.
\end{itemize}

NTSs are defined as in \cite{tab09,lin14b} with some modifications to accommodate for secret states.

\begin{definition}\label{d4.1} An NTS $\Sig$ is a septuple $(X,X_0,S,U,\to,Y,h)$ consisting of
\begin{itemize}
\item a (potentially infinite) set $X$ of states,
\item a (potentially infinite) subset  $X_0\subseteq X$ of initial states,
\item a (potentially infinite) subset $S\subseteq X$ of secret states,
\item a (potentially infinite) set $U$ of inputs,
\item a transition relation $\to \subseteq X\times U\times X$,
\item a set $Y$ of outputs, and
\item an output map $h:X\to Y$.
\end{itemize}
\end{definition}

In an NTS, for a state $x\in X$, the output $h(x)$ also means the observation at $x$.
An NTS is called an NFTS if $X$ and $U$ are finite sets.
Elements of $\to$ are called transitions.
Let $X^*$ be the set of strings of finite length over $X$ including the string $\e$ of length $0$
and $X^+$ be $X^*\setminus\{\e\}$.
For each $\xi\in X^*$, $|\xi|$ denotes the length of $\xi$.
For each $\xi\in X^*$, for all integers $0\le i\le j\le|\xi|-1$,
we use $\xi[i,j]$ to denote $\xi(i)\xi(i+1)\dots \xi(j)$ for short. Sets
$U^*,U^+,Y^*$, and $Y^+$ are defined analogously.
Given an input sequence $\a\in U^{*}$, a string $\xi\in X^*$ is called a run over $\a$ if
$|\xi|-1\le|\a|$, $\xi(0)\in X_0$, and
for all $i\in[0,|\xi|-2]$, $(\xi(i),\a(i),\xi(i+1))\in \to$. Particularly, a run $\xi\in X^*$ over input sequence
$\a\in U^*$ is said to be maximal if either $|\xi|-1=|\a|$ or $(\xi(|\xi|-1),\a(|\xi|-1),x')\notin \to$ for any
$x'\in X$. 
For a run $\xi$, $h(\xi(0))\dots h(\xi(|\xi|-1))$ is called an output sequence
generated by the system.
Transitions generated by $\a$ and $\xi$ can be denoted as $\xi(0)\xrightarrow[]{\a(0)}\xi(1)\xrightarrow[]{{\alpha}(1)}
\cdots\xrightarrow[]{\a(|\xi|-2)}\xi(|\xi|-1)$ (or $\xi(0)\xrightarrow[]{\a}\xi(|\xi|-1)$ for short).
A state $x\in X$ is called reachable from an initial state $x_0\in X_0$
if there exists $\a\in U^*$ such that $x_0\xrightarrow[]{\a}x$.
An NTS is called total if for all $x\in X$ and $u\in U$,
there exists $x'\in X$ such that $(x,u,x')\in\to$. Hence, after a total NTS starts running, it never stops.
However, for a non-total NTS, after it starts running, it may stop; and once it stops, it never starts again.
We assume that the termination of running can be observed, and use a new state $\phi$ to denote it.
In order to describe this phenomenon, we extend a non-total NTS $\Sig=(X,X_0,S,U,\to,Y,h)$ to a total NTS
$\Sig_{\aug}:=(X\cup\{\phi\},X_0,S,U,\to_{\aug},Y\cup\{\phi\},h_{\aug})$ as its augmented system,
where $\phi\notin X\cup U\cup Y$, $\to\subseteq\to_{\aug}$, $\to_{\aug}\setminus \to=\{(\phi,u,\phi)|u\in U\}
\cup\{(x,u,\phi)|(x,u,x')\notin\to\text{ for any }x'\in X\}$, $h_{\aug}|_{X}=h$ (i.e., the restriction of
$h_{\aug}$ to $X$ equals $h$), and $h_{\aug}(\phi)=\phi$.
Particularly, for a total NTS, its augmented system, also denoted by $\Sig_{\aug}$, is the NTS itself.

An NTS can be represented by its state transition diagram, i.e., a directed graph whose vertices
correspond to the states and their associated outputs of
the NTS and whose edges correspond to state transitions.
Each edge is labeled with the inputs
associated with the transition, a state directly connected from ``start'' means an initial state, and
a double circle (or rectangle) denotes a secret state.
We give an example to depict these concepts.

\begin{example}\label{exam1_OpacityNFTS}
	Consider NFTS $(X,X_0,S,U,\to,Y,h)$, where $X=\{a,b,c\}$, $X_0=X$, $S=\{b\}$, $U=Y=\{0,1\}$,
	$\to=\{(a,1,a),(a,0,b),(a,0,c),(b,0,b),(b,1,b),(c,0,c),(c,1,b)\}$, $h(a)=0$,
	$h(b)=h(c)=1$ (see Fig. \ref{fig1:detectability_NFTS}).

\begin{figure}
        \centering
\begin{tikzpicture}[>=stealth',shorten >=1pt,auto,node distance=2.2 cm, scale = 0.8, transform shape,
	->,>=stealth,inner sep=2pt,state/.style={shape=circle,draw,top color=red!10,bottom color=blue!30},
	point/.style={circle,inner sep=0pt,minimum size=2pt,fill=}, 
	skip loop/.style={to path={-- ++(0,#1) -| (\tikztotarget)}}]
	\node[initial,state] (a)                                 {$a/0$};
	\node[initial,state,accepting] (b) [above left of = a]                    {$b/1$};
	\node[initial,state,initial by arrow,initial where=right] (c) [above right of = a]                 {$c/1$};
	\node[] (virtual) [right of = c] {};

	\path [->] (a) edge [loop above] node {$1$} (a)
	(a) edge node {$0$} (c)
	(a) edge node {$0$} (b)
	(b) edge [loop above] node {$0,1$} (b)
	(c) edge [loop above] node {$0$} (c)
	(c) edge node {$1$} (b)
	;

        \end{tikzpicture}
		\caption{State transition diagram of the NFTS in Example \ref{exam1_OpacityNFTS}.}
	\label{fig1:detectability_NFTS}
\end{figure}
\end{example}

Here, we recall the classical notions of (bi)simulation relations (see for example, \cite{tab09}).
\begin{definition}[simulation]\label{def_simulation}
	Consider two NTSs $\Sig_i=(X_i,X_{i,0},S_i,U_i,\to_i,Y,h_i)$, $i=1,2$.
	A relation $\sim\subseteq X_1\times X_2$ is called a simulation relation from $\Sig_1$ to $\Sig_2$ if
	\begin{enumerate}
		\item\label{item1_simulation}
			for every $x_{1,0}\in X_{1,0}$, there exists $x_{2,0}\in X_{2,0}$ such that $(x_{1,0},x_{2,0})\in\sim$;
		\item\label{item2_simulation}
			for every $(x_1,x_2)\in\sim$, $h_1(x_1)=h_2(x_2)$;
		\item\label{item3_simulation}
			for every $(x_1,x_2)\in\sim$, if there is a transition $x_1\xrightarrow[]{u_1}_1 x_1'$ in $\Sig_1$
			then there exists a transition $x_2\xrightarrow[]{u_2}_2 x_2'$ in $\Sig_2$ satisfying $(x_1',x_2')\in\sim$.
	\end{enumerate}
	Under a simulation relation $\sim\subseteq X_1\times X_2$ from $\Sig_1$ to $\Sig_2$, we say $\Sig_2$ simulates $\Sig_1$,
	and denote it by $\Sig_1\preceq_\mathsf{S}\Sig_2$.
\end{definition}

\begin{definition}[bisimulation]\label{def_bisimulation}
	Consider two NTSs $\Sig_i=(X_i,X_{i,0},S_i,U_i,\to_i,Y,h_i)$, $i=1,2$.
	A relation $\sim\subseteq X_1\times X_2$ is called a bisimulation relation between $\Sig_1$ and $\Sig_2$ if
	\begin{enumerate}
		\item\label{item1_bisimulation}
			\begin{enumerate}
				\item\label{item1a_bisimulation}
					for every $x_{1,0}\in X_{1,0}$, there exists $x_{2,0}\in X_{2,0}$ such that $(x_{1,0},x_{2,0})\in\sim$;
				\item\label{item1b_bisimulation}
					for every $x_{2,0}\in X_{2,0}$, there exists $x_{1,0}\in X_{1,0}$ such that $(x_{1,0},x_{2,0})\in\sim$;
			\end{enumerate}
			
		\item\label{item2_bisimulation}
			for every $(x_1,x_2)\in\sim$, $h_1(x_1)=h_2(x_2)$;
		\item\label{item3_bisimulation}
			for every $(x_1,x_2)\in\sim$,
			\begin{enumerate}
				\item\label{item3a_bisimulation}
					if there exists a transition $x_1\xrightarrow[]{u_1}_1 x_1'$ in $\Sig_1$
					then there exists a transition $x_2\xrightarrow[]{u_2}_2 x_2'$ in $\Sig_2$ satisfying $(x_1',x_2')\in\sim$;
				\item\label{item3b_bisimulation}
					if there exists a transition $x_2\xrightarrow[]{u_2}_2 x_2'$ in $\Sig_2$
					then there exists a transition $x_1\xrightarrow[]{u_1}_1 x_1'$ in $\Sig_1$ satisfying $(x_1',x_2')\in\sim$.
			\end{enumerate}
	\end{enumerate}
	Under a bisimulation relation $\sim\subseteq X_1\times X_2$ between $\Sig_1$ and $\Sig_2$, we say $\Sig_2$ bisimulates $\Sig_1$ and vice versa, and denote it by
	$\Sig_1\cong_\mathsf{S}\Sig_2$.
\end{definition}

From Definitions \ref{def_simulation} and \ref{def_bisimulation}, one can readily see that if
$\Sig_2$ simulates  $\Sig_1$ then each output sequence generated by $\Sig_1$ can be generated by $\Sig_2$ as well;
and if $\Sig_2$ bisimulates $\Sig_1$ then the set of output sequences generated by $\Sig_1$ coincides with
that generated by $\Sig_2$.

Here, we recall notions of quotient relation and quotient systems \cite{tab09}
with some modifications which will be used later to show one of the main results of the paper.

\begin{definition}(Quotient system)\label{def_quotientsystem}
	Let $\Sig=(X,X_0,S,U,\to,Y,h)$ be an NTS and $\sim\subseteq X\times X$ an equivalence relation on $X$
	satisfying $h(x)=h(x')$ for all $(x,x')\in\sim$.
	The quotient system of $\Sig$ by $\sim$, denoted by $\Sig_{\sim}$, is defined as the system
 	$\Sig_{\sim}=(X_{\sim},X_{\sim,0},S_{\sim},U,\to_{\sim},Y,h_{\sim})$ satisfying
	\begin{enumerate}
		\item $X_{\sim}=X/\sim=\{[x]|x\in X\}$;
		\item $X_{\sim,0}=\{[x]|x\in X,[x]\cap X_0\ne\emptyset\}$;
		\item $S_{\sim}=\{[x]|x\in X,[x]\cap S\ne\emptyset\}$;
		\item for all $[x],[x']\in X_{\sim}$ and $u\in U$, there exists transition
			$[x]\xrightarrow[]{u}_{\sim}[x']$ in $\Sig_{\sim}$ if and only if there exists transition
			$\bar x\xrightarrow[]{u}\bar x'$ in $\Sig$ for some $\bar x\in [x]$ and $\bar x'\in [x']$;
		\item $h_{\sim}([x])=h(\bar x)$ for every $\bar x\in [x]$;
	\end{enumerate}
	where for every $x\in X$, $[x]$ denotes the equivalence class generated by $x$, i.e.,
	$[x]:=\{x'\in X|(x',x)\in\sim\}$.
\end{definition}

It can be seen that for all $x,x'\in X$, 1) either $[x]=[x']$ or $[x]\cap[x']=\emptyset$;
2) $x\in[x']$ if and only if $[x]=[x']$.
Then the set of all distinct equivalence classes corresponding to $\sim$ partitions $X$.
Note that in \cite{tab09}, there is no item for $S_{\sim}$, since the system $\Sig$
considered in \cite{tab09} does not have secret states.
From Definition \ref{def_quotientsystem}, one can easily verify that the number of states in the quotient system $\Sig_{\sim}$ is less than or equal to that in $\Sig$.

Consider an NTS $\Sig=(X,X_0,S,U,\to,Y,h)$ and its quotient system
$\Sig_{\sim}=(X_{\sim},X_{\sim,0},S_{\sim},U,\to_{\sim},Y,h_{\sim})$ defined
by an equivalence relation $\sim\subseteq X\times X$ satisfying $h(x)=h(x')$
for all $(x,x')\in\sim$. By defining a quotient relation
\begin{equation}
	\sim_\mathsf{Q}:=\{(x,[x])|x\in X\}\subseteq X\times X_{\sim},
	\label{eqn:quotientrelation}
\end{equation}
the following result, borrowed from \cite{tab09}, holds.

\begin{proposition}\label{prop_quotientsimulation}
	Consider an NTS $\Sig=(X,X_0,S,U,\to,Y,h)$ and its quotient system
	$\Sig_{\sim}=(X_{\sim},X_{\sim,0},S_{\sim},U,\to_{\sim},Y,h_{\sim})$ defined
	by an equivalence relation $\sim\subseteq X\times X$ satisfying $h(x)=h(x')$
	for all $(x,x')\in\sim$. Under quotient relation $\sim_{\mathsf{Q}}$ defined in \eqref{eqn:quotientrelation},
	$\Sig_{\sim}$ simulates $\Sig$. Moreover, $\Sig_{\sim}$ bisimulates $\Sig$ under $\sim_{\mathsf{Q}}$
	if and only if $\Sig$ bisimulates $\Sig$ under $\sim$.
\end{proposition}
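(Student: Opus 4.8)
The plan is to prove the two assertions in turn. The first --- that $\Sig_{\sim}$ simulates $\Sig$ under $\sim_{\mathsf{Q}}$ --- I would handle by a direct check of the three items of Definition~\ref{def_simulation}. For item~(\ref{item1_simulation}): given $x_0\in X_0$, the class $[x_0]$ lies in $X_{\sim,0}$ since $x_0\in[x_0]\cap X_0$, and $(x_0,[x_0])\in\sim_{\mathsf{Q}}$ by definition. For item~(\ref{item2_simulation}): $h_{\sim}([x])=h(x)$ by item~5 of Definition~\ref{def_quotientsystem}, which is well posed precisely because of the standing hypothesis that $h$ is constant on $\sim$-classes. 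For item~(\ref{item3_simulation}): a transition $x\xrightarrow[]{u}x'$ in $\Sig$ witnesses, via item~4 of Definition~\ref{def_quotientsystem} applied to the representatives $x\in[x]$ and $x'\in[x']$, the transition $[x]\xrightarrow[]{u}_{\sim}[x']$ in $\Sig_{\sim}$, and $(x',[x'])\in\sim_{\mathsf{Q}}$. This part is routine.

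For the biconditional I would first strip away the automatic conditions. Of the bisimulation conditions of Definition~\ref{def_bisimulation} for $\sim_{\mathsf{Q}}$, items~(\ref{item1a_bisimulation}), (\ref{item2_bisimulation}), (\ref{item3a_bisimulation}) are exactly items~(\ref{item1_simulation})--(\ref{item3_simulation}) just verified, and item~(\ref{item1b_bisimulation}) holds unconditionally: any $[x]\in X_{\sim,0}$ contains some $x_0\in X_0$, so $(x_0,[x])=(x_0,[x_0])\in\sim_{\mathsf{Q}}$. Dually, treating $\sim$ as a candidate bisimulation between $\Sig$ and itself, items~(\ref{item1a_bisimulation}) and~(\ref{item1b_bisimulation}) hold by reflexivity of $\sim$, item~(\ref{item2_bisimulation}) is the standing hypothesis, and items~(\ref{item3a_bisimulation}) and~(\ref{item3b_bisimulation}) are equivalent by symmetry of $\sim$. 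Hence everything reduces to the equivalence: item~(\ref{item3b_bisimulation}) holds for $\sim_{\mathsf{Q}}$ if and only if item~(\ref{item3a_bisimulation}) holds for $\sim$.

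To close that equivalence I would argue as follows. Assuming $\sim$ is a bisimulation between $\Sig$ and itself: given $(x,[x])\in\sim_{\mathsf{Q}}$ and a transition $[x]\xrightarrow[]{u}_{\sim}[x']$ in $\Sig_{\sim}$, item~4 of Definition~\ref{def_quotientsystem} produces $\bar x\in[x]$, $\bar x'\in[x']$ with $\bar x\xrightarrow[]{u}\bar x'$ in $\Sig$; since $\bar x\sim x$ and $\sim$ is a bisimulation, this transition transfers to a transition $x\xrightarrow[]{u}\hat x$ in $\Sig$ with $\hat x\sim\bar x'$, whence $[\hat x]=[\bar x']=[x']$ and $(\hat x,[x'])\in\sim_{\mathsf{Q}}$ --- this is item~(\ref{item3b_bisimulation}) for $\sim_{\mathsf{Q}}$. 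Conversely, assuming $\sim_{\mathsf{Q}}$ is a bisimulation between $\Sig$ and $\Sig_{\sim}$: given $(x,x')\in\sim$ (so $[x]=[x']$) and a transition $x\xrightarrow[]{u}x_1$ in $\Sig$, item~4 of Definition~\ref{def_quotientsystem} gives $[x']\xrightarrow[]{u}_{\sim}[x_1]$ in $\Sig_{\sim}$, and applying item~(\ref{item3b_bisimulation}) for $\sim_{\mathsf{Q}}$ to the pair $(x',[x'])$ yields a transition $x'\xrightarrow[]{u}x_1'$ in $\Sig$ with $(x_1',[x_1])\in\sim_{\mathsf{Q}}$, i.e.\ $[x_1']=[x_1]$, i.e.\ $x_1\sim x_1'$ --- this is item~(\ref{item3a_bisimulation}) for $\sim$.

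The one genuinely non-routine point, which I would flag as the crux, is the use of item~4 of Definition~\ref{def_quotientsystem} in the last step: a transition of $\Sig_{\sim}$ certifies only a transition between \emph{some} pair of representatives in $\Sig$, so to obtain a transition out of the \emph{prescribed} state one is forced to invoke the backward transfer property of a bisimulation --- of $\sim$ in one direction and of $\sim_{\mathsf{Q}}$ in the other. Everything else is bookkeeping with equivalence classes, and I expect no hidden difficulty; the only care needed is to track which of the two dual transfer clauses of Definition~\ref{def_bisimulation} is invoked, which is harmless here since $\sim$ is symmetric.
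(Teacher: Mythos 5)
Your proof is correct. Note that the paper itself gives no proof of Proposition~\ref{prop_quotientsimulation}: it is stated as ``borrowed from \cite{tab09}'', so there is no in-text argument to compare against. Your write-up is the standard argument, and the way you organize it is clean: after observing that items~\ref{item1a_bisimulation}), \ref{item1b_bisimulation}), \ref{item2_bisimulation}), and \ref{item3a_bisimulation}) of Definition~\ref{def_bisimulation} hold automatically for $\sim_{\mathsf{Q}}$ (and that for $\sim$ on $\Sig$ everything except the transfer clauses is automatic, with \ref{item3a_bisimulation}) and \ref{item3b_bisimulation}) equivalent by symmetry), the whole biconditional collapses to the single equivalence between the backward transfer clause for $\sim_{\mathsf{Q}}$ and the transfer clause for $\sim$, and you close that correctly in both directions. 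You also identify the genuine crux accurately: item~4 of Definition~\ref{def_quotientsystem} only certifies a transition between \emph{some} representatives, so relocating it to the prescribed state is exactly where the bisimulation hypothesis is consumed. One harmless remark: Definition~\ref{def_bisimulation} as stated in this paper does not require the matching transition to carry the same input label ($u_1$ and $u_2$ may differ), so your insistence on reusing the same $u$ proves slightly more than is needed; this only strengthens the conclusion and costs nothing.
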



In the sequel, with these preliminaries, we present our main results.

\section{Opacity-Preserving (Bi)simulation Relations}\label{sec:OpacityPreSimulation}

\subsection{Concepts of opacity}

In this subsection, we formulate the notions of opacity of NTSs.

\begin{definition}[InitSO]\label{def_initial-state_opacity}
	Let $\Sig=(X,X_0,S,U,\to,Y,h)$ be an NTS. System $\Sig$ is said to be initial-state opaque if
	for every $x_0\in X_0\cap S$, every $\a\in U^*$, and every maximal run $x_0x_1\dots x_{k}\in X^*$ over $\a$
	 with $k\le|\a|$,
	there exists a maximal run $x_0'\dots x_{k}'\in X^*$ also over $\a$ such that
	$x_0'\notin S$, and $h(x_j)=h(x_j')$ for every $j\in[0,k]$.
\end{definition}

Intuitively, if a system $\Sig$ is initial-state opaque, then the intruder cannot make sure whether the initial
state is secret or not. 

\begin{definition}[CSO]\label{def_current-state_opacity}
	Let $\Sig=(X,X_0,S,U,\to,Y,h)$ be an NTS. System $\Sig$ is said to be current-state opaque if
	for every $x_0\in X_0$, every $\a\in U^*$, and every run $x_0x_1\dots x_{|\a|}\in X^*$ over $\a$,
	if $x_{|\a|}\in S$ then   there exists a run $x_0'\dots x_{|\a|}'\in X^*$ also over $\a$ such that
	$x_{|\a|}'\notin S$, and $h(x_j)=h(x_j')$ for every $j\in[0,|\a|]$.
\end{definition}

Intuitively, if a system $\Sig$ is current-state opaque, then the intruder cannot make sure whether
the current state is secret. 

\begin{definition}[KSO]\label{def_K-step_opacity}
	Let $\Sig=(X,X_0,S,U,\to,Y,h)$ be an NTS.
	System $\Sig$ is said to be $K$-step opaque for a given positive integer $K$ if
	for every $x_0\in X_0$, every $\a\in U^*$, every run $x_0x_1\dots x_{|\a|}\in X^*$ over $\a$,
	and every $i\in[K',|\a|]$,
	if $x_i\in S$ then there exists a run $x_0'\dots x_{|\a|}'\in X^*$ also
	over $\a$ such that $x_i'\notin S$,
	and $h(x_j)=h(x_j')$ for every $j\in[0,|\a|]$, where $K'=\max\{0,|\a|-K\}$.
\end{definition}

\begin{definition}[InfSO]\label{def_infinite-step_opacity}
	Let $\Sig=(X,X_0,S,U,\to,Y,h)$ be an NTS. System $\Sig$ is said to be infinite-step opaque if
	for every $x_0\in X_0$, every $\a\in U^*$, every maximal run $x_0x_1\dots x_{k}\in X^*$ over $\a$
	with $k\le|\a|$,
	and every $i\in[0,k]$,
	if $x_i\in S$ then there exists a maximal run $x_0'\dots x_{k}'\in X^*$ also over $\a$ such that
	$x_i'\notin S$, and $h(x_j)=h(x_j')$ for every $j\in[0,k]$.
\end{definition}

Intuitively, if a system $\Sig$ is infinite (resp. $K$)-step opaque, then the intruder cannot make sure whether any
state (within $K$ steps) prior to the current state is secret.

It is readily seen that an NTS $\Sig$ is initial-state (resp. current-state, $K$-step, infinite-step) opaque
if and only if its augmented system $\Sig_{\aug}$ is initial-state (resp. current-state, $K$-step, infinite-step)
opaque.  Hence, without loss of generality, we can consider only total NTSs in what follows.

\subsection{Initial-state opacity-preserving (bi)simulation relation}

In this subsection, we characterize the initial-state opacity-preserving (InitSOP) simulation relation.

One of the main goals of this subsection is to provide a simulation-based method for verifying
the initial-state opacity of NTSs.
Particularly, for two NTSs $\Sig_1$ and $\Sig_2$, we are interested in providing a new notion of simulation
relation such that $\Sig_2$ simulating $\Sig_1$ implies that if $\Sig_1$ is initial-state opaque then
$\Sig_2$ is also initial-state opaque. In other words, lack of opacity in $\Sigma_2$ implies lack of opacity in $\Sigma_1$.
Hence, the central problem is whether the classical
simulation relation preserves initial-state opacity.
We next show that generally the classical simulation relation does not preserve initial-state opacity.

\begin{proposition}\label{prop:SimulationOpacity_initial}
	Simulation relation (cf. Definition \ref{def_simulation}) does not preserve initial-state opacity.
\end{proposition}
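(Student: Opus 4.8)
The plan is to exhibit a concrete counterexample: two NFTSs $\Sig_1$ and $\Sig_2$ together with a relation $\sim \subseteq X_1 \times X_2$ satisfying all three conditions of Definition \ref{def_simulation}, such that $\Sig_1$ is initial-state opaque but $\Sig_2$ is not. Since the statement is purely a non-preservation result, a single well-chosen example suffices, and small finite systems will make it fully verifiable.

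First I would design $\Sig_1$ to be initial-state opaque in the simplest possible way: for instance, take $\Sig_1$ with two initial states $p$ (secret) and $q$ (non-secret) that have the same output and generate exactly the same maximal runs up to output equivalence — e.g. both are ``dead'' states with no outgoing transitions, or both loop on the same input with a matching output structure. Then every secret initial run is shadowed by a non-secret one, so $\Sig_1$ is InitSO. Next I would build $\Sig_2$ so that $\Sig_2$ simulates $\Sig_1$ but collapses (or omits) the non-secret ``cover'': the natural move is to let $\Sig_2$ have a single initial state $r$ that is secret, with $h_2(r) = h_1(p) = h_1(q)$ and with outgoing transitions rich enough that both $p$'s and $q$'s transitions in $\Sig_1$ can be matched from $r$. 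Then $\sim = \{(p,r),(q,r), \dots\}$ (extended consistently on successors) is a genuine simulation relation from $\Sig_1$ to $\Sig_2$: condition \eqref{item1_simulation} holds because every initial state of $\Sig_1$ relates to $r$; \eqref{item2_simulation} holds by the output choices; and \eqref{item3_simulation} holds because $r$ was given enough transitions. But in $\Sig_2$ the only initial state $r$ is secret, so a secret initial run over some input has no non-secret counterpart — hence $\Sig_2$ is not initial-state opaque.

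Concretely I expect the example can be made with $\Sig_1$ having states $\{p, q, z\}$ with $X_{1,0} = \{p,q\}$, $S_1 = \{p\}$, a single input and output, $p$ and $q$ both transitioning to $z$, and $\Sig_2$ having states $\{r, z'\}$ with $X_{2,0} = S_2 = \{r\}$, $r \to z'$; then $\sim = \{(p,r),(q,r),(z,z')\}$. One then checks the three simulation axioms directly, checks that $\Sig_1$ is InitSO (the secret run $p z$ is covered by the non-secret run $q z$ with identical outputs), and checks that $\Sig_2$ is not InitSO (the secret run $r z'$ admits no run starting outside $S_2$, since $r$ is the only initial state). It is also worth remarking, as a sanity check, that $\Sig_2$ does \emph{not} bisimulate $\Sig_1$ here — which is consistent with the paper's later claim that a strengthened (bi)simulation is needed.

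The main obstacle, such as it is, is purely in the bookkeeping: one must make sure the relation $\sim$ is closed under successors so that condition \eqref{item3_simulation} is literally satisfied, and one must double-check that the definition of initial-state opacity (Definition \ref{def_initial-state_opacity}), which quantifies over \emph{maximal} runs over every input sequence $\a \in U^*$, is met by $\Sig_1$ and violated by $\Sig_2$ — in particular being careful that the non-secret covering run in $\Sig_1$ is maximal of the same length and output-equivalent. With the tiny state spaces above these checks are routine, so I do not anticipate a genuine difficulty; the content of the proposition is the existence of the example, not any subtle argument.
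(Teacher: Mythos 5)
Your proposal is correct and takes essentially the same approach as the paper: both prove the proposition by exhibiting an explicit counterexample consisting of two NFTSs, a relation satisfying all three conditions of Definition~\ref{def_simulation} from $\Sig_1$ to $\Sig_2$, with $\Sig_1$ initial-state opaque and $\Sig_2$ not (the paper uses a $4$-state/$2$-state cyclic pair in Fig.~\ref{fig1:initial_opaicty_NFTS}, yours a $3$-state/$2$-state pair, and your verification of the simulation axioms and of (non-)opacity via maximal runs goes through). The only difference is that the paper's proof additionally notes that the inverse relation is a simulation from $\Sig_2$ to $\Sig_1$, so the \emph{lack} of initial-state opacity is not preserved either, but that is beyond the literal statement of the proposition.
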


\begin{proof}
	We provide a counterexample to prove the statement.
	Consider two NFTSs $\Sig_i=(X_i,X_{i,0},S_i,U,\to_i,Y,h_i)$, $i=1,2$,
	shown in Fig. \ref{fig1:initial_opaicty_NFTS}, where $X_1=\{1',2',3',4'\}=X_{1,0}$, $S_1=\{1'\}$,
	$X_2=\{1,2\}=X_{2,0}$, $S_2=\{1\}=U$, $Y=\{1,2\}$.

	By Definition \ref{def_initial-state_opacity},
	system $\Sig_1$ is initial-state opaque,
	because for input sequence $\a:=1\dots1\in U^*$, for run $x_1:=1'2'3'\dots$ over $\a$,
	there is a unique run $x_2:=3'4'1'\dots$ over $\a$ such that they produce the same output sequence
	$121\dots$,
	where $1'\in X_{1,0}\cap S_1$ and $3'\in X_{1,0}\setminus S_1$ are initial states.
	Again by Definition \ref{def_initial-state_opacity}, system $\Sig_2$ is not initial-state opaque,
	because for secret state $1$, there exists no other state producing the same output as $1$.
	On the other hand, it can be readily verified that under relation $\sim=\{(1',1),(2',2),(3',1),(4',2)\}$,
	$\Sig_2$ simulates $\Sig_1$. Hence, simulation relation does not preserve initial-state opacity. Similarly, one can readily show that relation $\sim^{-1}=\{(1,1'),(2,2'),(1,3'),(2,4')\}$ is a simulation relation from $\Sigma_2$ to $\Sigma_1$. Hence, the simulation relation does not preserve the lack of initial-state opacity either.
\begin{figure}
        \centering
\begin{tikzpicture}[>=stealth',shorten >=1pt,auto,node distance=2.5 cm, scale = 0.8, transform shape,
	->,>=stealth,inner sep=2pt,state/.style={shape=circle,draw,top color=red!10,bottom color=blue!30},
	point/.style={circle,inner sep=0pt,minimum size=2pt,fill=}, 
	skip loop/.style={to path={-- ++(0,#1) -| (\tikztotarget)}}]
	\node[initial,state,accepting,initial by arrow,initial where=below] (1') {$1'/1$};
	\node[initial,state,initial by arrow,initial where=below] (2') [right of = 1'] {$2'/2$};
	\node[initial,state,initial by arrow,initial where=below] (3') [right of = 2'] {$3'/1$};
	\node[initial,state,initial by arrow,initial where=below] (4') [right of = 3'] {$4'/2$};
	\node[initial,state,accepting,initial by arrow,initial where=above] (1) [below of = 1']  {$1/1$};
	\node[initial,state,initial by arrow,initial where=above] (2) [right of = 1] {$2/2$};

	\path [->]
	(1) edge [bend left] node {$1$} (2)
	(2) edge [bend left] node {$1$} (1)
	(1') edge node {$1$} (2')
	(2') edge node {$1$} (3')
	(3') edge node {$1$} (4')
	(4') edge [bend right] node {$1$} (1')
	;

	\node at (-1,0) {$\Sig_1$};
	\node at (-1,-2.5) {$\Sig_2$};
        \end{tikzpicture}
		\caption{State transition diagrams of two NFTSs in the proof of Prop.~\ref{prop:SimulationOpacity_initial}.}
	\label{fig1:initial_opaicty_NFTS}
\end{figure}
\end{proof}


Since generally simulation relation does not preserve (lack of) initial-state opacity,
we propose a variant of this notion to make it initial-state opacity-preserving.

\begin{definition}[InitSOP simulation relation]\label{def_OpacityPreservingsimulation_initial}
	Consider two NTSs $\Sig_i=(X_i,X_{i,0},S_i,U,\to_i,Y,h_i)$, $i=1,2$.
	A relation $\sim\subseteq X_1\times X_2$ is called an IntiSOP simulation relation from $\Sig_1$ to $\Sig_2$  if
	\begin{enumerate}
		\item\label{item1:opacity_initial}
			\begin{enumerate}
				\item\label{item1a:opacity_initial}
					for all $x_{1,0}\in X_{1,0}\setminus S_1$, there exists $x_{2,0}\in X_{2,0}\setminus S_2$
					such that $(x_{1,0},x_{2,0})\in \sim$;
				\item\label{item1c:opacity_initial}
					for all $x_{2,0}\in X_{2,0} \cap S_2$, there exists $x_{1,0}\in  X_{1,0}\cap S_1$
					such that $(x_{1,0},x_{2,0})\in \sim$;
			\end{enumerate}
		\item\label{item3:opacity_initial} for every $(x_1,x_2)\in\sim$, $h_1(x_1)=h_2(x_2)$;	
		\item\label{item2:opacity_initial} for every $(x_1,x_2)\in\sim$,
			\begin{enumerate}
				\item\label{item2a:opacity_initial}
					for every transition $x_1\xrightarrow[]{u}_1 x_1'$, there exists transition
					$x_2\xrightarrow[]{u}_2 x_2'$ such that $(x_1',x_2')\in\sim$;
				\item\label{item2c:opacity_initial}
					for every transition $x_2\xrightarrow[]{u}_2 x_2'$, there exists transition
					$x_1\xrightarrow[]{u}_1 x_1'$ such that $(x_1',x_2')\in\sim$.
			\end{enumerate}
	\end{enumerate}
\end{definition}

Note that \ref{item1:opacity_initial}) of Definition \ref{def_OpacityPreservingsimulation_initial}
and \ref{item1_simulation}) of Definition \ref{def_simulation} are not comparable. Hence, Definition \ref{def_OpacityPreservingsimulation_initial} is not the classical simulation relation.
Note also that \ref{item2:opacity_initial}) of Definition \ref{def_OpacityPreservingsimulation_initial}
is stronger than \ref{item3_simulation}) of Definition \ref{def_simulation}. Though stronger, \ref{item2:opacity_initial})
of Definition \ref{def_OpacityPreservingsimulation_initial} is somehow necessary for
preserving initial-state opacity.

\begin{theorem}\label{thm1:opacityNFTS_initial}
	Consider two NTSs $\Sig_i=(X_i,X_{i,0},S_i,U,\to_i,Y,h_i)$, $i=1,2$.
	Assume that there exists an InitSOP simulation relation $\sim\subseteq X_1\times X_2$ from $\Sig_1$ to $\Sig_2$.
	If $\Sig_1$ is initial-state opaque
	then $\Sig_2$ is also initial-state opaque.
\end{theorem}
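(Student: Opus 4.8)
The plan is to verify the defining condition of initial-state opacity (Definition~\ref{def_initial-state_opacity}) for $\Sigma_2$ by using $\sim$ to transport witnessing runs between the two systems. As observed after Definition~\ref{def_infinite-step_opacity}, we may assume $\Sigma_1$ and $\Sigma_2$ are total, so every maximal run over an input sequence $\alpha$ has length exactly $|\alpha|+1$; I will lean on this to keep the bookkeeping short, but note that the argument goes through verbatim for non-total systems using the two transition clauses of Definition~\ref{def_OpacityPreservingsimulation_initial}. So fix a secret initial state $x_{2,0}\in X_{2,0}\cap S_2$, an input sequence $\alpha\in U^*$, and a maximal run $\eta=x_{2,0}x_{2,1}\cdots x_{2,k}$ over $\alpha$ in $\Sigma_2$ (hence $k=|\alpha|$). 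I must produce a maximal run $\eta'=x_{2,0}'\cdots x_{2,k}'$ over $\alpha$ in $\Sigma_2$ with $x_{2,0}'\notin S_2$ and $h_2(x_{2,j})=h_2(x_{2,j}')$ for all $j\in[0,k]$.

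First I would pull $\eta$ back into $\Sigma_1$. By clause~\ref{item1c:opacity_initial}) of Definition~\ref{def_OpacityPreservingsimulation_initial} there is $x_{1,0}\in X_{1,0}\cap S_1$ with $(x_{1,0},x_{2,0})\in\sim$. Then, proceeding along $\eta$ and applying clause~\ref{item2c:opacity_initial}) at each step --- from $(x_{1,j},x_{2,j})\in\sim$ and the transition $x_{2,j}\xrightarrow[]{\alpha(j)}_2 x_{2,j+1}$ obtain a transition $x_{1,j}\xrightarrow[]{\alpha(j)}_1 x_{1,j+1}$ with $(x_{1,j+1},x_{2,j+1})\in\sim$ --- I build a run $\xi=x_{1,0}x_{1,1}\cdots x_{1,k}$ over $\alpha$ in $\Sigma_1$ with $(x_{1,j},x_{2,j})\in\sim$ for all $j\in[0,k]$. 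This $\xi$ starts in $X_{1,0}\cap S_1$ and is a maximal run over $\alpha$ (in the non-total case: if $x_{1,k}$ had an outgoing transition on $\alpha(k)$, clause~\ref{item2a:opacity_initial}) would give one for $x_{2,k}$, contradicting maximality of $\eta$). Hence initial-state opacity of $\Sigma_1$ applies and furnishes a maximal run $\xi'=x_{1,0}'\cdots x_{1,k}'$ over $\alpha$ in $\Sigma_1$ with $x_{1,0}'\notin S_1$ and $h_1(x_{1,j})=h_1(x_{1,j}')$ for all $j\in[0,k]$; note $x_{1,0}'\in X_{1,0}\setminus S_1$ since it is the initial state of a run.

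Next I would push $\xi'$ forward into $\Sigma_2$ by the mirror-image argument. By clause~\ref{item1a:opacity_initial}) there is $x_{2,0}'\in X_{2,0}\setminus S_2$ with $(x_{1,0}',x_{2,0}')\in\sim$, and iterating clause~\ref{item2a:opacity_initial}) along $\xi'$ yields a run $\eta'=x_{2,0}'\cdots x_{2,k}'$ over $\alpha$ in $\Sigma_2$ with $(x_{1,j}',x_{2,j}')\in\sim$ for all $j\in[0,k]$, which is maximal over $\alpha$ (in the non-total case use clause~\ref{item2c:opacity_initial}) and maximality of $\xi'$). Finally I would verify the output condition by composing clause~\ref{item3:opacity_initial}) with the opacity witness for $\Sigma_1$: $h_2(x_{2,j}')=h_1(x_{1,j}')=h_1(x_{1,j})=h_2(x_{2,j})$ for all $j\in[0,k]$, while $x_{2,0}'\notin S_2$. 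Since $x_{2,0}$, $\alpha$ and $\eta$ were arbitrary, $\Sigma_2$ is initial-state opaque.

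The step I expect to require the most care is the handling of maximal runs: one must check that the pull-back and push-forward operations preserve not merely ``being a run over $\alpha$'' but also maximality, which is precisely where the \emph{two-directional} transition clauses~\ref{item2a:opacity_initial}) and~\ref{item2c:opacity_initial}) of Definition~\ref{def_OpacityPreservingsimulation_initial} --- stronger than the one-sided clause of a classical simulation --- are needed; under the total-system reduction this collapses to the remark that every run in sight has length $|\alpha|+1$. A second point worth stating explicitly is why both directions of the initial-state clauses are used: clause~\ref{item1c:opacity_initial}) is what lands the pulled-back run on a \emph{secret} initial state of $\Sigma_1$ (so that $\Sigma_1$'s opacity is even applicable), and clause~\ref{item1a:opacity_initial}) is what exports $\Sigma_1$'s \emph{non-secret} witness initial state back to $\Sigma_2$.
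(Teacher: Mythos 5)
Your proof is correct and follows essentially the same route as the paper's: pull the secret run of $\Sigma_2$ back into $\Sigma_1$ via clauses~\ref{item1c:opacity_initial}) and~\ref{item2c:opacity_initial}), invoke initial-state opacity of $\Sigma_1$, push the non-secret witness forward via clauses~\ref{item1a:opacity_initial}) and~\ref{item2a:opacity_initial}), and chain the output equalities through condition~\ref{item3:opacity_initial}). The only difference is that you explicitly track maximality of runs (which the paper leaves implicit by reducing to total NTSs), and your treatment of that point is sound.
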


\begin{proof}
	Assume there exists an InitSOP simulation relation $\sim\subseteq X_1\times X_2$ from $\Sig_1$ to $\Sig_2$
	and system $\Sig_1$ is initial-state opaque. Next we prove that $\Sig_2$ is also initial-state opaque.

	For system $\Sig_2$, we arbitrarily choose input sequence $\a\in U^*$, states $x_{2,0},
	x_{2,1},\dots,x_{2,|\a|}\in X_2$ such that
	\begin{equation}\label{eqn3:opacity}
		x_{2,0}\xrightarrow[]{\a(0)}_2 x_{2,1}\xrightarrow[]{\a(1)}_2\cdots\xrightarrow[]{\a(|\a|-1)}_2
		x_{2,|\a|},
	\end{equation}
	and $x_{2,0}\in X_{2,0}\cap S_2$.

	By Definition \ref{def_OpacityPreservingsimulation_initial}
	(specially by conditions \ref{item2a:opacity_initial}) and \ref{item2c:opacity_initial})),
	there exist $x_{1,0}\in X_{1,0}\cap S_1$, $x_{1,j}\in X_1$, $j\in[1,|\a|]$
	such that $h_1(x_{1,k})=h_2(x_{2,k})$ for all $k$ in $[0,|\a|]$, and
	\begin{equation}\label{eqn4:opacity}
		x_{1,0}\xrightarrow[]{\a(0)}_1 x_{1,1}\xrightarrow[]{\a(1)}_1\cdots\xrightarrow[]{\a(|\a|-1)}_1
	x_{1,|\a|}.
	\end{equation}

	Since $\Sig_1$ is initial-state opaque, there exist $x_{1,0}'\in X_{1,0}\setminus S_1$, $x_{1,j}'\in X_1$,
	$j\in[1,|\a|]$ such that $h_1(x_{1,k})=h_1(x_{1,k}')$ for all $k$ in $[0,|\a|]$, and
	$x_{1,0}'\xrightarrow[]{\a(0)}_1 x_{1,1}'\xrightarrow[]{\a(1)}_1\cdots\xrightarrow[]{\a(|\a|-1)}_1
	x_{1,|\a|}'$.

	Also by Definition \ref{def_OpacityPreservingsimulation_initial},
	there exist $x_{2,0}'\in X_{2,0}\setminus S_2$
	and $x_{2,1}',\dots,x_{2,|\a|}'\in X_2$ such that
	$h_1(x_{1,k}')=h_2(x_{2,k}')$, for all $k\in[0,|\a|]$, and
	$x_{2,0}'\xrightarrow[]{\a(0)}_2 x_{2,1}'\xrightarrow[]{\a(1)}_2\cdots\xrightarrow[]{\a(|\a|-1)}_2
	x_{2,|\a|}'$.  Hence, $\forall j\in[0,|\a|]:h_2(x_{2,j})=h_2(x_{2,j}')$, and $\Sig_2$ is initial-state opaque.
\end{proof}

In Definition~\ref{def_OpacityPreservingsimulation_initial}, in addition to requiring equivalent observation at two related states, i.e., condition~2),
we also have four conditions \ref{item1a:opacity_initial}), \ref{item1c:opacity_initial}), \ref{item2a:opacity_initial}), and \ref{item2c:opacity_initial}).
In particular,  conditions \ref{item2a:opacity_initial})  and \ref{item2c:opacity_initial}) are similar to  those in the standard bisimulation relation.
The question then arises as why we need such  strong conditions for  InitSOP \emph{simulation} relation.
In the next four examples, we show that these conditions are all necessary to make it initial-state opacity-preserving even for one direction.

\begin{example}\label{rem1:initial_opaicty_NFTS}
	Recall the NFTSs shown in Fig. \ref{fig1:initial_opaicty_NFTS}. We showed that
	$\Sig_2$ simulates $\Sig_1$, $\Sig_1$ is initial-state opaque, but $\Sig_2$ not.
	We directly see that the simulation relation $\sim=\{(1',1),(2',2),(3',1),(4',2)\}$ in the proof of
	Proposition \ref{prop:SimulationOpacity_initial} from $\Sig_1$ to $\Sig_2$ does not satisfy
	\ref{item1a:opacity_initial}) of Definition \ref{def_OpacityPreservingsimulation_initial}, since
	for state $3'\in X_{1,0}\setminus S_1$, the unique state $1$ satisfying $(3',1)\in\sim$ does not belong to
	$X_{2,0}\setminus S_2$. We also see that relation $\sim$ satisfies all other items of Definition
	\ref{def_OpacityPreservingsimulation_initial}. Hence, \ref{item1a:opacity_initial}) in
	Definition \ref{def_OpacityPreservingsimulation_initial} is necessary to make it initial-state opacity-preserving.
\end{example}

\begin{example}\label{rem2:initial_opaicty_NFTS}
	Consider two NFTSs
	$\Sig_i=(X_i,X_{i,0},S_i,U,\to_i,Y,h_i)$, $i=1,2$,
	shown in Fig. \ref{fig2:initial_opaicty_NFTS}, where $X_1=\{1,2,3,4,5,6\}$, $X_{1,0}=
	\{1,2,3,4\}$, $S_1=\{1\}$;
	$X_2=\{1',2',3',4',5',6'\}=X_{2,0}$, $S_2=\{5',6'\}$, $U=\{1\}$, $Y=\{1,2,3\}$.
	For system $\Sig_1$, it can be verified that relation $\{(1,3),(3,1),(2,4),(4,2),(5,6),(6,5)\}$
	is a bisimulation relation between $\Sig_1$ and itself, then for each run starting from state $1$,
	there is a run starting from state $3$ such that these two runs produce the same output sequence,
	i.e., $\Sig_1$ is initial-state opaque. It is evident that system $\Sig_2$ is not initial-state opaque,
	since if the initial output is $3$ then one knows that the initial states of $\Sig_2$ are secret.
	Now consider relation $\sim=\{(1,1'),(2,2'),(3,3'),(4,4'),(5,5'),(6,6')\}$. One can verify that $\sim$
	satisfies all items of Definition \ref{def_OpacityPreservingsimulation_initial} other than
	\ref{item1c:opacity_initial}). Hence, \ref{item1c:opacity_initial}) in Definition \ref{def_OpacityPreservingsimulation_initial}
	is also necessary to make it initial-state opacity-preserving.

	\begin{figure}
        \centering
\begin{tikzpicture}[>=stealth',shorten >=1pt,auto,node distance=2.5 cm, scale = 0.8, transform shape,
	->,>=stealth,inner sep=2pt,state/.style={shape=circle,draw,top color=red!10,bottom color=blue!30},
	point/.style={circle,inner sep=0pt,minimum size=2pt,fill=}, 
	skip loop/.style={to path={-- ++(0,#1) -| (\tikztotarget)}}]
	\node[initial,state,accepting,initial by arrow,initial where=below] (1) {$1/1$};
	\node[initial,state,initial by arrow,initial where=below] (2) [right of = 1] {$2/2$};
	\node[initial,state,initial by arrow,initial where=below] (3) [right of = 2] {$3/1$};
	\node[initial,state,initial by arrow,initial where=below] (4) [right of = 3] {$4/2$};
	\node[state] (5) [above of = 1] {$5/3$};
	\node[state] (6) [right of = 5] {$6/3$};
	\node[initial,state,initial by arrow,initial where=above] (1') [below of = 1]  {$1'/1$};
	\node[initial,state,initial by arrow,initial where=above] (2') [right of = 1'] {$2'/2$};
	\node[initial,state,initial by arrow,initial where=above] (3') [right of = 2'] {$3'/1$};
	\node[initial,state,initial by arrow,initial where=above] (4') [right of = 3'] {$4'/2$};
	\node[initial,state,accepting,initial by arrow,initial where=below] (5') [below of = 1'] {$5'/3$};
	\node[initial,state,accepting,initial by arrow,initial where=below] (6') [right of = 5'] {$6'/3$};

	\path [->]
	(1') edge node {$1$} (2')
	(2') edge node {$1$} (3')
	(3') edge node {$1$} (4')
	(4') edge [bend left] node {$1$} (1')
	(1') edge node {$1$} (5')
	(5') edge (1')
	(5') edge node {$1$} (6')
	(6') edge (5')
	(6') edge node {$1$} (3')
	(3') edge (6')
	;

	\path [->]
	(1) edge node {$1$} (2)
	(2) edge node {$1$} (3)
	(3) edge node {$1$} (4)
	(4) edge [bend right] node {$1$} (1)
	(1) edge node {$1$} (5)
	(5) edge (1)
	(5) edge node {$1$} (6)
	(6) edge (5)
	(6) edge node {$1$} (3)
	(3) edge (6)
	;

	\node at (-1,0) {$\Sig_1$};
	\node at (-1,-2.5) {$\Sig_2$};
        \end{tikzpicture}
		\caption{State transition diagrams of two NFTSs in Example \ref{rem2:initial_opaicty_NFTS}.}
	\label{fig2:initial_opaicty_NFTS}
\end{figure}

\end{example}

\begin{example}\label{rem4:initial_opaicty_NFTS}
	Consider two NFTSs
	$\Sig_i=(X_i,X_{i,0},S_i,U,\to_i,Y,h_i)$, $i=1,2$,
	shown in Fig. \ref{fig4:initial_opaicty_NFTS}, where $X_1=\{1,2,3,4\}=X_{1,0}$,
	$S_1=\{1\}$;
	$X_2=\{1',2',3',4'\}=X_{2,0}$, $S_2=\{1'\}$, $U=\{1,2\}$, $Y=\{1,2\}$.
	For system $\Sig_1$, it is directly obtained that for each input sequence $\a\in U^*$,
	and each run starting from state $1$ over $\a$, there is a run starting from state $3$ also over $\a$,
	i.e., $\Sig_1$ is initial-state opaque. For system $\Sig_2$, consider input sequence $2$
	and run $1'2'$ over input sequence $2$. However, there is no run starting from $3'$ over input sequence $2$,
	impling that $\Sig_2$ is not initial-state opaque.
	Now consider relation $\sim=\{(1,1'),(2,2'),(3,3'),(4,4')\}$. One can show that $\sim$
	satisfies all items of Definition \ref{def_OpacityPreservingsimulation_initial} other than
	\ref{item2a:opacity_initial}). Therefore, \ref{item2a:opacity_initial}) in Definition \ref{def_OpacityPreservingsimulation_initial}
	is also necessary to make it initial-state opacity-preserving.

	\begin{figure}
        \centering
\begin{tikzpicture}[>=stealth',shorten >=1pt,auto,node distance=2.5 cm, scale = 0.8, transform shape,
	->,>=stealth,inner sep=2pt,state/.style={shape=circle,draw,top color=red!10,bottom color=blue!30},
	point/.style={circle,inner sep=0pt,minimum size=2pt,fill=}, 
	skip loop/.style={to path={-- ++(0,#1) -| (\tikztotarget)}}]
	\node[initial,state,accepting,initial by arrow,initial where=below] (1) {$1/1$};
	\node[initial,state,initial by arrow,initial where=below] (2) [right of = 1] {$2/2$};
	\node[initial,state,initial by arrow,initial where=below] (3) [right of = 2] {$3/1$};
	\node[initial,state,initial by arrow,initial where=below] (4) [right of = 3] {$4/2$};
	\node[initial,state,accepting,initial by arrow,initial where=above] (1') [below of = 1]  {$1'/1$};
	\node[initial,state,initial by arrow,initial where=above] (2') [right of = 1'] {$2'/2$};
	\node[initial,state,initial by arrow,initial where=above] (3') [right of = 2'] {$3'/1$};
	\node[initial,state,initial by arrow,initial where=above] (4') [right of = 3'] {$4'/2$};

	\path [->]
	(1') edge node {$1,2$} (2')
	(2') edge node {$1$} (3')
	(3') edge node {$1$} (4')
	(4') edge [bend left] node {$1$} (1')
	;

	\path [->]
	(1) edge node {$1,2$} (2)
	(2) edge node {$1,2$} (3)
	(3) edge node {$1,2$} (4)
	(4) edge [bend right] node {$1,2$} (1)
	;

	\node at (-1,0) {$\Sig_1$};
	\node at (-1,-2.5) {$\Sig_2$};
        \end{tikzpicture}
		\caption{State transition diagrams of two NFTSs in Example \ref{rem4:initial_opaicty_NFTS}.}
	\label{fig4:initial_opaicty_NFTS}
\end{figure}

\end{example}

\begin{example}\label{rem3:initial_opaicty_NFTS}
	Consider two NFTSs
	$\Sig_i=(X_i,X_{i,0},S_i,U,\to_i,Y,h_i)$, $i=1,2$,
	shown in Fig. \ref{fig3:initial_opaicty_NFTS}, where $X_1=\{1,2,3,4\}=X_{1,0}$,
	$S_1=\{1\}$;
	$X_2=\{1',2',3',4'\}=X_{2,0}$, $S_2=\{1'\}$, $U=\{1,2\}$, $Y=\{1,2\}$.
	We already showed that system $\Sig_1$ is initial-state opaque in the proof of Proposition
	\ref{prop:SimulationOpacity_initial}, and
	system $\Sig_2$ is not initial-state opaque in Example \ref{rem4:initial_opaicty_NFTS}.
	Now consider relation $\sim=\{(1,1'),(2,2'),(3,3'),(4,4')\}$. One can verify that $\sim$
	satisfies all items of Definition \ref{def_OpacityPreservingsimulation_initial} other than
	\ref{item2c:opacity_initial}). Hence, \ref{item2c:opacity_initial}) in Definition \ref{def_OpacityPreservingsimulation_initial}
	is also necessary to make it initial-state opacity-preserving.

	\begin{figure}
        \centering
\begin{tikzpicture}[>=stealth',shorten >=1pt,auto,node distance=2.5 cm, scale = 0.9, transform shape,
	->,>=stealth,inner sep=2pt,state/.style={shape=circle,draw,top color=red!10,bottom color=blue!30},
	point/.style={circle,inner sep=0pt,minimum size=2pt,fill=}, 
	skip loop/.style={to path={-- ++(0,#1) -| (\tikztotarget)}}]
	\node[initial,state,accepting,initial by arrow,initial where=below] (1) {$1/1$};
	\node[initial,state,initial by arrow,initial where=below] (2) [right of = 1] {$2/2$};
	\node[initial,state,initial by arrow,initial where=below] (3) [right of = 2] {$3/1$};
	\node[initial,state,initial by arrow,initial where=below] (4) [right of = 3] {$4/2$};
	\node[initial,state,accepting,initial by arrow,initial where=above] (1') [below of = 1]  {$1'/1$};
	\node[initial,state,initial by arrow,initial where=above] (2') [right of = 1'] {$2'/2$};
	\node[initial,state,initial by arrow,initial where=above] (3') [right of = 2'] {$3'/1$};
	\node[initial,state,initial by arrow,initial where=above] (4') [right of = 3'] {$4'/2$};

	\path [->]
	(1') edge node {$1,2$} (2')
	(2') edge node {$1$} (3')
	(3') edge node {$1$} (4')
	(4') edge [bend left] node {$1$} (1')
	;

	\path [->]
	(1) edge node {$1$} (2)
	(2) edge node {$1$} (3)
	(3) edge node {$1$} (4)
	(4) edge [bend right] node {$1$} (1)
	;

	\node at (-1,0) {$\Sig_1$};
	\node at (-1,-2.5) {$\Sig_2$};
        \end{tikzpicture}
		\caption{State transition diagrams of two NFTSs in Example \ref{rem3:initial_opaicty_NFTS}.}
	\label{fig3:initial_opaicty_NFTS}
\end{figure}

\end{example}

	One can conclude from Examples \ref{rem1:initial_opaicty_NFTS}, \ref{rem2:initial_opaicty_NFTS},
	\ref{rem3:initial_opaicty_NFTS}, and \ref{rem4:initial_opaicty_NFTS} that in order to make Definition
	\ref{def_OpacityPreservingsimulation_initial} initial-state opacity-preserving, all items
	\ref{item1a:opacity_initial}), \ref{item1c:opacity_initial}), \ref{item2a:opacity_initial}),
	and \ref{item2c:opacity_initial})
	are necessary. Therefore, the simulation relation introduced in Definition \ref{def_OpacityPreservingsimulation_initial} is
	a weak relation in terms of requiring minimal conditions preserving initial-state opacity of NTSs.

It is easy to see that Definition \ref{def_OpacityPreservingsimulation_initial} can only guarantee unidirectional
preservation of initial-state opacity.
Analogously, we can define an InitSOP bisimulation relation that ensures the bidirectional preservation
of initial-state opacity as in Definition \ref{def_OpacityPreservingBisimulation_initial}.
Definition \ref{def_OpacityPreservingBisimulation_initial} is a stronger version of bisimulation relation.

\begin{definition}[InitSOP bisimulation relation]\label{def_OpacityPreservingBisimulation_initial}
	Consider two NTSs $\Sig_i=(X_i,X_{i,0},S_i,U,\to_i,Y,h_i)$, $i=1,2$.
	A relation $\sim\subseteq X_1\times X_2$ is called an InitSOP bisimulation relation between $\Sig_1$ and $\Sig_2$  if
	\begin{enumerate}
		\item\label{item1:opacity_initial_bi}
			\begin{enumerate}
				\item\label{item1a:opacity_initial_bi}
					for all $x_{1,0}\in X_{1,0}\cap S_1$, there exists $x_{2,0}\in X_{2,0}\cap S_2$
					such that $(x_{1,0},x_{2,0})\in \sim$;
				\item\label{item1b:opacity_initial_bi}
					for all $x_{1,0}\in X_{1,0}\setminus S_1$, there exists $x_{2,0}\in X_{2,0}\setminus
					S_2$ such that $(x_{1,0},x_{2,0})\in \sim$;
				\item\label{item1c:opacity_initial_bi}
					for all $x_{2,0}\in X_{2,0}\cap  S_2$, there exists $x_{1,0}\in X_{1,0}\cap S_1$
					such that $(x_{1,0},x_{2,0})\in \sim$;
				\item\label{item1d:opacity_initial_bi}
					for all $x_{2,0}\in X_{2,0}\setminus S_2$, there exists $x_{1,0}\in X_{1,0}\setminus
					S_1$ such that $(x_{1,0},x_{2,0})\in \sim$;
			\end{enumerate}
		\item\label{item3:opacity_initial_bi}
			for every $(x_1,x_2)\in\sim$, $h_1(x_1)=h_2(x_2)$;	
		\item\label{item2:opacity_initial_bi}
			for every $(x_1,x_2)\in\sim$,
			\begin{enumerate}
				\item\label{item2a:opacity_initial_bi}
					for every transition $x_1\xrightarrow[]{u}_1 x_1'$, there exists transition
					$x_2\xrightarrow[]{u}_2 x_2'$ such that $(x_1',x_2')\in\sim$;
				\item\label{item2c:opacity_initial_bi}
					for every transition $x_2\xrightarrow[]{u}_2 x_2'$, there exists transition
					$x_1\xrightarrow[]{u}_1 x_1'$ such that $(x_1',x_2')\in\sim$;
			\end{enumerate}
	\end{enumerate}
\end{definition}

\begin{remark}
Consider two NTSs $\Sig_i=(X_i,X_{i,0},S_i,U,\to_i,Y,h_i)$, $i=1,2$. One can readily verify from Definition \ref{def_OpacityPreservingBisimulation_initial} that a relation $\sim\subseteq X_1\times X_2$ is called an InitSOP bisimulation relation between $\Sig_1$ and $\Sig_2$ if $\sim$ is an InitSOP simulation relation from $\Sig_1$ to $\Sig_2$ and\footnote{Given a relation $\sim\subseteq X_1\times X_2$, $\sim^{-1}$ denotes the inverse relation defined by $\sim^{-1}=\{(x_2,x_1)\in X_2\times X_1\,\,|\,\,(x_1,x_2)\in\sim\}$.} $\sim^{-1}$ is an InitSOP simulation relation from $\Sig_2$ to $\Sig_1$.
\end{remark}

Similar to Theorem \ref{thm1:opacityNFTS_initial}, the following theorem
follows from Definition \ref{def_OpacityPreservingBisimulation_initial}.

\begin{theorem}\label{thm1:opacityNFTS_initial-2side}
	Consider two NTSs $\Sig_i=(X_i,X_{i,0},S_i,U,\to_i,Y,h_i)$, $i=1,2$. Assume that there exists
	an InitSOP bisimulation relation $\sim\subseteq X_1\times X_2$ between $\Sig_1$ and $\Sig_2$.
	Then $\Sig_1$ is initial-state opaque if and only if
	$\Sig_2$ is also initial-state opaque.
\end{theorem}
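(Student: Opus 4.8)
The plan is to reduce Theorem~\ref{thm1:opacityNFTS_initial-2side} to Theorem~\ref{thm1:opacityNFTS_initial} applied in both directions. First I would invoke the Remark preceding the statement: if $\sim\subseteq X_1\times X_2$ is an InitSOP bisimulation relation between $\Sig_1$ and $\Sig_2$, then $\sim$ is an InitSOP simulation relation from $\Sig_1$ to $\Sig_2$ and $\sim^{-1}\subseteq X_2\times X_1$ is an InitSOP simulation relation from $\Sig_2$ to $\Sig_1$. The point is that the four half-conditions \ref{item1a:opacity_initial_bi})--\ref{item1d:opacity_initial_bi}) of Definition~\ref{def_OpacityPreservingBisimulation_initial} split into the two conditions needed for $\sim$ to be an InitSOP simulation ($\ref{item1a:opacity_initial}$) is \ref{item1b:opacity_initial_bi}) and \ref{item1c:opacity_initial}) is \ref{item1c:opacity_initial_bi})) together with the two needed for $\sim^{-1}$ ($\ref{item1a:opacity_initial}$) for $\sim^{-1}$ is \ref{item1d:opacity_initial_bi}) and \ref{item1c:opacity_initial}) for $\sim^{-1}$ is \ref{item1a:opacity_initial_bi})); and symmetrically the transition conditions \ref{item2a:opacity_initial_bi}) and \ref{item2c:opacity_initial_bi}) give the transition conditions for $\sim$ and for $\sim^{-1}$. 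The output-compatibility condition~2) is manifestly symmetric under inversion.

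Once this dictionary is in place, the theorem is immediate. For the forward direction, suppose $\Sig_1$ is initial-state opaque; since $\sim$ is an InitSOP simulation relation from $\Sig_1$ to $\Sig_2$, Theorem~\ref{thm1:opacityNFTS_initial} yields that $\Sig_2$ is initial-state opaque. For the converse, suppose $\Sig_2$ is initial-state opaque; since $\sim^{-1}$ is an InitSOP simulation relation from $\Sig_2$ to $\Sig_1$, Theorem~\ref{thm1:opacityNFTS_initial} (applied with the roles of the two systems exchanged) yields that $\Sig_1$ is initial-state opaque. This establishes the equivalence.

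I anticipate essentially no genuine obstacle; the only thing requiring care is the bookkeeping in the first paragraph, namely checking that inverting an InitSOP bisimulation really does produce an InitSOP simulation in the reverse direction, i.e. that the asymmetric-looking labels \ref{item1a:opacity_initial}) (which speaks of non-secret initial states, with the witness going $\Sig_1\to\Sig_2$) and \ref{item1c:opacity_initial}) (which speaks of secret initial states, with the witness going $\Sig_2\to\Sig_1$) are matched to the right pair of the four conditions in Definition~\ref{def_OpacityPreservingBisimulation_initial} after taking $\sim^{-1}$. Since this matching is already asserted in the Remark, in the actual write-up I would simply cite the Remark and Theorem~\ref{thm1:opacityNFTS_initial} twice, so the proof is a few lines long.
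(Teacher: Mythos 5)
Your proposal is correct and matches the paper's own proof, which likewise observes that an InitSOP bisimulation relation gives an InitSOP simulation relation in each direction and then invokes Theorem~\ref{thm1:opacityNFTS_initial} twice. Your explicit bookkeeping matching conditions \ref{item1a:opacity_initial_bi})--\ref{item1d:opacity_initial_bi}) to the two simulation relations $\sim$ and $\sim^{-1}$ is accurate and in fact spells out what the paper leaves implicit.
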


\begin{proof}
Since $\Sig_1$ simulates $\Sig_2$ and vice versa as in Definition \ref{def_OpacityPreservingsimulation_initial}, the proof is a simple consequence of the proof of Theorem \ref{thm1:opacityNFTS_initial}.
\end{proof}

\subsection{Initial-state opacity-preserving quotient relation}

From the results in the previous subsection, one can verify initial-state opacity
of system $\Sig_2$ by verifying it over system $\Sig_1$ (resp. verify lack of initial-state opacity
of system $\Sig_1$ by verifying it over system $\Sig_2$)
provided that there exists an InitSOP simulation relation from $\Sig_1$ to $\Sig_2$.
In this subsection, we show that the quotient relation defined in \eqref{eqn:quotientrelation}
from an NTS to its quotient system
is an InitSOP bisimulation relation under certain mild assumptions.
Hence, one can leverage the existing bisimulation algorithms provided in
\cite{tab09} with some modifications to construct InitSOP abstractions (if existing).

\begin{theorem}\label{thm2:opacityNFTS_initial}
	Let $\Sig=(X,X_0,S,U,\to,Y,h)$ be an NTS and $\sim\subseteq X\times X$ be an equivalence relation on $X$
	satisfying $h(x)=h(x')$ for all $(x,x')\in\sim$.
	Assume that for all $x\in S$ and $x'\in X$, if $(x,x')\in\sim$
	then $x'\in S$.
	Then $\sim_{\mathsf{Q}}$ is an InitSOP bisimulation relation
	between $\Sig$ and $\Sig_{\sim}$ if and only if relation $\sim$ satisfies
	\begin{equation}
	\forall(x,x')\in \sim,\forall~x\xrightarrow[]{u}x'',\exists~x'\xrightarrow[]{u}x''' \text{ with }(x'',x''')\in\sim.
		\label{eqn1:opacity}
	\end{equation}
\end{theorem}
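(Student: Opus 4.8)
The plan is to unfold Definition~\ref{def_OpacityPreservingBisimulation_initial} with $\Sig_1=\Sig$ and $\Sig_2=\Sig_{\sim}$ (note $\sim_{\mathsf{Q}}\subseteq X\times X_{\sim}$, so this is the correct assignment of roles) and check each of its conditions for the pairs $(x,[x])$, $x\in X$, that make up $\sim_{\mathsf{Q}}$. Condition~2) (equal observations) is immediate from $h_{\sim}([x])=h(x)$. The first block \ref{item1a:opacity_initial_bi})--\ref{item1d:opacity_initial_bi}) will follow purely from the two standing hypotheses (that $\sim$ is an equivalence with $h$ constant on classes, and that the $\sim$-class of a secret state consists of secret states), independently of \eqref{eqn1:opacity}; and the forward transition condition \ref{item2a:opacity_initial_bi}) is unconditional. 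The only condition that is genuinely equivalent to \eqref{eqn1:opacity} is the backward transition condition \ref{item2c:opacity_initial_bi}), and proving that equivalence is the core of the argument.

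The workhorse observation for the initial-state conditions is that, under the secret-closure hypothesis, for every $x\in X$ one has $[x]\in S_{\sim}\iff[x]\cap S\ne\emptyset\iff x\in S$: if some $y\in[x]$ lies in $S$, then $(y,x)\in\sim$ forces $x\in S$; conversely $x\in S$ gives $x\in[x]\cap S$. Also trivially $x\in X_0\Rightarrow[x]\in X_{\sim,0}$. With this, \ref{item1a:opacity_initial_bi}) and \ref{item1b:opacity_initial_bi}) are witnessed by $[x_{0}]$ itself; for \ref{item1c:opacity_initial_bi}) and \ref{item1d:opacity_initial_bi}), given an element $[x]\in X_{\sim,0}$ pick any $a\in[x]\cap X_0$ and note $a\in S$ exactly when $[x]\in S_{\sim}$, so $a\in X_0\cap S$ (resp.\ $a\in X_0\setminus S$) is the required initial state of $\Sig$ with $[a]=[x]$. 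For \ref{item2a:opacity_initial_bi}): a transition $x\xrightarrow{u}x''$ in $\Sig$ yields $[x]\xrightarrow{u}_{\sim}[x'']$ by the very definition of $\to_{\sim}$, and $(x'',[x''])\in\sim_{\mathsf{Q}}$.

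It then remains to show \ref{item2c:opacity_initial_bi}) for $\sim_{\mathsf{Q}}$ holds if and only if \eqref{eqn1:opacity} holds. For the ``if'' direction, assume \eqref{eqn1:opacity}, take $(x,[x])\in\sim_{\mathsf{Q}}$ and a transition $[x]\xrightarrow{u}_{\sim}[z]$; by definition of $\to_{\sim}$ there are $\bar x\in[x]$, $\bar y\in[z]$ with $\bar x\xrightarrow{u}\bar y$ in $\Sig$, and applying \eqref{eqn1:opacity} to the pair $(\bar x,x)\in\sim$ (using symmetry of $\sim$) produces a transition $x\xrightarrow{u}x''$ in $\Sig$ with $(\bar y,x'')\in\sim$, whence $[x'']=[\bar y]=[z]$ and $(x'',[z])\in\sim_{\mathsf{Q}}$. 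For the ``only if'' direction, assume $\sim_{\mathsf{Q}}$ is an InitSOP bisimulation relation, take $(x,x')\in\sim$ and a transition $x\xrightarrow{u}x''$ in $\Sig$; then $[x']=[x]\xrightarrow{u}_{\sim}[x'']$, and \ref{item2c:opacity_initial_bi}) applied to $(x',[x'])\in\sim_{\mathsf{Q}}$ gives $x'\xrightarrow{u}x'''$ in $\Sig$ with $(x''',[x''])\in\sim_{\mathsf{Q}}$, i.e.\ $[x''']=[x'']$, i.e.\ $(x'',x''')\in\sim$. Combining the two directions finishes the proof. (Alternatively, under the standing hypotheses \eqref{eqn1:opacity} is exactly the assertion that $\sim$ is a bisimulation relation between $\Sig$ and itself, so \ref{item2a:opacity_initial_bi}) and \ref{item2c:opacity_initial_bi}) could be obtained in one stroke from Proposition~\ref{prop_quotientsimulation}.)

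The main obstacle is minor and amounts to careful bookkeeping: being precise about the orientation of $\sim_{\mathsf{Q}}$ (so that \ref{item2c:opacity_initial_bi}) is the ``$\Sig$ matches transitions of $\Sig_{\sim}$'' condition), and applying \eqref{eqn1:opacity} to the symmetric pair $(\bar x,x)$ rather than $(x,\bar x)$ when lifting a quotient transition back to $\Sig$. It is worth noting in the write-up that the secret-closure hypothesis is used only for \ref{item1b:opacity_initial_bi})--\ref{item1d:opacity_initial_bi}) --- it is precisely what makes ``$[x]\in S_{\sim}$'' equivalent to ``$x\in S$'' --- so that without it these conditions can fail even when \eqref{eqn1:opacity} holds.
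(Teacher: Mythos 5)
Your proposal is correct and follows essentially the same route as the paper's proof: verify conditions 1a)--1d), 2), and 3a) of Definition~\ref{def_OpacityPreservingBisimulation_initial} directly from the standing hypotheses, and isolate the backward transition condition 3b) as the one equivalent to \eqref{eqn1:opacity}, lifting a quotient transition back to $\Sig$ via a representative pair and the symmetry of $\sim$. Your explicit bookkeeping of which conditions actually use \eqref{eqn1:opacity} and which use the secret-closure hypothesis is a helpful clarification of the paper's argument, not a departure from it.
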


\begin{proof}
	By assumption, for all $x\in S$ and $x'\in X$, if $(x,x')\in\sim$
	then $x'\in S$. This is equivalent to saying that for all $x\in X$, either $[x]\subseteq S$ or $[x]\cap S=\emptyset$,
	i.e., $S_{\sim}=\{[x]|x\in S\}$.

	(if:) Assume $\sim$ satisfies \eqref{eqn1:opacity}. Then by assumption,
	we next prove that $\sim_\mathsf{Q}$ is an InitSOP bisimulation relation between $\Sig$ and $\Sig_\sim$.

	For each $x_0\in X_0\cap S$, we have $[x_0]\in X_{\sim,0}\cap S_\sim$, i.e.,
	\ref{item1a:opacity_initial_bi}) of Definition \ref{def_OpacityPreservingBisimulation_initial} holds.
	Similarly \ref{item1b:opacity_initial_bi}) of Definition \ref{def_OpacityPreservingBisimulation_initial} holds.

	For each $[x_0]\in X_{\sim,0}\cap S_\sim$, there exists $x'\in X_0$ satisfying
	$x_0\sim x'$. By assumption, we also have $x_0\in S$, and then $x'\in S$, i.e.,
	\ref{item1c:opacity_initial_bi}) of Definition \ref{def_OpacityPreservingBisimulation_initial} holds.
	Similarly \ref{item1d:opacity_initial_bi}) of Definition \ref{def_OpacityPreservingBisimulation_initial} holds.

	Condition \ref{item3:opacity_initial_bi}) in Definition \ref{def_OpacityPreservingBisimulation_initial} naturally holds
	by the assumption.

	For each $x\in X$, we have $(x,[x])\in\sim_\mathsf{Q}$.

	If there exists transition $x\xrightarrow[]{u}x'$ in $\Sig$, then there exists transition
	$[x]\xrightarrow[]{u}_\sim [x']$ in $\Sig_\sim$, and $(x',[x'])\in\sim_{\mathsf{Q}}$,
	i.e., \ref{item2a:opacity_initial_bi}) in Definition \ref{def_OpacityPreservingBisimulation_initial} holds.

	If there exists transition $[x]\xrightarrow[]{u}_\sim [x']$ in $\Sig_\sim$, then there exists transition
	$x''\xrightarrow[]{u}x'''$ in $\Sig$ satisfying that $x\sim x''$ and $x'\sim x'''$.
	By \eqref{eqn1:opacity}, there exists transition $x\xrightarrow[]{u}x''''$ such that $x'''\sim x''''$,
	then $x'\sim x''''$ and $(x'''',[x'])\in\sim_\mathsf{Q}$, i.e., \ref{item2c:opacity_initial_bi}) in Definition
	\ref{def_OpacityPreservingBisimulation_initial} holds, which completes the ``if'' part.

	(only if:) Assume that $\sim_{\mathsf{Q}}$ is an InitSOP bisimulation relation between $\Sig$ and $\Sig_\sim$.
	Next we prove \eqref{eqn1:opacity} holds. For each $(x,x')\in\sim$ and each transition
	$x\xrightarrow[]{u}x''$ in $\Sig$, we have $(x,[x'])\in \sim_{\mathsf{Q}}$,
	and by assumption, there exists transition $[x']\xrightarrow[]{u}_\sim[x''']$ in $\Sig_\sim$
	and $(x'',[x'''])\in\sim_\mathsf{Q}$. Then $(x'',x''')\in\sim$, i.e., \eqref{eqn1:opacity} holds.
\end{proof}

\subsection{Infinite-step opacity-preserving bisimulation relation}\label{subsec:InfSOP_Simulation}

We have given InitSOP (bi)simulation relation. Next we study whether InitSOP (bi)simulation relation
preserves the other three types of opacity; and if not, we propose new (bi)simulation relations that
preserve the other three types of opacity.

Similar to initial-state opacity, the classical bisimulation relation does not preserve the other three types of opacity.
See the NFTSs shown in the proof of Proposition \ref{prop:SimulationOpacity_initial} (cf. Fig.
\ref{fig1:initial_opaicty_NFTS}). One can easily verify that $\Sig_2$ in Fig. \ref{fig1:initial_opaicty_NFTS}
is not current-state opaque, or $K$-step opaque for any positive integer $K$, or infinite-step opaque.
However, $\Sig_1$ in Fig. \ref{fig1:initial_opaicty_NFTS} is current-state opaque, $K$-step opaque for any
positive integer $K$, and infinite-step opaque. In addition, under the relation $\sim=\{(1,1'),(2,2'),(1,3'),(2,4')\}$,
$\Sig_2$ bisimulates $\Sig_1$. Hence the following result holds.

\begin{proposition}\label{prop:SimulationOpacity}
	Bisimulation relation (cf. Definition \ref{def_bisimulation}) does not preserve current-state opacity,
	$K$-step opacity, or infinite-step opacity.
\end{proposition}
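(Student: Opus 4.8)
The plan is to reuse the counterexample already developed in the proof of Proposition~\ref{prop:SimulationOpacity_initial}, namely the two NFTSs $\Sig_1$ and $\Sig_2$ in Fig.~\ref{fig1:initial_opaicty_NFTS}, and to verify that it simultaneously witnesses the failure of current-state opacity preservation, $K$-step opacity preservation, and infinite-step opacity preservation. Recall that there $X_1=\{1',2',3',4'\}=X_{1,0}$ with $S_1=\{1'\}$, all transitions labeled by the single input, forming a directed $4$-cycle $1'\to 2'\to 3'\to 4'\to 1'$, with outputs $h_1(1')=h_1(3')=1$, $h_1(2')=h_1(4')=2$; and $X_2=\{1,2\}=X_{2,0}$ with $S_2=\{1\}$, transitions $1\to 2\to 1$, outputs $h_2(1)=1$, $h_2(2)=2$. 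We already know from the cited proof that the relation $\sim=\{(1',1),(2',2),(3',1),(4',2)\}$ is a bisimulation relation between $\Sig_1$ and $\Sig_2$ (it is straightforward to check conditions \ref{item1a_bisimulation}), \ref{item1b_bisimulation}), \ref{item2_bisimulation}), \ref{item3a_bisimulation}), \ref{item3b_bisimulation}) of Definition~\ref{def_bisimulation}).

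First I would show that $\Sig_2$ is not current-state opaque, hence also not $K$-step opaque for any $K\ge 1$ and not infinite-step opaque. Take the input sequence $\a$ of length $0$ (the empty sequence) and the run consisting of the single state $1$; since $1\in X_{2,0}$ and $1\in S_2$, current-state opacity would require another run over $\a$, i.e.\ another initial state, with the same output $h_2(1)=1$ and not in $S_2$. But the only states of $\Sig_2$ are $1$ and $2$, and $h_2(2)=2\ne 1$, so no such run exists; thus $\Sig_2$ violates Definition~\ref{def_current-state_opacity}. Because $K$-step opacity (Definition~\ref{def_K-step_opacity}) with $i=|\a|=0=K'$ and infinite-step opacity (Definition~\ref{def_infinite-step_opacity}) with $i=0$ both specialize, at a length-zero run, to exactly the current-state condition, $\Sig_2$ fails all three. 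Alternatively one may use the length-$1$ run $1\to 2$: at the current state $2$ there is no issue, but considering the step-$0$ state $1\in S_2$ under infinite-step or $K$-step opacity again forces a companion run $1'\to 2'$-type behavior that does not exist in $\Sig_2$.

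Next I would verify that $\Sig_1$ \emph{is} current-state opaque, $K$-step opaque for every positive integer $K$, and infinite-step opaque. The key observation is that $\Sig_1$ admits the nontrivial self-bisimulation $\{(1',3'),(3',1'),(2',4'),(4',2')\}$ (equivalently, the rotation by two of the $4$-cycle is output-preserving). Consequently, for any run $\xi=x_0x_1\dots x_k$ of $\Sig_1$ over any input sequence, the run $\xi'$ obtained by shifting each state forward by two positions along the cycle (e.g.\ $1'\mapsto 3'$, $3'\mapsto 1'$, $2'\mapsto 4'$, $4'\mapsto 2'$) is a run over the same input sequence with $h_1(x_j)=h_1(x_j')$ for all $j$, and $x_j\in S_1=\{1'\}$ iff $x_j'=3'\notin S_1$. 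Hence whenever a state at some position of a run is secret, the shifted run provides the required non-secret witness with identical output sequence at every position; since this holds at \emph{every} position simultaneously, all of Definitions~\ref{def_current-state_opacity}, \ref{def_K-step_opacity}, and \ref{def_infinite-step_opacity} are satisfied. Combining the two halves: $\Sig_2$ bisimulates $\Sig_1$, $\Sig_1$ enjoys all three notions of opacity, yet $\Sig_2$ enjoys none, so the bisimulation relation of Definition~\ref{def_bisimulation} preserves none of current-state, $K$-step, or infinite-step opacity.

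The only mildly delicate point — really a bookkeeping issue rather than a genuine obstacle — is to make sure the edge cases of the quantifiers in Definitions~\ref{def_K-step_opacity} and \ref{def_infinite-step_opacity} are handled correctly, in particular that using length-zero (or length-one) input sequences genuinely instantiates the $K$-step and infinite-step conditions into the current-state condition, so that a single counterexample kills all three notions at once; this is immediate from $K'=\max\{0,|\a|-K\}=0$ when $|\a|\le K$ and from allowing $i=0$ in the infinite-step definition. No new construction is needed beyond Fig.~\ref{fig1:initial_opaicty_NFTS}, which is why the proof in the paper can simply refer back to it.
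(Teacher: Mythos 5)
Your proposal is correct and follows essentially the same route as the paper: it reuses the two NFTSs of Fig.~\ref{fig1:initial_opaicty_NFTS} together with the bisimulation relation $\{(1',1),(2',2),(3',1),(4',2)\}$ (the paper lists its inverse, which is immaterial), checks that $\Sig_2$ fails all three notions while $\Sig_1$ satisfies them via the output-preserving rotation of the $4$-cycle. The only difference is that you spell out the verifications that the paper leaves as ``one can easily verify,'' and your edge-case handling of the length-zero run and of maximality is sound.
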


Since all these three types of opacity require that the intruder cannot make sure whether the current state
is secret, the previous InitSOP (bi)simulation relation does not suffice to preserve them either.
In this subsection, we strengthen the InitSOP bisimulation relation to make it preserve these three types of opacity.

\begin{definition}[InfSOP bisimulation relation]\label{def_OpacityPreservingBisimulation}
	Consider two NTSs $\Sig_i=(X_i,X_{i,0},S_i,U,\to_i,Y,h_i)$, $i=1,2$.
	A relation $\sim\subseteq X_1\times X_2$ is called an
	InfSOP bisimulation relation between $\Sig_1$ and $\Sig_2$  if
	\begin{enumerate}
		\item\label{item1:opacity}
			\begin{enumerate}
				\item\label{item1a:opacity}
					for all $x_{1,0}\in S_1\cap X_{1,0}$, there exists $x_{2,0}\in S_2\cap X_{2,0}$
					such that $(x_{1,0},x_{2,0})\in \sim$;
				\item\label{item1b:opacity}
					for all $x_{1,0}\in X_{1,0}\setminus S_1$, there exists $x_{2,0}\in X_{2,0}\setminus
					S_2$ such that $(x_{1,0},x_{2,0})\in \sim$;
				\item\label{item1c:opacity}
					for all $x_{2,0}\in S_2\cap X_{2,0}$, there exists $x_{1,0}\in S_1\cap X_{1,0}$
					such that $(x_{1,0},x_{2,0})\in \sim$;
				\item\label{item1d:opacity}
					for all $x_{2,0}\in X_{2,0}\setminus S_2$, there exists $x_{1,0}\in X_{1,0}\setminus
					S_1$ such that $(x_{1,0},x_{2,0})\in \sim$;
			\end{enumerate}
		\item\label{item3:opacity} for every $(x_1,x_2)\in\sim$, $h_1(x_1)=h_2(x_2)$;	
		\item\label{item2:opacity} for every $(x_1,x_2)\in\sim$,
			\begin{enumerate}
				\item\label{item2a:opacity}
					for every transition $x_1\xrightarrow[]{u}_1 x_1'\in S_1$, there exists transition
					$x_2\xrightarrow[]{u}_2 x_2'\in S_2$ such that $(x_1',x_2')\in\sim$;
				\item\label{item2b:opacity}
					for every transition $x_1\xrightarrow[]{u}_1 x_1'\in X_1\setminus S_1$, there exists transition
					$x_2\xrightarrow[]{u}_2 x_2'\in X_2\setminus S_2$ such that $(x_1',x_2')\in\sim$;
				\item\label{item2c:opacity}
					for every transition $x_2\xrightarrow[]{u}_2 x_2'\in S_2$, there exists transition
					$x_1\xrightarrow[]{u}_1 x_1'\in S_1$ such that $(x_1',x_2')\in\sim$;
				\item\label{item2d:opacity}
					for every transition $x_2\xrightarrow[]{u}_2 x_2'\in X_2\setminus S_2$, there exists transition
					$x_1\xrightarrow[]{u}_1 x_1'\in X_1\setminus S_1$ such that $(x_1',x_2')\in\sim$.
			\end{enumerate}
	\end{enumerate}
\end{definition}

%

Intuitively, condition \ref{item1:opacity}) ensures that each initial secret (non-secret) state in
$\Sig_1$ has a corresponding initial secret (non-secret) state in $\Sig_2$ such that they are in the relation,
and vice versa; condition
\ref{item2:opacity}) guarantees that each transition to a secret (non-secret) state in $\Sig_1$
has a corresponding transition to a secret (non-secret) state in $\Sig_2$, and vice versa. Conditions
\ref{item1:opacity}) and \ref{item2:opacity}) make bisimulation relation preserve infinite-step opacity,
which is shown in the following theorem.

\begin{theorem}\label{thm1:opacityNFTS}
	Consider two NTSs $\Sig_i=(X_i,X_{i,0},S_i,U,\to_i,Y,h_i)$, $i=1,2$.
	If there exists an InfSOP bisimulation relation $\sim\subseteq X_1\times X_2$ between $\Sig_1$ and $\Sig_2$,
	then $\Sig_1$ is infinite-step opaque if and only if $\Sig_2$ is infinite-step opaque.
\end{theorem}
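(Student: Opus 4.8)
The plan is to imitate the three-move argument used for Theorem~\ref{thm1:opacityNFTS_initial} --- lift a trajectory of $\Sig_2$ to $\Sig_1$, invoke opacity of $\Sig_1$, then project the resulting trajectory back down to $\Sig_2$ --- but now carrying along at \emph{every} time step the bookkeeping of which states are secret. First note that Definition~\ref{def_OpacityPreservingBisimulation} is symmetric under exchanging $\Sig_1,\Sig_2$ and replacing $\sim$ by $\sim^{-1}$ (conditions \ref{item1a:opacity})$\leftrightarrow$\ref{item1c:opacity}), \ref{item1b:opacity})$\leftrightarrow$\ref{item1d:opacity}), \ref{item2a:opacity})$\leftrightarrow$\ref{item2c:opacity}), \ref{item2b:opacity})$\leftrightarrow$\ref{item2d:opacity}), and \ref{item3:opacity}) is symmetric), so it suffices to prove one implication: assuming $\Sig_1$ is infinite-step opaque, I will show $\Sig_2$ is. (Throughout one may, if convenient, invoke the reduction to total NTSs noted after Definition~\ref{def_infinite-step_opacity}; alternatively, I will check below that maximality of runs is itself preserved by the relation, so this reduction is not strictly needed.)

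To verify the condition in Definition~\ref{def_infinite-step_opacity} for $\Sig_2$, fix $x_{2,0}\in X_{2,0}$, an input sequence $\a\in U^*$, a maximal run $x_{2,0}\xrightarrow[]{\a(0)}_2 x_{2,1}\xrightarrow[]{\a(1)}_2\cdots x_{2,k}$ over $\a$ with $k\le|\a|$, and an index $i\in[0,k]$ with $x_{2,i}\in S_2$. The first step is to construct, by induction on $j$, a run $x_{1,0}\xrightarrow[]{\a(0)}_1 x_{1,1}\xrightarrow[]{\a(1)}_1\cdots x_{1,k}$ of $\Sig_1$ over $\a$ satisfying the invariant $(x_{1,j},x_{2,j})\in\sim$ \emph{and} $x_{1,j}\in S_1\iff x_{2,j}\in S_2$ for every $j\in[0,k]$. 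For the base case, if $x_{2,0}\in S_2$ pick $x_{1,0}\in S_1\cap X_{1,0}$ with $(x_{1,0},x_{2,0})\in\sim$ using \ref{item1c:opacity}), and if $x_{2,0}\notin S_2$ pick $x_{1,0}\in X_{1,0}\setminus S_1$ using \ref{item1d:opacity}). For the inductive step, given $(x_{1,j},x_{2,j})\in\sim$ and the transition $x_{2,j}\xrightarrow[]{\a(j)}_2 x_{2,j+1}$: if $x_{2,j+1}\in S_2$ apply \ref{item2c:opacity}) to obtain $x_{1,j}\xrightarrow[]{\a(j)}_1 x_{1,j+1}\in S_1$ with $(x_{1,j+1},x_{2,j+1})\in\sim$, and if $x_{2,j+1}\notin S_2$ apply \ref{item2d:opacity}) to obtain $x_{1,j}\xrightarrow[]{\a(j)}_1 x_{1,j+1}\in X_1\setminus S_1$ with $(x_{1,j+1},x_{2,j+1})\in\sim$. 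The invariant is preserved, so in particular $x_{1,i}\in S_1$, and by \ref{item3:opacity}) we get $h_1(x_{1,j})=h_2(x_{2,j})$ for all $j$. This run is maximal: if $k<|\a|$ and $x_{1,k}$ had some $\a(k)$-successor in $\Sig_1$, then applying \ref{item2a:opacity}) or \ref{item2b:opacity}) (according to that successor's secrecy) to $(x_{1,k},x_{2,k})\in\sim$ would force an $\a(k)$-successor of $x_{2,k}$ in $\Sig_2$, contradicting maximality of the chosen $\Sig_2$-run.

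The second step applies infinite-step opacity of $\Sig_1$ to the maximal run $x_{1,0}\cdots x_{1,k}$ and the index $i$ (where $x_{1,i}\in S_1$): this yields a maximal run $x_{1,0}'\cdots x_{1,k}'$ of $\Sig_1$ over $\a$ with $x_{1,i}'\notin S_1$ and $h_1(x_{1,j})=h_1(x_{1,j}')$ for all $j\in[0,k]$. The third step projects this run down to $\Sig_2$ by the symmetric induction: maintain the invariant $(x_{1,j}',x_{2,j}')\in\sim$ and $x_{1,j}'\in S_1\iff x_{2,j}'\in S_2$, using \ref{item1a:opacity})/\ref{item1b:opacity}) for the base case (according to whether $x_{1,0}'\in S_1$) and \ref{item2a:opacity})/\ref{item2b:opacity}) for the inductive step (according to whether $x_{1,j+1}'\in S_1$). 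The resulting run $x_{2,0}'\cdots x_{2,k}'$ is maximal (same argument as above, run in the opposite direction using \ref{item2c:opacity})/\ref{item2d:opacity})), satisfies $x_{2,i}'\notin S_2$ since $x_{1,i}'\notin S_1$, and by \ref{item3:opacity}) has $h_2(x_{2,j}')=h_1(x_{1,j}')=h_1(x_{1,j})=h_2(x_{2,j})$ for every $j\in[0,k]$; this is precisely the witness required by Definition~\ref{def_infinite-step_opacity} for $\Sig_2$. The converse follows by the symmetry noted above, completing the proof.

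\textbf{Expected main obstacle.} There is no deep obstruction, but the one genuine point of care --- and the reason the substance differs from Theorem~\ref{thm1:opacityNFTS_initial} --- is keeping the \emph{secret / non-secret label synchronized at every index} along both the lifting and the projecting constructions, not merely at the initial state (as in the initial-state case) or the critical index $i$. This is exactly what forces the four-way split of conditions \ref{item1:opacity}) and \ref{item2:opacity}); one must be disciplined in choosing, at each inductive step, the clause (a)--(d) that matches the secrecy of the \emph{target} state of the transition being matched. Once the invariant is stated correctly the remaining verifications, including the preservation of maximality, are routine.
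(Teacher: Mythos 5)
Your proposal is correct and follows essentially the same route as the paper's proof: lift the offending run of $\Sig_2$ to $\Sig_1$ via conditions \ref{item1c:opacity})/\ref{item1d:opacity}) and \ref{item2c:opacity})/\ref{item2d:opacity}), invoke infinite-step opacity of $\Sig_1$, and project the witness back via \ref{item1a:opacity})/\ref{item1b:opacity}) and \ref{item2a:opacity})/\ref{item2b:opacity}), with the other direction by symmetry. Your version is in fact somewhat more careful than the paper's, which implicitly works with total NTSs (so all maximal runs have length $|\a|+1$) and fixes a single secret index $i$ only implicitly, whereas you state the per-index secrecy-synchronization invariant explicitly and verify that maximality of runs is preserved in both directions.
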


\begin{proof}
	Assume there exists an InfSOP bisimulation relation $\sim\subseteq X_1\times X_2$ between $\Sig_1$ and $\Sig_2$
	and system $\Sig_1$ is infinite-step opaque. Now we show that $\Sig_2$ is also infinite-step opaque.

	For system $\Sig_2$, we arbitrarily choose input sequence $\a\in U^*$, states $x_{2,0}\in X_{2,0}$
	and $x_{2,1},\dots,x_{2,|\a|}\in X_2$ such that
	$$x_{2,0}\xrightarrow[]{\a(0)}_2 x_{2,1}\xrightarrow[]{\a(1)}_2\cdots\xrightarrow[]{\a(|\a|-1)}_2
	x_{2,|\a|},$$ and $x_{2,i}\in S_2$ for any $i\in[0,|\a|]$.

	Since $\Sig_1$ simulates $\Sig_2$, by \ref{item1c:opacity}), \ref{item1d:opacity}), \ref{item3:opacity}), \ref{item2c:opacity}),
	and \ref{item2d:opacity}), there exist $x_{1,0}\in X_{1,0}$, $x_{1,j}\in X_1$, $j\in[1,|\a|]$
	such that $x_{1,k}\in S_1$, $h_1(x_{1,k})=h_2(x_{2,k})$, $k\in[0,|\a|]$, and
	$$x_{1,0}\xrightarrow[]{\a(0)}_1 x_{1,1}\xrightarrow[]{\a(1)}_1\cdots\xrightarrow[]{\a(|\a|-1)}_1
	x_{1,|\a|}.$$

	Since $\Sig_1$ is infinite-step opaque, there exist $x_{1,0}'\in X_{1,0}$, $x_{1,j}'\in X_1$,
	$j\in[1,|\a|]$ such that $x_{1,k}'\in X_1\setminus S_1$, $h_1(x_{1,k})=h_1(x_{1,k}')$, $k\in[0,|\a|]$, and
	$$x_{1,0}'\xrightarrow[]{\a(0)}_1 x_{1,1}'\xrightarrow[]{\a(1)}_1\cdots\xrightarrow[]{\a(|\a|-1)}_1
	x_{1,|\a|}'.$$

	Since $\Sig_2$ simulates $\Sig_1$, by \ref{item1a:opacity}), \ref{item1b:opacity}), \ref{item3:opacity}), \ref{item2a:opacity}),
	and \ref{item2b:opacity}), there exist $x_{2,0}'\in X_{2,0}$
	and $x_{2,1}',\dots,x_{2,|\a|}'\in X_2$ such that $x_{2,j}'\in X_2\setminus S_2$,
	$h_1(x_{1,j}')=h_2(x_{2,j}')$, $j\in[0,|\a|]$, and
	$$x_{2,0}'\xrightarrow[]{\a(0)}_2 x_{2,1}'\xrightarrow[]{\a(1)}_2\cdots\xrightarrow[]{\a(|\a|-1)}_2
	x_{2,|\a|}'.$$ Hence $h_2(x_{2,j})=h_2(x_{2,j}')$, $j\in[0,|\a|]$, and $\Sig_2$ is infinite-step opaque.

	Symmetrically, assume that there exists an InfSOP bisimulation relation $\sim\subseteq X_1\times X_2$
	between $\Sig_1$ and $\Sig_2$
	and system $\Sig_2$ is infinite-step opaque,
	we can show that $\Sig_1$ is also infinite-step opaque.
\end{proof}

By the similarity of Definitions \ref{def_current-state_opacity}, \ref{def_K-step_opacity},
and \ref{def_infinite-step_opacity}, the following corollary follows.

\begin{corollary}
	Consider two NTSs $\Sig_i=(X_i,X_{i,0},S_i,U,\to_i,Y,h_i)$, $i=1,2$. If there exists
	an InfSOP bisimulation relation $\sim\subseteq X_1\times X_2$ between $\Sig_1$ and $\Sig_2$,
	then $\Sig_1$ is current-state (resp. $K$-step) opaque
	if and only if $\Sig_2$ is current-state (resp. $K$-step) opaque.
\end{corollary}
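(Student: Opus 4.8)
The plan is to mirror the structure of the proof of Theorem~\ref{thm1:opacityNFTS} almost verbatim, exploiting the deliberate parallelism between Definitions~\ref{def_current-state_opacity}, \ref{def_K-step_opacity}, and \ref{def_infinite-step_opacity}. The only real difference between the four notions of opacity is \emph{which} positions $i$ along a run are required to admit a non-secret alternative: position $0$ for initial-state, position $|\a|$ for current-state, positions in $[\max\{0,|\a|-K\},|\a|]$ for $K$-step, and all positions $[0,k]$ for infinite-step opacity (the latter two over maximal runs rather than runs of length exactly $|\a|+1$, but since we may assume the systems are total by the remark following Definition~\ref{def_infinite-step_opacity}, maximal runs over $\a$ all have length $|\a|+1$, so this distinction evaporates). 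Thus, first I would reduce to total systems, and then prove the current-state and $K$-step cases separately but by the same three-move argument.

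For the current-state case, suppose $\sim$ is an InfSOP bisimulation relation between $\Sig_1$ and $\Sig_2$ and $\Sig_1$ is current-state opaque; I would show $\Sig_2$ is current-state opaque (the converse being symmetric). Given $\a\in U^*$, an initial state $x_{2,0}\in X_{2,0}$, and a run $x_{2,0}\xrightarrow[]{\a(0)}_2\cdots\xrightarrow[]{\a(|\a|-1)}_2 x_{2,|\a|}$ with $x_{2,|\a|}\in S_2$, I first pull this run back along $\sim$ to a run in $\Sig_1$ ending in a \emph{secret} state: using \ref{item1c:opacity}) or \ref{item1d:opacity}) to start (here I need $x_{2,0}\in S_2$ if it is secret, otherwise \ref{item1d:opacity}) still produces a related $\Sig_1$-initial state), and then applying \ref{item2c:opacity}) and \ref{item2d:opacity}) step by step, choosing \ref{item2c:opacity}) at the last step precisely because $x_{2,|\a|}\in S_2$, so that the resulting $x_{1,|\a|}\in S_1$; condition \ref{item3:opacity}) guarantees matching outputs throughout. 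Next, since $\Sig_1$ is current-state opaque, there is a companion run in $\Sig_1$ over $\a$ with the same output sequence whose terminal state $x_{1,|\a|}'\notin S_1$. Finally I push this companion run forward along $\sim^{-1}$ into $\Sig_2$ using \ref{item1a:opacity}), \ref{item1b:opacity}), \ref{item2a:opacity}), \ref{item2b:opacity})—choosing \ref{item2b:opacity}) at the last step because $x_{1,|\a|}'\in X_1\setminus S_1$, so the resulting $x_{2,|\a|}'\in X_2\setminus S_2$—and again \ref{item3:opacity}) transfers outputs. Chaining the output equalities gives $h_2(x_{2,j})=h_2(x_{2,j}')$ for all $j\in[0,|\a|]$ with $x_{2,|\a|}'\notin S_2$, which is exactly current-state opacity of $\Sig_2$.

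The $K$-step case is identical except bookkeeping: here the hypothesis on the $\Sig_2$-run is $x_{2,i}\in S_2$ for a single index $i\in[K',|\a|]$ with $K'=\max\{0,|\a|-K\}$. When pulling back, I use \ref{item2c:opacity}) at step $i$ (to force $x_{1,i}\in S_1$) and \ref{item2d:opacity}) at all other steps; the initial-state conditions \ref{item1c:opacity})/\ref{item1d:opacity}) are chosen according to whether $x_{2,0}$ is secret (if $i=0$ then $x_{2,0}=x_{2,i}\in S_2$ and I use \ref{item1c:opacity})). Current-state opacity—sorry, $K$-step opacity—of $\Sig_1$ then yields a companion $\Sig_1$-run with $x_{1,i}'\notin S_1$ and matching outputs; pushing forward with \ref{item2b:opacity}) at step $i$ and \ref{item2a:opacity})/\ref{item2d:opacity})... more precisely \ref{item2a:opacity})/\ref{item2b:opacity}) elsewhere, produces the required $\Sig_2$-run with $x_{2,i}'\notin S_2$. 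I would then simply remark that the argument is symmetric in $\Sig_1$ and $\Sig_2$, since the definition of InfSOP bisimulation is symmetric.

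The only mild obstacle is the one already flagged: the definitions of $K$-step and infinite-step opacity are stated in terms of \emph{maximal} runs, whereas the bisimulation conditions produce transition-by-transition extensions that are a priori not maximal. I would handle this up front by invoking the paragraph after Definition~\ref{def_infinite-step_opacity}—opacity of $\Sig$ is equivalent to opacity of $\Sig_{\aug}$—so that without loss of generality $\Sig_1,\Sig_2$ are total and every run of length $|\a|+1$ is maximal; then the constructions above automatically yield maximal runs. Everything else is routine and is a direct transcription of the proof of Theorem~\ref{thm1:opacityNFTS}, which is why stating it as a corollary is justified; in writing it I would keep the displayed chains of transitions short and avoid restating the full InfSOP definition.
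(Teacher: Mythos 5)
Your proposal is correct and takes essentially the same route the paper intends: the paper's own ``proof'' is just the remark that the corollary follows from the similarity of the definitions, i.e.\ by replaying the three-move argument of Theorem~\ref{thm1:opacityNFTS} (pull the offending run back into $\Sig_1$, apply opacity of $\Sig_1$, push the witness forward into $\Sig_2$) with the secret position moved from all indices to the terminal index (resp.\ to an index in $[\max\{0,|\a|-K\},|\a|]$), which is exactly what you do. Two harmless slips worth noting: Definitions~\ref{def_current-state_opacity} and \ref{def_K-step_opacity} quantify over runs of length exactly $|\a|+1$ rather than maximal runs, so your reduction to total systems is not actually needed for these two notions (and only \ref{def_initial-state_opacity} and \ref{def_infinite-step_opacity} use maximal runs); and in the $K$-step pull-back you should choose between \ref{item2c:opacity}) and \ref{item2d:opacity}) at \emph{every} step according to whether the target state is secret, rather than using \ref{item2d:opacity}) at all steps other than $i$, since some $x_{2,j}$ with $j\neq i$ may also be secret --- neither point affects the validity of the argument.
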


\begin{remark}
Note that although we add several additional conditions in Definition \ref{def_OpacityPreservingBisimulation}
to make bisimulation relation preserving these three types of opacity, these conditions are somehow necessary.
That is, without some of them, bisimulation relation may not preserve those notions of opacity any more.
Taking the two NFTSs shown in Fig. \ref{fig1:initial_opaicty_NFTS} for example, bisimulation relation
$\sim=\{(1',1),(2',2),(3',1),(4',2)\}$ satisfies \ref{item1a:opacity}),
\ref{item1c:opacity}), \ref{item1d:opacity}), \ref{item3:opacity}), \ref{item2a:opacity}), and \ref{item2d:opacity}),
but does not satisfy \ref{item1b:opacity}), \ref{item2b:opacity}), or \ref{item2c:opacity}).
\end{remark}

\begin{remark}
Note that since the preservation of infinite-step opacity always requires a bidirectional relation, so we directly study InfSOP bisimulation relation.
A detailed study of relevant notions of (bi)simulation relation for preserving current-state and $K$-step opacity are left for future investigations.
\end{remark}

\subsection{Infinite-step opacity-preserving quotient relation}\label{subsec:InfSOP_Quotient}
In this subsection, we again use the quotient relation from an NTS to its quotient system
to implement the InfSOP bisimulation relation.

\begin{theorem}\label{thm2:opacityNFTS}
	Let $\Sig=(X,X_0,S,U,\to,Y,h)$ be an NTS and $\sim\subseteq X\times X$ be an equivalence relation on $X$
	satisfying $h(x)=h(x')$ for all $(x,x')\in\sim$.
	Assume that
	for all $x\in S$ and $x'\in X$, if $(x,x')\in\sim$
	then $x'\in S$.
	Then $\sim_{\mathsf{Q}}$ is an InfSOP bisimulation relation
	between $\Sig$ and $\Sig_{\sim}$ if and only if $\sim$ is an InfSOP bisimulation relation
	between $\Sig$ and itself.
\end{theorem}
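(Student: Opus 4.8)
The theorem has the same shape as Theorem~\ref{thm2:opacityNFTS_initial} (the initial-state version), so the plan is to mirror that proof structure, replacing initial-state conditions with their secret/non-secret transition counterparts. Under the standing assumption that $\sim$-equivalence classes are either entirely secret or entirely non-secret, we have $S_\sim=\{[x]\mid x\in S\}$, which is the key bookkeeping fact: membership in $S_2=S_\sim$ is decided purely by the class, so ``$[x]\in S_\sim$'' $\iff$ ``$x\in S$''. I would record this first, exactly as in the proof of Theorem~\ref{thm2:opacityNFTS_initial}.

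\textbf{(if) direction.} Assume $\sim$ is an InfSOP bisimulation between $\Sig$ and itself. I would verify the four conditions of Definition~\ref{def_OpacityPreservingBisimulation} for $\sim_{\mathsf{Q}}=\{(x,[x])\mid x\in X\}$. For the initial-state conditions \ref{item1a:opacity})--\ref{item1d:opacity}): given $x_0\in S\cap X_0$ we get $[x_0]\in S_\sim\cap X_{\sim,0}$ directly, and conversely given $[x_0]\in S_\sim\cap X_{\sim,0}$ there is $x'\in X_0$ with $x'\sim x_0$, and since $[x_0]\subseteq S$ we get $x'\in S\cap X_0$; the non-secret cases are symmetric. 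Condition \ref{item3:opacity}) is the hypothesis $h(x)=h(x')$ for $(x,x')\in\sim$. For the transition conditions: \ref{item2a:opacity}) and \ref{item2b:opacity}) are immediate, since a transition $x\xrightarrow{u}x'$ in $\Sig$ induces $[x]\xrightarrow{u}_\sim[x']$ with $(x',[x'])\in\sim_{\mathsf{Q}}$, and $x'\in S\iff[x']\in S_\sim$ preserves the secret/non-secret split. The content is in \ref{item2c:opacity}) and \ref{item2d:opacity}): a transition $[x]\xrightarrow{u}_\sim[x']$ with $[x']\in S_\sim$ (resp.\ $\notin S_\sim$) comes from some $x''\xrightarrow{u}x'''$ in $\Sig$ with $x\sim x''$, $x'\sim x'''$, and $x'''\in S$ (resp.\ $\notin S$, using the class-uniformity); applying the InfSOP-bisimulation property \ref{item2c:opacity}) (resp.\ \ref{item2d:opacity}) of $\sim$ to the pair $(x,x'')\in\sim$ and this transition yields $x\xrightarrow{u}x''''$ with $x''''\in S$ (resp.\ $\notin S$) and $x'''\sim x''''$, hence $x'\sim x''''$, so $[x']=[x'''']$ and $(x'''',[x'])\in\sim_{\mathsf{Q}}$, as required.

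\textbf{(only if) direction.} Assume $\sim_{\mathsf{Q}}$ is an InfSOP bisimulation between $\Sig$ and $\Sig_\sim$; I must show $\sim$ is one between $\Sig$ and itself. Conditions \ref{item1a:opacity})--\ref{item1d:opacity}) for $\sim$ follow because $\sim$ restricted to $X_0$ surjects onto the relevant classes (every class in $X_{\sim,0}$ meets $X_0$), and the secret/non-secret status is class-uniform; condition \ref{item3:opacity}) is the hypothesis. For \ref{item2c:opacity})/\ref{item2d:opacity}) of $\sim$: take $(x,x')\in\sim$ and a transition $x'\xrightarrow{u}x''$ with $x''\in S$ (resp.\ $\notin S$). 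Then $(x,[x'])\in\sim_{\mathsf{Q}}$ and, since $x'\xrightarrow{u}x''$ induces $[x']\xrightarrow{u}_\sim[x'']$ with $[x'']\in S_\sim$ (resp.\ $\notin S_\sim$), applying \ref{item2c:opacity}) (resp.\ \ref{item2d:opacity}) of $\sim_{\mathsf{Q}}$ to this transition from $[x']$ gives $x\xrightarrow{u}\tilde x$ with $\tilde x\in S$ (resp.\ $\notin S$) and $(\tilde x,[x''])\in\sim_{\mathsf{Q}}$, i.e.\ $\tilde x\sim x''$. The symmetric statements \ref{item2a:opacity})/\ref{item2b:opacity}) of $\sim$ follow in the same way using \ref{item2a:opacity})/\ref{item2b:opacity}) of $\sim_{\mathsf{Q}}$ together with the fact that every class-representative transition in $\Sig_\sim$ lifts. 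The main obstacle is purely notational: keeping the three layers of primes straight when chasing a transition $[x]\xrightarrow{u}_\sim[x']$ back to an actual transition in $\Sig$, then forward again through the bisimulation property, and checking that the resulting state lies in the correct class; the class-uniformity of $S$ is what makes every ``lands in $S$ vs.\ $X\setminus S$'' check go through, so I would state and use it explicitly at each step.
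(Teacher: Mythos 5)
Your proposal is correct and follows essentially the same route as the paper's proof: record that $S_\sim=\{[x]\mid x\in S\}$ under the class-uniformity assumption, check the initial-state and output conditions directly, and use conditions \ref{item2c:opacity})/\ref{item2d:opacity}) of the assumed InfSOP relation to pull quotient transitions back to arbitrary representatives in each direction. The only imprecision is in the last step of your only-if direction: conditions \ref{item2a:opacity})/\ref{item2b:opacity}) for $\sim$ are obtained by applying \ref{item2c:opacity})/\ref{item2d:opacity}) of $\sim_{\mathsf{Q}}$ to the pair $(x_2,[x_1])$ (not \ref{item2a:opacity})/\ref{item2b:opacity}) of $\sim_{\mathsf{Q}}$, which only lifts transitions into the quotient and cannot by itself produce a transition from the other representative $x_2$), but this is exactly the mechanism you already demonstrated and the paper uses, so nothing substantive is missing.
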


\begin{proof}
	Similar to Theorem \ref{thm2:opacityNFTS_initial}, by assumption we have
	for all $x\in X$, either $[x]\subseteq S$ or $[x]\cap S=\emptyset$,
	i.e., $S_{\sim}=\{[x]|x\in S\}$.

	(if:) Assume that $\sim$ is an InfSOP bisimulation relation between $\Sig$ and itself. Next we show that $\sim_{\mathsf{Q}}$ is also
	an InfSOP bisimulation relation between $\Sig$ and $\Sig_{\sim}$ according to Definition \ref{def_OpacityPreservingBisimulation}.

	For all $x\in X_0\cap S$, we have $[x]\in X_{\sim,0}\cap S_{\sim}$, and $(x,[x])\in\sim_{\mathsf{Q}}$,
	i.e., \ref{item1a:opacity}) in Definition \ref{def_OpacityPreservingBisimulation} holds.

	For all  $x\in X_0\setminus S$, we have $[x]\in X_{\sim,0}\setminus S_{\sim}$, and $(x,[x])\in\sim_{\mathsf{Q}}$,
	i.e., \ref{item1b:opacity}) in Definition \ref{def_OpacityPreservingBisimulation} holds.

	For all $[x]\in X_{\sim,0}\cap S_{\sim}$, we have $[x]\cap X_0\ne\emptyset$, and $[x]\subseteq S$,
	then there exists $\bar x\in[x]$ such that $\bar x\in X_0\cap S$, and $(\bar x,[x])\in\sim_{\mathsf{Q}}$,
	i.e.,  \ref{item1c:opacity}) in Definition \ref{def_OpacityPreservingBisimulation} holds.
	
	For all $[x]\in X_{\sim,0}\setminus S_{\sim}$, we have $[x]\cap X_0\ne\emptyset$, and $[x]\cap S=\emptyset$,
	then there exists $\bar x\in[x]$ such that $\bar x\in X_0\setminus S$, and $(\bar x,[x])\in\sim_{\mathsf{Q}}$,
	i.e.,  \ref{item1d:opacity}) in Definition \ref{def_OpacityPreservingBisimulation} holds.

	Now consider an arbitrary pair $(\bar x,[x])\in\sim_{\mathsf{Q}}$, i.e., $[\bar x]=[x]$,
	$\bar x\sim x$. By definition we have $h(\bar x) =h (x)=h_\sim([x])$.
	
	Now consider an arbitrary pair $(\bar x,[x])\in\sim_{\mathsf{Q}}$, i.e., $\bar x\in[x]$.

	For every transition $\bar x\xrightarrow[]{u}\bar x'\in S$, where $u\in U$, we have $[x]\xrightarrow[]{u}_{\sim}
	[\bar x']\in S_{\sim}$, and $(\bar x',[\bar x'])\in\sim_{\mathsf{Q}}$,
	i.e., \ref{item2a:opacity}) in Definition \ref{def_OpacityPreservingBisimulation} holds.

	For every transition $\bar x\xrightarrow[]{u}\bar x'\in X\setminus S$, where $u\in U$,
	we have $[x]\xrightarrow[]{u}_{\sim}
	[\bar x']\in X_{\sim}\setminus S_{\sim}$, and $(\bar x',[\bar x'])\in\sim_{\mathsf{Q}}$,
	i.e., \ref{item2b:opacity}) in Definition \ref{def_OpacityPreservingBisimulation} holds.

	For every transition $[x]\xrightarrow[]{u}_{\sim}[x']\in S_{\sim}$, where $u\in U$, there exists transition
	$\hat x\xrightarrow[]{u}\hat x'\in S$ such that $\hat x\in[x]$, and $\hat x'\in[x']$.
	Since $(\bar x,\hat x)\in\sim$, and $\sim$ is InfSOP, there exists transition
	$\bar x\xrightarrow[]{u}\bar x'\in S$ such that $(\hat x',\bar x')\in \sim$, hence $(\bar x',[x'])
	\in\sim_{\mathsf{Q}}$, i.e., \ref{item2c:opacity}) in Definition \ref{def_OpacityPreservingBisimulation} holds.

	For every transition $[x]\xrightarrow[]{u}_{\sim}[x']\in X_{\sim}\setminus S_{\sim}$,
	where $u\in U$, there exists transition
	$\hat x\xrightarrow[]{u}\hat x'\in X\setminus S$ such that $\hat x\in[x]$, and $\hat x'\in[x']$.
	Since $(\bar x,\hat x)\in\sim$, and $\sim$ is InfSOP, there exists transition
	$\bar x\xrightarrow[]{u}\bar x'\in X\setminus S$ such that $(\hat x',\bar x')\in \sim$, hence $(\bar x',[x'])
	\in\sim_{\mathsf{Q}}$, i.e., \ref{item2d:opacity}) in Definition \ref{def_OpacityPreservingBisimulation} holds.
	Hence $\sim_{\mathsf{Q}}$ is InfSOP.

	(only if:) Assume that $\sim_{\mathsf{Q}}$ is an InfSOP bisimulation relation between $\Sig$ and $\Sig_{\sim}$.
	Now we show that $\sim$ is also
	an InfSOP bisimulation relation between $\Sig$ and itself according to Definition \ref{def_OpacityPreservingBisimulation}.
	Since $\sim$ is an equivalence relation, we have $(x,x)\in\sim$ for all $x\in X$.

	For all $x\in X_0\cap S$, we have $(x,x)\in\sim$,
	i.e., \ref{item1a:opacity}) in Definition \ref{def_OpacityPreservingBisimulation} holds.
	Similarly, \ref{item1b:opacity}), \ref{item1c:opacity}), and \ref{item1d:opacity})
	in Definition \ref{def_OpacityPreservingBisimulation} hold.

	By the definition of $\sim$, we have $h(x_1)=h(x_2)$ for all $(x_1,x_2)\in\sim$.
	Hence \ref{item3:opacity}) in Definition \ref{def_OpacityPreservingBisimulation} holds.

	
	Now consider an arbitrary pair $(x_1,x_2)\in\sim$.

	For every transition $x_1\xrightarrow[]{u}x_1'\in S$, where $u\in U$, we have $[x_1]\xrightarrow[]{u}_{\sim}
	[x_1']\in S_{\sim}$. Since $\sim_{\mathsf{Q}}$ is InfSOP, and $(x_2,[x_1])\in\sim_{\mathsf{Q}}$,
	there exists transition $x_2\xrightarrow[]{u}x_2'\in S\cap[x_1']=[x_1']$, then $(x_1',x_2')\in\sim$,
	i.e., \ref{item2a:opacity}) in Definition \ref{def_OpacityPreservingBisimulation} holds.

	For every transition $x_1\xrightarrow[]{u}x_1'\in X\setminus S$,
	where $u\in U$, we have $[x_1]\xrightarrow[]{u}_{\sim}
	[x_1']\in X_{\sim}\setminus S_{\sim}$. Since $\sim_{\mathsf{Q}}$ is InfSOP,
	and $(x_2,[x_1])\in\sim_{\mathsf{Q}}$,
	there exists transition $x_2\xrightarrow[]{u}x_2'\in (X\setminus S)\cap[x_1']=[x_1']$,
	then $(x_1',x_2')\in\sim$,
	i.e., \ref{item2b:opacity}) in Definition \ref{def_OpacityPreservingBisimulation} holds.

	Symmetrically, \ref{item2c:opacity}) and \ref{item2d:opacity})
	in Definition \ref{def_OpacityPreservingBisimulation} hold. Hence, $\sim$ is an InfSOP
	bisimulation relation between $\Sig$ and itself.
\end{proof}


\begin{example}\label{exam4_OpacityNFTS}
	Consider NFTS $\Sig=(X,X_{0},S,U,\to,Y,h)$
	shown in Fig. \ref{fig4:opaicty_NFTS}, where $X=\{1,2,3,4,5,6,7,8\}=X_0$, $S=\{1,5\}$, $U=\{1\}$,
	$Y=\{1,2\}$.
	It can be readily seen that the equivalence relation $\sim=\{(1,1),(2,2),(3,3),\\(4,4),(5,5),(6,6),(7,7),(8,8),(1,5),(5,1),(2,6),
	(6,2),(3,7),(7,3),(4,8),(8,4)\}\subseteq X\times X$ is an InfSOP bisimulation relation between $\Sig$ and itself.
	Under this relation, the quotient system of $\Sig$ is $\Sig_{\sim}=(X_{\sim},X_{\sim,0},S_\sim,U,\to_\sim,
	Y,h_\sim)$, where $X_{\sim}=X/\sim=X_{\sim,0}$, $X/\sim=\{\{1,5\},\{2,6\},\{3,7\},\{4,8\}\}$, $S_{\sim}=\{\{1,5\}\}$, which is shown in
	Fig. \ref{fig5:opaicty_NFTS}. It can be easily seen that $\Sig_{\sim}$ is infinite-step opaque. Therefore,
	the original NFTS $\Sig$ is also infinite-step opaque due to the results in Theorem \ref{thm2:opacityNFTS}.

	\begin{figure}
        \centering
\begin{tikzpicture}[>=stealth',shorten >=1pt,auto,node distance=1.5 cm, scale = 0.9, transform shape,
	->,>=stealth,inner sep=2pt,state/.style={shape=circle,draw,top color=red!10,bottom color=blue!30},
	point/.style={circle,inner sep=0pt,minimum size=2pt,fill=}, 
	skip loop/.style={to path={-- ++(0,#1) -| (\tikztotarget)}}]
	\node[state,accepting,initial by arrow,initial where=above] (1) {$1/1$};
	\node[initial,state,initial by arrow,initial where=above] (2) [right of = 1] {$2/2$};
	\node[initial,state,initial by arrow,initial where=above] (3) [right of = 2] {$3/1$};
	\node[initial,state,initial by arrow,initial where=above] (4) [right of = 3] {$4/2$};
	\node[initial,accepting,state,initial by arrow,initial where=below] (5) [below of = 4] {$5/1$};
	\node[initial,state,initial by arrow,initial where=below] (6) [left of = 5] {$6/2$};
	\node[initial,state,initial by arrow,initial where=below] (7) [left of = 6] {$7/1$};
	\node[initial,state,initial by arrow,initial where=below] (8) [left of = 7] {$8/2$};

	\path [->]
	(1) edge node {$1$} (2)
	(2) edge node {$1$} (3)
	(3) edge node {$1$} (4)
	(4) edge node {$1$} (5)
	(5) edge node {$1$} (6)
	(6) edge node {$1$} (7)
	(7) edge node {$1$} (8)
	(8) edge node {$1$} (1)
	;

        \end{tikzpicture}
		\caption{State transition diagram of the NFTS in Example \ref{exam4_OpacityNFTS}.}
	\label{fig4:opaicty_NFTS}
\end{figure}

	\begin{figure}
        \centering
\begin{tikzpicture}[>=stealth',shorten >=1pt,auto,node distance=2  cm, scale = 0.8, transform shape,
	->,>=stealth,inner sep=2pt,state/.style={shape=circle,draw,top color=red!10,bottom color=blue!30},
	point/.style={circle,inner sep=0pt,minimum size=2pt,fill=}, 
	skip loop/.style={to path={-- ++(0,#1) -| (\tikztotarget)}}]
	\node[state,accepting,initial by arrow,initial where=left] (1) {$\{1,5\}/1$};
	\node[initial,state,initial by arrow,initial where=right] (2) [right of = 1] {$\{2,6\}/2$};
	\node[initial,state,initial by arrow,initial where=right] (7) [below of = 2] {$\{3,7\}/1$};
	\node[initial,state,initial by arrow,initial where=left] (8) [left of = 7] {$\{4,8\}/2$};

	\path [->]
	(1) edge node {$1$} (2)
	(2) edge node {$1$} (7)
	(7) edge node {$1$} (8)
	(8) edge node {$1$} (1)
	;

        \end{tikzpicture}
		\caption{State transition diagram of the quotient system of the NFTS in Example \ref{exam4_OpacityNFTS}
		shown in Fig. \ref{fig4:opaicty_NFTS}.}
	\label{fig5:opaicty_NFTS}
\end{figure}

\end{example}

\section{Relationship Between Different Notions of Opacity}\label{sec:opacity_relation}

In this section, we characterize the relationship between different notions of opacity for
NTSs.

\begin{theorem}
	The implication relationship between different notions of opacity for NTSs is shown in Fig.
	\ref{fig6:opaicty_NFTS}.
	\label{thm3:opacityNFTS}
\end{theorem}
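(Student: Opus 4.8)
The plan is to derive every arrow in Figure~\ref{fig6:opaicty_NFTS} directly from Definitions~\ref{def_initial-state_opacity}--\ref{def_infinite-step_opacity}, exploiting the single fact that each weaker notion is obtained from a stronger one by \emph{shrinking} the set of time instants at which the suspicious secret state must be masked (or by relaxing the quantifier on that state). Concretely, a full-length run $x_0x_1\cdots x_{|\a|}$ over $\a$ is in particular a maximal run with $k=|\a|$, and any maximal-run witness produced by the InfSO clause of Definition~\ref{def_infinite-step_opacity} for such a run again has length $|\a|+1$; hence specializing the index $i$ there to $[\max\{0,|\a|-K\},|\a|]\subseteq[0,|\a|]$ yields $K$-step opacity for every positive $K$, and specializing it to $i=0$ while restricting the starting state to $X_0\cap S$ collapses the InfSO clause to exactly Definition~\ref{def_initial-state_opacity} (the witness run of a maximal run automatically starts in $X_0\setminus S$), so InfSO implies both $K$-step opacity and initial-state opacity. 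Between consecutive members of the $K$-chain, since $\max\{0,|\a|-(K+1)\}\le\max\{0,|\a|-K\}$ we have $[\max\{0,|\a|-K\},|\a|]\subseteq[\max\{0,|\a|-(K+1)\},|\a|]$, so the $(K+1)$-step masking requirement, read off only on the smaller index set, is the $K$-step one; and for $K=0$ the index set is $\{|\a|\}$, which is Definition~\ref{def_current-state_opacity}, giving $1$-step opacity $\Rightarrow$ current-state opacity. These observations plus transitivity give all the arrows, and the reduction to total NTSs noted after Definition~\ref{def_infinite-step_opacity} lets me freely ignore the difference between ``maximal run'' and ``run of length $|\a|+1$'' whenever convenient.

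If the statement also asserts that Figure~\ref{fig6:opaicty_NFTS} contains \emph{all} the implications, then I would complement the above with one small NFTS per missing arrow, each checkable by inspecting its reachable output sequences. An NFTS with $X_0\cap S=\emptyset$ but a reachable secret state whose output is emitted by no non-secret state at the same instant is initial-state opaque (vacuously) yet not current-state opaque, hence not $K$-step or infinite-step opaque; symmetrically, an NFTS with a secret initial state whose current output has a non-secret twin but whose continuation under some input becomes distinguishable one step later is current-state opaque but neither initial-state opaque nor $1$-step opaque. This already separates InitSO from the whole $K$-chain in both directions and makes $1$-step $\Rightarrow$ CSO strict. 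For the strictness of $(K+1)$-step $\Rightarrow$ $K$-step, I would use, for each $K$, a path/cycle gadget in which a secret and a non-secret state emit identical outputs for the next $K$ transitions but diverge at transition $K+1$: such a system is $K$-step opaque but not $(K+1)$-step opaque, and as $K$ is arbitrary the same family also shows that no single $K$-step notion implies infinite-step opacity. Each example would first be made total via the $\Sig_{\aug}$ construction so that the maximal-run clauses in Definitions~\ref{def_initial-state_opacity} and~\ref{def_infinite-step_opacity} introduce no spurious early-terminating witnesses.

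The main obstacle is bookkeeping rather than mathematics: one must be disciplined about the two flavours of run quantification --- maximal runs (for InitSO and InfSO) versus runs of full length $|\a|$ (for CSO and $K$-step opacity) --- and apply the total-NTS reduction before each comparison, since an unrestricted maximal run may legitimately stop early and then the naive ``the index set shrinks'' argument does not literally apply. The other mildly delicate point is the $K$-indexed counterexample family: for every member one has to verify the masking condition for \emph{all} input sequences and \emph{all} admissible instants, uniformly in $K$, not just at the obviously critical instant; this is routine but is best written once in general form rather than example by example.
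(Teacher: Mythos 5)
Your overall strategy is the same as the paper's: read the positive arrows of Fig.~\ref{fig6:opaicty_NFTS} directly off Definitions~\ref{def_initial-state_opacity}--\ref{def_infinite-step_opacity} (the index set at which masking is demanded shrinks as one passes from InfSO to KSO to CSO, and the $i=0$ instance of InfSO is InitSO), and then exhibit one counterexample per blunt arrow. The positive part of your argument is sound, including your care about maximal runs versus full-length runs via the $\Sig_{\aug}$ reduction; the paper simply asserts these implications without that bookkeeping.

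There is, however, one genuine gap on the counterexample side: none of your proposed examples witnesses the blunt arrow $\text{KSO}\not\Rightarrow\text{InitSO}$. Your claim that the second example (current-state opaque, not initial-state opaque, not $1$-step opaque) ``separates InitSO from the whole $K$-chain in both directions'' is a logical slip: since $K$-step opacity is \emph{stronger} than current-state opacity, a system that is CSO but not InitSO says nothing about whether KSO implies InitSO --- for that direction you need a system that \emph{is} $K$-step opaque yet fails initial-state opacity, and your example is explicitly not $1$-step opaque. Your first example goes the other way (InitSO but not KSO), and your $K$-indexed family is only analyzed for the KSO-versus-InfSO separation, with no claim about its initial-state opacity. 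The paper closes exactly this case with the NFTS of Fig.~\ref{fig10:opaicty_NFTS}, which is $1$-step opaque and current-state opaque but not initial-state opaque (a secret initial state whose output is masked at every instant from time $1$ onward, but whose time-$0$ output has no non-secret initial twin). You need to add such an example; everything else in your plan, including deducing $\text{CSO}\not\Rightarrow\text{InfSO}$ from $\text{CSO}\not\Rightarrow\text{KSO}$ and the transitivity bookkeeping, goes through.
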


\begin{figure}
        \centering
\begin{tikzpicture} [>=stealth',shorten >=1pt,auto,node distance=4.0 cm, scale = 1.0, transform shape,
	->,>=stealth,inner sep=2pt,state/.style={shape=circle,draw,top color=red!10,bottom color=blue!30},
	point/.style={circle,inner sep=0pt,minimum size=2pt,fill=},
	skip loop/.style={to path={-- ++(0,#1) -| (\tikztotarget)}}]
	\node[state] (7) {InitSO};
	\node[state] [right of =7] (4) {CSO};
	\node[state] [below of =4] (5) {InfSO};
	\node[state] [left of =5] (8) {KSO};
	\draw [-|] ([yshift=2pt] 7.east) -- ([yshift=2pt] 4.west);
	\draw [-|] ([yshift=-2pt] 4.west) -- ([yshift=-2pt] 7.east);
	\draw [-|] ([xshift=2pt] 4.south) -- ([xshift=2pt] 5.north);
	\draw [->] ([xshift=-2pt] 5.north) -- ([xshift=-2pt] 4.south);
	\draw [-|] ([yshift=2pt] 8.east) -- ([yshift=2pt] 5.west);
	\draw [->] ([yshift=-2pt] 5.west) -- ([yshift=-2pt] 8.east);
	\draw [->] (8) to [out=22.5, in=249.5] (4);
	\draw [-|] (4) to [out=204.5, in=67.5] (8);
	\draw [-|] ([xshift=2pt] 8.north) -- ([xshift=2pt] 7.south);
	\draw [-|] ([xshift=-2pt] 7.south) -- ([xshift=-2pt] 8.north);
	\draw [->] (5) to [out=112.5, in=-22.5] (7);
	\draw [-|] (7) to [out=-67.5, in=157.5] (5);
        \end{tikzpicture}
	\caption{Implication relationship between different notions of opacity,
		where each pointed arrow means ``implies'' and each blunt arrow means ``does not imply''.
	}
	\label{fig6:opaicty_NFTS}
\end{figure}

\begin{proof}
	By Definitions
	\ref{def_initial-state_opacity}, \ref{def_current-state_opacity},
	\ref{def_K-step_opacity}, and \ref{def_infinite-step_opacity},
	one directly sees that InfSO implies KSO, CSO, and InitSO; and KSO implies CSO.
	We use counterexamples to prove the remaining parts as in Fig. \ref{fig6:opaicty_NFTS}.

	First, consider the NFTS as in Fig. \ref{fig7:opaicty_NFTS}, where $X=\{x_1,x_2,x_3,x_4\}=X_0$,
	$S=\{x_1\}$, $U=\{u_1,u_2\}$, and $Y=\{y_1,y_2,y_3\}$.
	For any input sequence $u_1^*\in U^*$, where $u_1^*$ means an arbitrary finite
	sequence consisting of $u_1$'s and including $\e$, the unique run over input sequence $u_1^*$
	ending at the unique secret state $x_1$ is $x_3^*x_1$,
	and there exists another run $x_3^*x_2$ also over the same input sequence
	producing the same output sequence as run $x_3^*x_1$, where state $x_2$ is not secret.
	For any input sequence containing input $u_2$,
	there exists no run over which one can end at secret state $x_1$. Hence, the NFTS is current-state opaque.
	Consider input sequence $u_2$ and run $x_1x_4$ over $u_2$. There is no other run also over $u_2$,
	so the NFTS is not initial-state opaque. Hence, CSO does not imply InitSO. Consider an arbitrary input sequence
	$u_1^*u_2u_1^*\in U^*$ and run $x_3^*x_1x_4^*$ over it. There is no other run over it, hence,
	the NFTS is not $K$-step opaque for any positive integer $K$. Hence CSO does not imply KSO for any
	positive integer $K$.
	\begin{figure}
        \centering
\begin{tikzpicture}[>=stealth',shorten >=1pt,auto,node distance=2.5 cm, scale = 0.8, transform shape,
	->,>=stealth,inner sep=2pt,state/.style={shape=circle,draw,top color=red!10,bottom color=blue!30},
	point/.style={circle,inner sep=0pt,minimum size=2pt,fill=}, 
	skip loop/.style={to path={-- ++(0,#1) -| (\tikztotarget)}}]
	\node[initial,state,accepting,initial by arrow,initial where=left] (1) {$x_1/y_1$};
	\node[initial,state,initial by arrow,initial where=above] (2) [below of = 1] {$x_2/y_1$};
	\node[initial,state,initial by arrow,initial where=above] (3) [left of = 2] {$x_3/y_2$};
	\node[initial,state,initial by arrow,initial where=above] (4) [right of = 2] {$x_4/y_3$};

	\path [->]
	(3) edge node {$u_1$} (1)
	(3) edge node {$u_1$} (2)
	(3) edge [loop left] node {$u_1$} (3)
	(1) edge node {$u_1,u_2$} (4)
	(2) edge node {$u_1$} (4)
	(4) edge [loop right] node {$u_1$} (4)
	;

        \end{tikzpicture}
		\caption{State transition diagram of an NFTS in the proof of Theorem \ref{thm3:opacityNFTS}.}
		\label{fig7:opaicty_NFTS}
\end{figure}

	Second, consider the NFTS as in Fig. \ref{fig8:opaicty_NFTS}, where $X=\{x_1,x_2,x_3,x_4,x_5\}=X_0$,
	$S=\{x_1\}$, $U=\{u_1,u_2\}$, and $Y=\{y_1,y_2,y_3\}$.
	For any input sequence $u_1^*u_1u_1\in U^*$ and any run $x_3^*x_1x_4$ over $u_1^*u_1u_1$,
	there is run $x_3^*x_2x_5$ also over the same input sequence and producing the same output sequence
	as $x_3^*x_1x_4$. There exists no other run such that the unique secret state $x_1$ is at the last but one
	time step. So the NFTS is $1$-step opaque. Similarly one sees that the NFTS is not $K$-step opaque for
	any integer $K>1$, hence, not infinite-step opaque. Therefore, KSO does not imply InfSO.
	\begin{figure}
        \centering
\begin{tikzpicture}[>=stealth',shorten >=1pt,auto,node distance=2.8 cm, scale = 0.8, transform shape,
	->,>=stealth,inner sep=2pt,state/.style={shape=circle,draw,top color=red!10,bottom color=blue!30},
	point/.style={circle,inner sep=0pt,minimum size=2pt,fill=}, 
	skip loop/.style={to path={-- ++(0,#1) -| (\tikztotarget)}}]
	\node[initial,state,accepting,initial by arrow,initial where=below] (1) {$x_1/y_1$};
	\node[initial,state,initial by arrow,initial where=above] (2) [below of = 1] {$x_2/y_1$};
	\node[initial,state,initial by arrow,initial where=above] (3) [left of = 2] {$x_3/y_2$};
	\node[initial,state,initial by arrow,initial where=below] (4) [right of = 1] {$x_4/y_3$};
	\node[initial,state,initial by arrow,initial where=above] (5) [right of = 2] {$x_5/y_3$};

	\path [->]
	(3) edge node {$u_1$} (1)
	(3) edge node {$u_1$} (2)
	(3) edge [loop left] node {$u_1$} (3)
	(1) edge node {$u_1$} (4)
	(2) edge node {$u_1$} (5)
	(4) edge [loop right] node {$u_2$} (4)
	(5) edge [loop right] node {$u_1$} (5)
	;

        \end{tikzpicture}
		\caption{State transition diagram of an NFTS in the proof of Theorem \ref{thm3:opacityNFTS}.}
		\label{fig8:opaicty_NFTS}
\end{figure}

	Third, consider the NFTS as in Fig. \ref{fig9:opaicty_NFTS}, where $X=\{x_1,x_2,x_3,x_4\}=X_0$,
	$S=\{x_1\}$, $U=\{u_1,u_2\}$, and $Y=\{y_1,y_2,y_3\}$. It is not difficult to see that this NFTS
	is initial-state opaque, but not current-state opaque, or $K$-step for any positive integer $K$,
	or infinite-step opaque. Hence, InitSO does not imply KSO or CSO or InfSO.
	\begin{figure}
        \centering
\begin{tikzpicture}[>=stealth',shorten >=1pt,auto,node distance=2.5 cm, scale = 0.8, transform shape,
	->,>=stealth,inner sep=2pt,state/.style={shape=circle,draw,top color=red!10,bottom color=blue!30},
	point/.style={circle,inner sep=0pt,minimum size=2pt,fill=}, 
	skip loop/.style={to path={-- ++(0,#1) -| (\tikztotarget)}}]
	\node[initial,state,accepting,initial by arrow,initial where=left] (1) {$x_1/y_1$};
	\node[initial,state,initial by arrow,initial where=above] (2) [below of = 1] {$x_2/y_1$};
	\node[initial,state,initial by arrow,initial where=above] (3) [left of = 2] {$x_3/y_2$};
	\node[initial,state,initial by arrow,initial where=above] (4) [right of = 2] {$x_4/y_3$};

	\path [->]
	(3) edge node {$u_1$} (1)
	(3) edge node {$u_2$} (2)
	(3) edge [loop left] node {$u_1$} (3)
	(1) edge node {$u_1$} (4)
	(2) edge node {$u_1$} (4)
	(4) edge [loop right] node {$u_1$} (4)
	;

        \end{tikzpicture}
		\caption{State transition diagram of an NFTS in the proof of Theorem \ref{thm3:opacityNFTS}.}
		\label{fig9:opaicty_NFTS}
\end{figure}

	Fourth, consider the NFTS as in Fig. \ref{fig10:opaicty_NFTS}, where $X=\{x_1,x_2,x_3,x_4,x_5,x_6\}=X_0$,
	$S=\{x_1\}$, $U=\{u_1,u_2\}$, and $Y=\{y_1,y_2,y_3,y_4\}$.
	It is easy to see that the NFTS is $1$-step opaque and current-state opaque, but not initial-state opaque.
	Hence, neither KSO nor CSO implies InitSO.
	\begin{figure}
        \centering
\begin{tikzpicture}[>=stealth',shorten >=1pt,auto,node distance=2.5 cm, scale = 0.8, transform shape,
	->,>=stealth,inner sep=2pt,state/.style={shape=circle,draw,top color=red!10,bottom color=blue!30},
	point/.style={circle,inner sep=0pt,minimum size=2pt,fill=}, 
	skip loop/.style={to path={-- ++(0,#1) -| (\tikztotarget)}}]
	\node[initial,state,accepting,initial by arrow,initial where=left] (1) {$x_1/y_1$};
	\node[initial,state,initial by arrow,initial where=above] (2) [below of = 1] {$x_2/y_1$};
	\node[initial,state,initial by arrow,initial where=above] (3) [left of = 2] {$x_3/y_2$};
	\node[initial,state,initial by arrow,initial where=right] (4) [right of = 1] {$x_4/y_3$};
	\node[initial,state,initial by arrow,initial where=above] (5) [right of = 2] {$x_5/y_3$};
	\node[initial,state,initial by arrow,initial where=above] (6) [right of = 5] {$x_6/y_4$};

	\path [->]
	(3) edge node {$u_1$} (1)
	(3) edge node {$u_1$} (2)
	(3) edge [loop left] node {$u_1$} (3)
	(1) edge node {$u_1$} (4)
	(2) edge node {$u_1$} (5)
	(4) edge node {$u_2$} (6)
	(5) edge node {$u_1$} (6)
	(6) edge [loop right] node {$u_1$} (6)
	;

        \end{tikzpicture}
		\caption{State transition diagram of an NFTS in the proof of Theorem \ref{thm3:opacityNFTS}.}
		\label{fig10:opaicty_NFTS}
\end{figure}

\end{proof}

\section{Verification of Opacity of NFTSs Using Two-Way Observers}\label{sec:checkingNFTS}

In Section \ref{sec:OpacityPreSimulation}, we propose several opacity-preserving (bi)simulation relations,
which could be used potentially to verify opacity for a class of infinite NTSs over their finite abstractions.
However, how can one verify opacity of finite abstractions? In this section, we use a two-way observer method
\cite{Yin2017TWObserverInfiniteStepOpacity} to verify various notions of opacity for NFTSs.
The two-way observer was proposed to verify infinite-step opacity and $K$-step opacity of DESs
in the framework of finite automata \cite{Yin2017TWObserverInfiniteStepOpacity}. To verify opacity of
NFTSs, we modify the method slightly. Next we introduce the technical details.

Note that the output function $h : X\to Y$ partitions $X$ into at most $|Y|$ observational equivalence classes.
For each $y\in Y$, we
denote by $X_y:= \{x\in X | h(x) = y\}$ the set of states whose output are $y$
and  denote by $X_{0,y}:= \{x\in X_0 | h(x) = y\}$ the set of initial states whose output are $y$.

Let $q\in 2^X$ be a set of states and $u\in U$ be an input.
We denote by
$\suc(q,u)$ the set of state that can be reached from $q$ under input $u$
and by
$\post(q,u)$ the set of state that can reach $q$ under input $u$, i.e.,
\begin{align}
\suc(q,u)&:=\{ x\in X| \exists x'\in q\text{ such that }(x',u,x)\in \to   \},\nonumber\\
\post(q,u)&:=\{ x\in X| \exists x'\in q\text{ such that }(x,u,x')\in \to   \}.
\end{align}

For an NFTS $(X,X_0,S,U,\to,Y,h)$, we define a new so-called verification NFTS (without secret states)
\begin{equation}\label{eqn2:opacity}
\Sigma_V=(X_V,X_{V,0},U_V,\to_V,Y_V,h_V),
\end{equation}
where
\begin{itemize}
  \item
	  $X_V\subseteq \{(q_1,q_2)\in 2^X\times 2^X|
	  \exists y_1,y_2\in Y\text{ such that }q_1\subseteq X_{y_1}\text{ and } q_2\subseteq X_{y_2}  \}$
  is the set of states;
  \item
  $X_{V,0}=
  \{ X_{0,y_1}\in 2^X| y_1\in Y\}
  \times
  \{X_{y_2} \in 2^X| y_2\in Y\}$
   is the set of initial states;
  \item
  $U_V=(U\times\{\epsilon\})\cup(\{\epsilon\}\times U)$ is the set of inputs;
  \item
  $\to_{V}\subseteq X_V\times U_V \times X_V$ is the transition relation
  defined as follows:
  For any $(q_1,q_2),(q_1',q_2')\in X_V$ and $u\in U$,
  \begin{itemize}
  \item
  $\left((q_1,q_2),(u,\epsilon),(q_1',q_2')\right)\in \to_V$
  if $q_2'=q_2$ and $\exists y\in Y$ such that $q_1'=\suc(q_1,u)\cap X_{y}\not=\emptyset$, and
  \item
  $\left((q_1,q_2),(\epsilon,u),(q_1',q_2')\right)\in \to_V$
  if $q_1'=q_1$ and $\exists y\in Y$ such that $q_2'=\post(q_2,u)\cap X_{y}\not=\emptyset$;
  \end{itemize}
  \item
  $Y_V=Y\times Y$ is the set of outputs;
  \item
  $h_V:X_V\to Y_V$ is defined
  for each $(q_1,q_2)\in X_V$ as
  $h_V((q_1,q_2))=(y_1,y_2)$,
  where $(y_1,y_2)$ is the unique pair such that $q_1\subseteq X_{y_1}$ and $q_1\subseteq X_{y_2}$.
  Particularly, we denote $h_{V,1}((q_1,q_2)):=y_1$ and $h_{V,2} (q_1,q_2)):=y_2$.
 \end{itemize}

For any given NFTS $\Sig$ as in Definition \ref{d4.1}, we construct the corresponding
NFTS $\Sig_V$ as in \eqref{eqn2:opacity}. For any given initial state $(q_0^1,q_0^2)$ of $\Sig_V$ in $X_{V,0}$,
an input sequence $\a=(u_0^1,u_0^2)\dots(u^1_{|\a|-1},u^2_{|\a|-1})$ in $(U_V)^*$, and states $(q_1^1,q_1^2),\dots,
(q_{|\a|}^1,q_{|\a|}^2)\in X_V$ such that
  \[
  (q_0^1,q_0^2)\xrightarrow{(u_0^1,u_0^2)}_V \cdots \xrightarrow{(u_{|\a|-1}^1,u_{|\a|-1}^2)}_V(q_{|\a|}^1,q_{|\a|}^2),
  \]
we have that the left component $q_0^1\xrightarrow{u_0^1}_V \cdots \xrightarrow{u_{|\a|-1}^1}_Vq_{|\a|}^1$
aggregates all runs of $\Sig$ starting from some initial state of $q_0^1$ over $u_0^1\dots u^1_{|\a|-1}$
and producing the output sequence $h_{V,1}( (q_0^1,q_0^2))\dots h_{V,1}((q_{|\a|}^1,q_{|\a|}^2))$
(note that repetition of states of the form $x\xrightarrow[]{\e}x$ may exist),
and the right component $q_0^2\xrightarrow{u_0^2}_V \cdots \xrightarrow{u_{|\a|-1}^2}_Vq_{|\a|}^2$
aggregates the mirror images of all runs of $\Sig$ ending at some state of $q_{0}^2$ over $u_{|\a|-1}^2\dots u^2_{0}$
and producing the output sequence $h_{V,2}( (q_{|\a|}^1,q_{|\a|}^2))\dots h_{V,2}((q_0^1,q_0^2))$
(note that repetition of states of the form $x\xrightarrow[]{\e}x$ may also exist).
Based on this direct observation and preliminary definitions, the following proposition holds.

\begin{proposition}\label{prop:main}
	For NFTS \eqref{eqn2:opacity}, for any input sequence $\a=(u_0^1,u_0^2)\dots (u_{|\a|-1}^1,u_{|\a|-1}^2)\in U_V^*$,
and any transitions
  \[
  (q_0^1,q_0^2)\xrightarrow{(u_0^1,u_0^2)}_V \cdots \xrightarrow{(u_{|\a|-1}^1,u_{|\a|-1}^2)}_V(q_{|\a|}^1,q_{|\a|}^2),
  \]
where $(q_0^1,q_0^2)\in X_{V,0}$, we have:
\begin{enumerate}
  \item
  $q_{|\a|}^1=\{x_{|\a|} \in X| \exists x_0\in q_0^1\text{ such that }
  x_0\xrightarrow{u_0^1} \cdots \xrightarrow{u_{|\a|-1}^1} x_{|\a|}\text{ and }
  \forall i\in [0,|\a|],h(x_i)=h_{V,1}((q_{i}^1,q_{i}^2)) \}$;
  \item
  $q_{|\a|}^2=\{x_{0} \in X| \exists x_{|\a|}\in q_{0}^2 \text{ such that }
  x_0\xrightarrow{u_{|\a|-1}^2} \cdots \xrightarrow{u_{0}^2} x_{|\a|}\text{ and }
  \forall i\in [0,|\a|],h(x_{|\a|-i})=h_{V,2}((q_{i}^1,q_{i}^2)) \}$.
\end{enumerate}
\end{proposition}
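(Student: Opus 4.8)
The plan is to prove the two identities simultaneously by induction on $|\a|$, essentially by unrolling the recursive definitions of $\suc$ and $\post$ one transition at a time. Two structural facts about $\Sig_V$ will carry most of the argument: (i) the two components of a state of $\Sig_V$ evolve \emph{independently} --- the first component is modified only by inputs of the form $(u,\e)$, via $\suc(\cdot,u)$ intersected with the forced output class, and the second only by inputs of the form $(\e,u)$, via $\post(\cdot,u)$ intersected with the forced output class; and (ii) $h_{V,1}((q_1,q_2))$ depends only on $q_1$ and $h_{V,2}((q_1,q_2))$ only on $q_2$, while membership in $X_V$ already guarantees $q_1\subseteq X_{h_{V,1}((q_1,q_2))}$ and $q_2\subseteq X_{h_{V,2}((q_1,q_2))}$, so that the ``current-step'' output-consistency constraint appearing in (1)--(2) is automatically satisfied. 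Throughout, a step with $u_i^1=\e$ (resp. $u_i^2=\e$) is read, as in the discussion preceding the proposition, as a stuttering repetition $x\xrightarrow{\e}x$ in the run.

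For the base case $|\a|=0$ I would use that $(q_0^1,q_0^2)\in X_{V,0}$, i.e. $q_0^1=X_{0,y_1}$ and $q_0^2=X_{y_2}$ for some $y_1,y_2\in Y$: every state of $q_0^1$ (resp. $q_0^2$) already has output $y_1=h_{V,1}((q_0^1,q_0^2))$ (resp. $y_2=h_{V,2}((q_0^1,q_0^2))$), and the empty run carries no transition constraint, so both right-hand sides collapse to $q_0^1$ and $q_0^2$ respectively; note also $q_0^1\subseteq X_0$, consistent with the role of the first component as a bundle of runs from initial states. For the inductive step I would look at the last transition $(q_n^1,q_n^2)\xrightarrow{(u_n^1,u_n^2)}_V(q_{n+1}^1,q_{n+1}^2)$. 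If $(u_n^1,u_n^2)=(u,\e)$, then $q_{n+1}^1=\suc(q_n^1,u)\cap X_{y}$ with $y=h_{V,1}((q_{n+1}^1,q_{n+1}^2))$; substituting the induction hypothesis for $q_n^1$ into the definition of $\suc$ appends the genuine transition $x_n\xrightarrow{u}x_{n+1}$ to every counted run, and intersecting with $X_y$ extends the output-consistency condition to index $n+1$, which is exactly the right-hand side of (1); meanwhile $q_{n+1}^2=q_n^2$, and since $u_n^2=\e$ and $h_{V,2}$ ignores the first component, the right-hand side of (2) is unchanged and matches the hypothesis. The case $(u_n^1,u_n^2)=(\e,u)$ is symmetric with the roles of $\suc$/(1) and $\post$/(2) interchanged.

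The one genuinely delicate point is the index bookkeeping in (2): the second component accumulates $\post$-operations in the order $u_0^2,u_1^2,\dots,u_{|\a|-1}^2$, so the corresponding run of $\Sig$ reads these inputs in the reverse order $u_{|\a|-1}^2,\dots,u_0^2$, its \emph{endpoint} (not its start) lies in $q_0^2$, and the output constraints shift by the substitution $j=|\a|-i$; I would fix this substitution once at the outset so that unrolling $q_{n+1}^2=\post(q_n^2,u)\cap X_y$ visibly \emph{prepends} a transition $x_0\xrightarrow{u}x_1$ at the front of the mirrored run. The $\e$-moves never interfere, since by fact (ii) they add no new constraint. Everything else is a mechanical rewriting of the definitions of $\suc$, $\post$, $\to_V$, $h_V$, and $X_{V,0}$, together with the already-noted independence of the two components.
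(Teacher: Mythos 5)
Your proposal is correct and follows essentially the same route as the paper: induction on $|\a|$, splitting the inductive step on whether the relevant input component is $\e$ (stuttering) or a genuine input, and unrolling $\suc$ (resp.\ $\post$) intersected with the forced output class; the paper proves item (1) this way and declares (2) symmetric, whereas you additionally spell out the index reversal for (2), which is a harmless elaboration rather than a different argument.
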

\begin{proof}
We prove, by induction on the length of $\a$, that (1) is true. The proof of (2) is similar to (1).
When $|\a|=0$, by definition,  we know that  $q_0^1=X_{0,y}$ for some $y\in Y$ and we have $h_{V,1}((q_0^1,q_0^2))=y$.
Therefore, we know that
\[
q_{0}^1=\{x_0 \in X| \exists x_0\in X_0\text{ such that }h(x_0)=h_{V,1}((q_{0}^1,q_{0}^2))\},
\]
i.e.,  the induction basis holds.

To proceed the induction, now let us assume that (1) holds for input sequence $\a=(u_0^1,u_0^2)\dots (u_{n-1}^1,u_{n-1}^2)\in U_V^*$
and we need to show that (1) still holds for input sequence $\a=(u_0^1,u_0^2)\dots (u_{n-1}^1,u_{n-1}^2)(u_{n}^1,u_{n}^2)\in U_V^*$.
For $(q_{n}^1,q_{n}^2)\xrightarrow{(u_{n}^1,u_{n}^2)}(q_{n+1}^1,q_{n+1}^2)$, we consider the following two cases:
$u_{n}^1=\epsilon$ or $u_{n}^1\not=\epsilon$.
If $u_{n}^1=\epsilon$, then we know that $u_{0}^1\dots u_{n-1}^1=u_{0}^1\dots u_{n-1}^1u_{n}^1$
and $q_{n}^1=q_{n+1}^1$. Therefore, the induction step holds immediately.
Hereafter, we consider the case that $u_{n}^1\not=\epsilon$.
By the definition of $\to_V$, we know that
$q_{n+1}=\suc(q_n,u)\cap X_{y}$ for some $y_{n+1}\in Y$, i.e.,
\begin{align*}
q_{n+1}^1=\{& x_{n+1}\in X| \exists x_n\in q_{n}\text{ s.t. }  x_n\xrightarrow{u_{n}} x_{n+1}\wedge h(x_{n+1})=y_{n+1}  \}
\end{align*}
and for any $x_{n+1}\in q_{n+1}$, we have
$h(x_{n+1})=h_{V,1}((q_{n+1}^1,q_{n+1}^2))=y_{n+1}$.
This together with the induction hypothesis implies that
\begin{align}
q_{n+1}^1=\left\{
	x_{n+1} \in X
\left|
\begin{array}{c c c}
&\exists x_0\in X_0\text{ s.t. }   \\
&x_0\xrightarrow{u_0^1} \cdots \xrightarrow{u_{n-1}^1} x_{n}\xrightarrow{u_{n}^1} x_{n+1} \wedge\\
&\forall i\in [0,n+1]:  h(x_i)\!=\!h_{V,1}((q_{i}^1,q_{i}^2))
\end{array}\right.
\!\!\!\right\},
\end{align}
which completes the proof.
\end{proof}

By Proposition \ref{prop:main}, we obtain the following four theorems used for verifying
the four types of opacity for NFTSs.

\begin{theorem}\label{thm1_opacity_TwoWayObserver}
NFTS $\Sig=(X,X_0,S,U,\to,Y,h)$ is  current-state opaque if and only if
\begin{equation}
\forall (q^1,q^2)\in X_V, q^1\not\subseteq S.
\end{equation}
\end{theorem}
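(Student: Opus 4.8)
The plan is to read the statement off Proposition~\ref{prop:main}. That proposition identifies, for every reachable state $(q^1,q^2)$ of $\Sigma_V$ reached along a $\Sigma_V$-path from $(q_0^1,q_0^2)\in X_{V,0}$, the first component $q^1$ with the set of \emph{current states} of all runs of $\Sigma$ that (i) start from an initial state whose output is $h_{V,1}((q_0^1,q_0^2))$, (ii) are driven by the $U$-sequence obtained by deleting the $\epsilon$'s from the left components of the path's input labels, and (iii) produce the output sequence $h_{V,1}((q_0^1,q_0^2))\cdots h_{V,1}((q^1,q^2))$. Once this dictionary is in place, ``$q^1\not\subseteq S$ for every reachable $(q^1,q^2)\in X_V$'' is essentially a restatement of Definition~\ref{def_current-state_opacity}. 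Throughout I treat $X_V$ as the accessible part of the construction, so that $h_V$ is well defined on it and Proposition~\ref{prop:main} applies; I also note that the first component of any reachable state visited while tracking a genuine run of $\Sigma$ is necessarily nonempty.

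For the ``if'' direction I would start from an arbitrary run $x_0x_1\cdots x_{|\alpha|}$ of $\Sigma$ over some $\alpha\in U^*$ with $x_{|\alpha|}\in S$, put $y_i:=h(x_i)$, and follow in $\Sigma_V$ the path from $(X_{0,y_0},X_{y_0})$ labelled $(\alpha(0),\epsilon),\dots,(\alpha(|\alpha|-1),\epsilon)$, defining $q_{i+1}^1:=\suc(q_i^1,\alpha(i))\cap X_{y_{i+1}}$. A one-line induction gives $x_i\in q_i^1$, so each $q_{i+1}^1$ is nonempty, the transitions are legal, and $(q_{|\alpha|}^1,X_{y_0})\in X_V$. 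The hypothesis yields some $x'\in q_{|\alpha|}^1\setminus S$; Proposition~\ref{prop:main}(1) turns $x'$ into a run over $\alpha$ from an initial state with output $h_{V,1}((q_i^1,q_i^2))$ at step $i$, and since $q_i^1\subseteq X_{y_i}$ and $q_i^1\neq\emptyset$ we get $h_{V,1}((q_i^1,q_i^2))=y_i=h(x_i)$. This is exactly the alternative run required by Definition~\ref{def_current-state_opacity}.

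For the ``only if'' direction I would take any reachable $(q^1,q^2)\in X_V$, project its defining $\Sigma_V$-path onto the left component to obtain $q_0^1=X_{0,y_0}\xrightarrow{\beta}q^1$ with $\beta\in U^*$, and invoke Proposition~\ref{prop:main}(1). Suppose $q^1\subseteq S$ and pick $x\in q^1$: then there is a run over $\beta$ from an initial state, ending at the secret state $x$, whose output sequence is the sequence of $h_{V,1}$-values along the path. Current-state opacity supplies a run over $\beta$ with the same output sequence ending at some $x'\notin S$; but then $x'$ again fits the description in Proposition~\ref{prop:main}(1), so $x'\in q^1\setminus S$, contradicting $q^1\subseteq S$. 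Hence $q^1\not\subseteq S$.

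The combinatorial content here is light---the induction $x_i\in q_i^1$ and the identity $h_{V,1}((q_i^1,q_i^2))=h(x_i)$---so the main obstacle is bookkeeping: keeping aligned the indices of the $\Sigma$-run, the left-moving steps of the $\Sigma_V$-path, and the indices in Proposition~\ref{prop:main}, and checking that the right-moving $(\epsilon,u)$ steps of $\Sigma_V$, which leave $q^1$ untouched, do not interfere with the argument. The three analogous theorems for initial-state, $K$-step and infinite-step opacity will follow by the same template with the quantifier over secret components adjusted according to Definitions~\ref{def_initial-state_opacity}, \ref{def_K-step_opacity} and \ref{def_infinite-step_opacity}.
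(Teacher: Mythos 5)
Your proposal is correct and follows essentially the same route as the paper: both directions hinge on Proposition~\ref{prop:main}(1) identifying the first component $q^1$ of a reachable state of $\Sigma_V$ with the set of current states consistent with the observed output sequence, after which the condition $q^1\not\subseteq S$ is a direct restatement of Definition~\ref{def_current-state_opacity}. The paper phrases both implications contrapositively while you argue one directly and one by contradiction, but the mathematical content is identical.
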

\begin{proof}($\Rightarrow$)
By contraposition:
suppose that there exists $(q^1,q^2)\in X_V$ such that $q^1\subseteq S$.
Let
  \[
  (q_0^1,q_0^2)\xrightarrow{(u_0^1,u_0^2)}_V \cdots \xrightarrow{(u_{n-1}^1,u_{n-1}^2)}_V(q_{n}^1,q_{n}^2)
  \]
be a sequence reaching $(q^1_n,q^2_n)=(q^1,q^2)$.
Then we consider the following sequence
$x_0\xrightarrow{u_0^1}\cdots \xrightarrow{u_{n-1}^1}x_n$ in $\Sigma$ such that $x_n\in S$ and  $h(x_i)=h_{V,1}((q_{i}^1,q_{i}^2)),\forall i=0,\dots,n$.
Since $q^1_n\subseteq S$,   by Proposition~\ref{prop:main},
for any transitions $x_0'\xrightarrow{u_0^1}\cdots \xrightarrow{u_{n-1}^1}x_n'$ in $\Sigma$
such that  $h(x_i')=h(x_i),\forall i=0,\dots,n$, we have $x_{n}'\in S$.
This implies that $\Sigma$ is not current-state opaque.

($\Leftarrow$)
By contraposition:
suppose that $\Sig$ is not current-state opaque.
Then we know that
there exists $x_0\xrightarrow{u_0}\cdots \xrightarrow{u_{n-1}}x_n$ in $\Sigma$,
such that
 (i) $x_{n}\in S$;  and
(ii) for any $x_0'\xrightarrow{u_0}\cdots \xrightarrow{u_{n-1}}x_n'$ in $\Sigma$
such that  $h(x_i)=h(x_i'),\forall i=0,\dots,n$, we have $x_{n}'\in S$.
Then let us consider the following sequence in $\Sig_V$
  \[
  (q_0^1,q_0^2)\xrightarrow{(u_0,\epsilon)}_V \cdots \xrightarrow{(u_{n-1},\epsilon)}_V(q_{n}^1,q_{n}^2)
  \]
where $h_{V,1}(q_i^1,q_i^2)=h(x_i)=h(x_i'),\forall i=0,\dots,n$.
By Proposition~\ref{prop:main}, we know that
 $q_{n}^1=\{x_{n} \in X| \exists x_0\in q_0^1,x_0\xrightarrow{u_0} \cdots \xrightarrow{u_{n-1}} x_{n}  \text{ and }
  \forall i\in [0,n],  h(x_i)=h_{V,1}((q_{i}^1,q_{i}^2))  \}$.
This together with (ii) above imply that $q_n^1\subseteq S$.
\end{proof}

\begin{theorem}\label{thm2_opacity_TwoWayObserver}
NFTS $\Sig=(X,X_0,S,U,\to,Y,h)$ is  initial-state opaque if and only if
\begin{equation}
	\forall (q^1,q^2)\in X_V,q^2\cap X_0\ne\emptyset\Rightarrow q^2\cap X_0\not\subseteq S.
\end{equation}
\end{theorem}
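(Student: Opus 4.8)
The plan is to follow the template of the proof of Theorem~\ref{thm1_opacity_TwoWayObserver}: prove both implications by contraposition, and use Proposition~\ref{prop:main}(2) to read the second component $q^2$ of a state of $\Sig_V$ as a \emph{backward} (initial-)state estimate. Since we restrict attention to total NTSs, a maximal run over $\a$ is simply a run $x_0x_1\dots x_{|\a|}$, so the negation of Definition~\ref{def_initial-state_opacity} reads: there exist $x_0^*\in X_0\cap S$, $\a\in U^*$, and a run $x_0^*x_1\dots x_{|\a|}$ over $\a$ such that every run $x_0'x_1'\dots x_{|\a|}'$ over $\a$ with $h(x_j')=h(x_j)$ for all $j\in[0,|\a|]$ satisfies $x_0'\in S$. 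As in the proof of Theorem~\ref{thm1_opacity_TwoWayObserver}, $X_V$ is understood to be the accessible part of $\Sig_V$, so every $(q^1,q^2)\in X_V$ is reachable from some $(q_0^1,q_0^2)\in X_{V,0}$.

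For $(\Rightarrow)$ I would argue by contraposition. Suppose there is $(q^1,q^2)\in X_V$ with $q^2\cap X_0\ne\emptyset$ and $q^2\cap X_0\subseteq S$. Fix a trajectory of $\Sig_V$ from some $(q_0^1,q_0^2)\in X_{V,0}$ to $(q^1,q^2)$, let $\beta\in U^*$ be obtained by deleting the $\epsilon$'s from the sequence of second-component inputs, and let $z_0z_1\dots z_{|\beta|}$ be the sequence of $h_{V,2}$-labels that Proposition~\ref{prop:main}(2) attaches to the backward run (with consecutive duplicates removed). Since $q_0^2$ is a full output class, Proposition~\ref{prop:main}(2) yields
\[
q^2=\bigl\{\,x_0'\in X\ \big|\ \exists\ x_0'\xrightarrow{\beta}x_{|\beta|}'\ \text{in}\ \Sig\ \text{with}\ h(x_j')=z_j\ \text{for all}\ j\in[0,|\beta|]\,\bigr\}.
\]
Pick $x_0^*\in q^2\cap X_0$; then $x_0^*\in S$, and the witnessing run for $x_0^*\in q^2$ is a run over $\beta$ whose output sequence is $z_0\dots z_{|\beta|}$. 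Now every run $x_0'x_1'\dots x_{|\beta|}'$ over $\beta$ with $h(x_j')=z_j$ for all $j$ has $x_0'\in X_0$ and $x_0'\in q^2$, hence $x_0'\in q^2\cap X_0\subseteq S$; so there is no such run starting outside $S$, and $\Sig$ is not initial-state opaque.

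For $(\Leftarrow)$, again by contraposition, suppose $\Sig$ is not initial-state opaque; fix $x_0^*\in X_0\cap S$, $\a\in U^*$, and a run $x_0^*x_1\dots x_{|\a|}$ over $\a$ as above, and set $y_j:=h(x_j)$. Consider the trajectory of $\Sig_V$ that starts at $(X_{0,y_0},X_{y_{|\a|}})\in X_{V,0}$ and uses only second-component moves, where the $k$-th move ($k=1,\dots,|\a|$) reads input $\a(|\a|-k)$ and selects the new output class $X_{y_{|\a|-k}}$. Each such move is enabled: inductively the state $x_{|\a|-k+1}$ lies in the current second component and $x_{|\a|-k}\xrightarrow{\a(|\a|-k)}x_{|\a|-k+1}$ with $h(x_{|\a|-k})=y_{|\a|-k}$, so $x_{|\a|-k}\in\post(\cdot,\a(|\a|-k))\cap X_{y_{|\a|-k}}$, which is therefore nonempty. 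Let $(q^1,q^2)$ be the terminal state. By Proposition~\ref{prop:main}(2), $q^2=\{x_0'\mid\exists\ x_0'\xrightarrow{\a}x_{|\a|}'\ \text{in}\ \Sig\ \text{with}\ h(x_j')=y_j\ \text{for all}\ j\}$, which contains $x_0^*$; hence $q^2\cap X_0\ne\emptyset$, while by the choice of the run every element of $q^2\cap X_0$ is secret, so $q^2\cap X_0\subseteq S$. Thus the right-hand condition fails.

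Putting the two directions together, the right-hand condition is equivalent to the nonexistence of $(q^1,q^2)\in X_V$ with $\emptyset\ne q^2\cap X_0\subseteq S$, which is exactly initial-state opacity of $\Sig$. The step I expect to be the main obstacle is the backward construction in $(\Leftarrow)$: correctly reversing the input word, accounting for the inserted $\epsilon$-moves and the index shift of the $h_{V,2}$-labels in Proposition~\ref{prop:main}(2), and verifying that the relevant $\post$-intersections are nonempty at every stage so that the claimed $\Sig_V$-trajectory actually exists.
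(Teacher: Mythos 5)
Your proof is correct and follows essentially the same route as the paper's: both directions by contraposition, reading the second component $q^2$ as a backward initial-state estimate via Proposition~\ref{prop:main}(2), with the forward direction pulling a witness run out of a bad state $(q^1,q^2)$ and the reverse direction building a purely second-component trajectory of $\Sig_V$ along the reversed input word. Your added care about the $\epsilon$-moves, the fact that $q_0^2$ is a full output class, and the nonemptiness of the $\post$-intersections only makes explicit what the paper leaves implicit.
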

\begin{proof}
($\Rightarrow$)
By contraposition:
suppose that there exists $(q^1,q^2)\in X_V$ such that $q^2\cap X_0\ne\emptyset$ and $ q^2\cap X_0 \subseteq S$.
Let
  \[
  (q_0^1,q_0^2)\xrightarrow{(u_0^1,u_0^2)}_V \cdots \xrightarrow{(u_{n-1}^1,u_{n-1}^2)}_V(q_{n}^1,q_{n}^2)
  \]
be a sequence reaching $(q^1_n,q^2_n)=(q^1,q^2)$.
Then we consider  sequence
$x_0\xrightarrow{u_{n-1}^2}\cdots \xrightarrow{u_{0}^2}x_n$ in $\Sigma$ such that $x_0\in S$ and  $h(x_{n-i})=h_{V,2}((q_{i}^1,q_{i}^2)),\forall i=0,\dots,n$.
Such a sequence is well defined since $q_{i+1}^2\subseteq \post(q_i^2,u_i^2)$.
Then by Proposition~\ref{prop:main}, we know that
for any $x_0'\xrightarrow{u_{n-1}^2} \cdots \xrightarrow{u_{0}^2} x_{n}' $ in $\Sigma$
such that  $h(x_i')=h(x_i)=h_{V,2}((q_{n-i}^1,q_{n-i}^2)),\forall i=0,\dots,n$, we have $x_{0}'\in S$.
This implies that $\Sigma$ is not initial-state opaque.

($\Leftarrow$)
By contraposition:
suppose that $\Sig$ is not initial-state opaque.
Then we know that
there exists $x_0\xrightarrow{u_0}\cdots \xrightarrow{u_{n-1}}x_n$ in $\Sigma$,
such that
 (i) $x_{0}\in S$;  and
(ii) for any $x_0'\xrightarrow{u_0}\cdots \xrightarrow{u_{n-1}}x_n'$ in $\Sigma$
such that  $h(x_i)=h(x_i'),\forall i=0,\dots,n$, we have $x_{0}'\in S$.
Then let us consider the following sequence in $\Sig_V$
  \[
  (q_0^1,q_0^2)\xrightarrow{(\epsilon,u_{n-1})}_V \cdots \xrightarrow{(\epsilon,u_{0})}_V(q_{n}^1,q_{n}^2)
  \]
where $h_{V,2}(q_{n-i}^1,q_{n-i}^2)=h(x_i)=h(x_i'),\forall i=0,\dots,n$.
By Proposition~\ref{prop:main}, we know that
$q_{n}^2=\{x_{0} \in X: \exists x_{n}\in q_{0}^2 \text{ such that }
 x_0\xrightarrow{u_{n-1}^2} \cdots \xrightarrow{u_{0}^2} x_{n}\text{ and }
 \forall i\in [0,n],  h(x_{n-i})=h_{V,2}((q_{i}^1,q_{i}^2))\}$.
This together with (ii) above imply that $q_n^2\cap X_0\ne\emptyset$ and $q_n^2\cap X_0 \subseteq S$.
\end{proof}

\begin{remark}
Let us discuss the complexity for the verifications of current-state opacity
and initial-state opacity using the above theorems.
In the worst case, $\Sig_V$ contains at most $4^{|X|}$ states and $2|Y||U|4^{|X|}$ transitions.
Also, we note that  $\Sig_V$ is a pure shuffle in the sense that its first and its second components are independent.
Therefore, to verify current-state opacity (respectively, initial-state opacity),
we just need to construct the first component (respectively, the second component) of $\Sig_V$.
Hence, the time complexity for the verifications of current-state opacity and initial-state opacity are
both $O(|Y||U|2^{X})$.
\end{remark}

\begin{theorem}\label{thm3_opacity_TwoWayObserver}
NFTS $\Sig=(X,X_0,S,U,\to,Y,h)$ is infinite-step opaque if and only if
\begin{equation}
\forall (q^1,q^2)\in X_V, q^1\cap q^2\neq\emptyset \Rightarrow q^1\cap q^2\not\subseteq S.
\end{equation}
\end{theorem}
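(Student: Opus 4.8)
The plan is to prove both implications by contraposition, in direct parallel with the proofs of Theorems~\ref{thm1_opacity_TwoWayObserver} and \ref{thm2_opacity_TwoWayObserver}, but now exploiting \emph{both} components of $\Sig_V$ simultaneously. The governing intuition is that, for a run ``cut'' at some intermediate step $i$, the left component of $\Sig_V$ can be driven so as to record all states consistent with the observed output \emph{prefix} $h(x_0)\cdots h(x_i)$, while the right component records all states from which the observed output \emph{suffix} $h(x_i)\cdots h(x_N)$ can be produced; thus $q^1\cap q^2$ is exactly the set of states that may occur at step $i$ given the full observed output sequence. Infinite-step opacity at step $i$ then says precisely that this set is never contained in $S$ when it is nonempty, which is the stated condition. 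Note also that whenever $q^1\cap q^2\ne\emptyset$ the two output classes carried by $q^1$ and $q^2$ must coincide, so concatenating a prefix run ending in some $x^*\in q^1\cap q^2$ with a suffix run starting at $x^*$ is always output-consistent.

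\emph{(If.)} Suppose $\Sig$ is not infinite-step opaque, witnessed by a run $x_0\xrightarrow{u_0}\cdots\xrightarrow{u_{N-1}}x_N$ and an index $i\in[0,N]$ with $x_i\in S$ such that every run over $u_0\cdots u_{N-1}$ producing the output sequence $h(x_0)\cdots h(x_N)$ has its step-$i$ state in $S$. I would realize a corresponding state of $\Sig_V$ by first performing the $i$ left moves $(u_0,\epsilon),\dots,(u_{i-1},\epsilon)$ from $X_{0,h(x_0)}$ and selecting, at move $j$, the output class $h(x_{j+1})$; this yields a left component $q^1$ with $q^1\subseteq X_{h(x_i)}$ and $x_i\in q^1$. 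Then I would perform the $N-i$ right moves $(\epsilon,u_{N-1}),\dots,(\epsilon,u_i)$ from $X_{h(x_N)}$, selecting the output classes $h(x_{N-1}),\dots,h(x_i)$ in turn; this yields a right component $q^2$ with $q^2\subseteq X_{h(x_i)}$ and $x_i\in q^2$. Each move used exists because the witnessing run keeps the relevant state inside the evolving set (a routine induction), and since $q^1,q^2\subseteq X_{h(x_i)}$ the pair $(q^1,q^2)$ is a legitimate state of $\Sig_V$, with $x_i\in q^1\cap q^2$. By Proposition~\ref{prop:main}, $q^1$ is the forward-reachable set from $X_0$ along $u_0\cdots u_{i-1}$ consistent with the prefix outputs, and $q^2$ is the set of states from which $u_i\cdots u_{N-1}$ can be run consistently with the suffix outputs; concatenating at any $x^*\in q^1\cap q^2$ produces a run over $u_0\cdots u_{N-1}$ with output $h(x_0)\cdots h(x_N)$ and step-$i$ state $x^*$, so the hypothesis forces $x^*\in S$. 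Hence $q^1\cap q^2\ne\emptyset$ and $q^1\cap q^2\subseteq S$, i.e., the right-hand condition fails.

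\emph{(Only if.)} Suppose there is $(q^1,q^2)\in X_V$ with $q^1\cap q^2\ne\emptyset$ and $q^1\cap q^2\subseteq S$. Fix a path in $\Sig_V$ from an initial state to $(q^1,q^2)$; since $\Sig_V$ is a pure shuffle I may assume its left moves come first, carrying inputs $v_0,\dots,v_{p-1}$ and producing a prefix output sequence, followed by its right moves carrying inputs $r_0,\dots,r_{m-1}$. By Proposition~\ref{prop:main}, every $x^*\in q^1\cap q^2\subseteq S$ admits a run $x_0\xrightarrow{v_0}\cdots\xrightarrow{v_{p-1}}x^*$ from an initial state producing that prefix output sequence, and a path $x^*\xrightarrow{r_{m-1}}\cdots\xrightarrow{r_0}x_m$ (the right inputs read in reverse) ending at a state $x_m$ whose output equals the right-initial class and producing the suffix output sequence; their concatenation is a run over $\a:=v_0\cdots v_{p-1}r_{m-1}\cdots r_0$ whose step-$p$ state is $x^*\in S$. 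I would then show that \emph{any} run over $\a$ producing the same output sequence must have its step-$p$ state in $q^1$ (it is reachable from the left-initial set along $v_0\cdots v_{p-1}$ consistently with the first $p+1$ outputs) and in $q^2$ (it runs $r_{m-1}\cdots r_0$ from that state consistently with the remaining outputs, ending at a state of the right-initial output class), hence in $q^1\cap q^2\subseteq S$; so every such step-$p$ state is secret and $\Sig$ fails infinite-step opacity at step $p$.

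The step I expect to be the main obstacle is the indexing/reversal bookkeeping on the right component: one must keep straight that $q^2$, although computed inside $\Sig_V$ by reading inputs and outputs ``backwards'', corresponds to genuine forward runs of $\Sig$, and one must verify in the ``only if'' direction that the alternative run furnished by opacity is actually \emph{captured} by the same pair $(q^1,q^2)$, i.e., that its step-$p$ state necessarily lies in $q^1\cap q^2$. Both points are routine once Proposition~\ref{prop:main} is applied with the correct indexing, exactly as in the proof of Theorem~\ref{thm2_opacity_TwoWayObserver}; the remaining checks (existence of the $\Sig_V$-moves used in the ``if'' part, well-definedness of the constructed $\Sig_V$-states) are immediate from the definitions of $\suc$, $\post$, and $\to_V$.
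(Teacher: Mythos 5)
Your proposal is correct and follows essentially the same route as the paper's proof: both directions by contraposition, building the $\Sig_V$ path with all left moves before all right moves, and invoking Proposition~\ref{prop:main} to identify $q^1\cap q^2$ with the set of possible step-$i$ states consistent with the full observed output sequence. Your bookkeeping (separate counts $p$ and $m$ for left and right moves, the explicit appeal to the pure-shuffle property, and the remark that a nonempty intersection forces the two output classes to coincide) is if anything slightly more careful than the paper's, which tacitly assumes equally many left and right moves in its notation.
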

\begin{proof}($\Rightarrow$)
By contraposition:
suppose that there exists $(q^1,q^2)\in X_V$ such that
$\emptyset\not= q^1\cap  q^2 \subseteq S$.
Let
  \[
  (q_0^1,q_0^2)\xrightarrow{(u_0^1,u_0^2)}_V \cdots \xrightarrow{(u_{n-1}^1,u_{n-1}^2)}_V(q_{n}^1,q_{n}^2)
  \]
be a sequence reaching $(q^1_n,q^2_n)=(q^1,q^2)$.
Then we consider  sequence
\[
x_0\xrightarrow{u_{0}^1}\cdots \xrightarrow{u_{n-1}^1}x_n
\xrightarrow{u_{n-1}^2}\cdots \xrightarrow{u_{0}^2}x_{2n}
\]
in $\Sigma$ such that $x_n\in q^1\cap  q^2 \subseteq S$,
$h(x_i)=h_{V,1}((q_{i}^1,q_{i}^2)),\forall i=0,\dots,n$ and
$h(x_{2n-i})=h_{V,2}((q_{i}^1,q_{i}^2)),\forall i=0,\dots,n$.
This sequence is well defined since $\emptyset\not= q^1\cap  q^2 \subseteq S$.
Then by Proposition~\ref{prop:main}, we know that
for any $x_0'\xrightarrow{u_{0}^1}\cdots \xrightarrow{u_{n-1}^1}x_n'
\xrightarrow{u_{n-1}^2}\cdots \xrightarrow{u_{0}^2}x_{2n}'$ in $\Sigma$
such that  $h(x_i')=h(x_i),\forall i=0,\dots,2n$, we have $x_{n}'\in S$.
This implies that $\Sigma$ is not infinite-step opaque.

($\Leftarrow$)
By contraposition:
suppose that $\Sig$ is not infinite-step opaque.
Then we know that
there exists $x_0\xrightarrow{u_0}\cdots \xrightarrow{u_{n-1}}x_n\xrightarrow{u_n}\cdots \xrightarrow{u_{n+m-1}}x_{n+m}$ in $\Sigma$,
such that
 (i) $x_{n}\in S$;  and
(ii) for any $x_0'\xrightarrow{u_0}\cdots \xrightarrow{u_{n-1}}x_n'\xrightarrow{u_n}\cdots \xrightarrow{u_{n+m-1}}x_{n+m}'$ in $\Sigma$
such that  $h(x_i)=h(x_i'),\forall i=0,\dots,n$, we have $x_{n}'\in S$.
Then let us consider the following sequence in $\Sig_V$
\begin{align}
  (q_0^1,q_0^2)
  \xrightarrow{(u_{0},\epsilon)}_V \cdots \xrightarrow{(u_{n-1},\epsilon)}_V(q_{n}^1,q_{n}^2)
  \xrightarrow{(\epsilon,u_{n+m-1})}_V \cdots \xrightarrow{(\epsilon,u_{n})}_V(q_{n+m}^1,q_{n+m}^2),
\end{align}
where $h_{V,1}(q_i^1,q_i^2)=h(x_i),\forall i=0,\dots,n$
and $h_{V,2}(q_{n+i}^1,q_{n+i}^2)=h(x_{n+m-i}),\forall i=0,\dots,m$.
By Proposition~\ref{prop:main}, we know that
$q_{n+m}^1=
\{x_{n} \in X| \exists x_0\in q_0^1,x_0\xrightarrow{u_0} \cdots \xrightarrow{u_{n-1}} x_{n}  \text{ and }
\forall i\in [0,n],  h(x_i)=h_{V,1}((q_{i}^1,q_{i}^2))  \}$,
and
$q_{n+m}^2=\{x_{n} \in X| \exists x_{n+m}\in q_{0}^2 \text{ and }
x_n\xrightarrow{u_{n}^2} \cdots \xrightarrow{u_{n+m-1}^2} x_{n+m}\text{ and }
  \forall i\in [0,m],  h(x_{n+i})=h_{V,2}((q_{n+m-i}^1,q_{i}^2)) \}$.
This together with (ii) above imply that $\emptyset\not= q_n^1\cap q_n^2\subseteq S$.
\end{proof}

\begin{theorem}\label{thm4_opacity_TwoWayObserver}
NFTS $\Sig=(X,X_0,S,U,\to,Y,h)$ is  \emph{not}  $K$-step opaque if and only if there exists a sequence
$x_{V,0}\xrightarrow{(u_0^1,u_0^2)\dots (u_{n-1}^1,u_{n-1}^2)}_V(q^1,q^2)$, where $x_{V,0}\in X_{V,0}$ such that
\begin{equation}
\emptyset \neq q^1\cap q^2 \subseteq S
\text{ and }
|u_0^2\dots u_{n-1}^2|\leq K.
\end{equation}
\end{theorem}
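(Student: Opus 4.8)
The plan is to adapt the proof of Theorem~\ref{thm3_opacity_TwoWayObserver} almost verbatim, the only genuinely new ingredient being a count of how many moves of the second component are used: this count will turn out to equal the distance, in time steps, between the offending secret state and the end of the observed run, which is exactly what Definition~\ref{def_K-step_opacity} constrains to be at most $K$. Throughout I will use Proposition~\ref{prop:main} and the ``pure shuffle'' structure of $\Sigma_V$: because the first and second components evolve independently, any state $(q^1,q^2)$ reachable by some sequence is also reachable by a sequence that performs all of its $(u,\epsilon)$-moves first and all of its $(\epsilon,u)$-moves afterwards, and both sequences use the same second-component input word up to reordering, hence the same value of $|u_0^2\cdots u_{n-1}^2|$.

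For the ``if'' direction I would start from a sequence $(q_0^1,q_0^2)\xrightarrow{(u_0^1,u_0^2)}_V\cdots\xrightarrow{(u_{n-1}^1,u_{n-1}^2)}_V(q^1,q^2)$ with $(q_0^1,q_0^2)\in X_{V,0}$, $\emptyset\neq q^1\cap q^2\subseteq S$, and $m_2:=|u_0^2\cdots u_{n-1}^2|\leq K$. Writing $m_1:=|u_0^1\cdots u_{n-1}^1|$, I would pick $x\in q^1\cap q^2$ and invoke Proposition~\ref{prop:main}(1) to get a run $\sigma_1$ of $\Sigma$ from an initial state to $x$ of length $m_1$ whose outputs are the successive $h_{V,1}$-labels, and Proposition~\ref{prop:main}(2) to get a run $\sigma_2$ of $\Sigma$ starting at $x$ of length $m_2$ whose outputs are the $h_{V,2}$-labels. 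Concatenating, $\rho:=\sigma_1\sigma_2$ is a run over some $\alpha\in U^{m_1+m_2}$ with $\rho(m_1)=x\in S$, and from $m_2\leq K$ it follows that $i:=m_1$ lies in $[\max\{0,|\alpha|-K\},|\alpha|]$. Finally, for any run $\rho'$ over $\alpha$ that is output-equivalent to $\rho$, its length-$m_1$ prefix forces $\rho'(m_1)\in q^1$ and its suffix forces $\rho'(m_1)\in q^2$, so $\rho'(m_1)\in q^1\cap q^2\subseteq S$; this is precisely the failure of $K$-step opacity in Definition~\ref{def_K-step_opacity}.

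For the ``only if'' direction I would start from a witness to the failure of $K$-step opacity: $x_0\in X_0$, $\alpha\in U^*$, a run $\rho=x_0x_1\cdots x_{|\alpha|}$, and an index $i\in[\max\{0,|\alpha|-K\},|\alpha|]$ with $x_i\in S$ such that every output-equivalent run $\rho'$ has $\rho'(i)\in S$. Setting $m_1:=i$ and $m_2:=|\alpha|-i\leq K$, I would build the observer sequence that first makes $m_1$ left moves reading $\alpha(0),\dots,\alpha(i-1)$ with chosen observation classes $h(x_0),\dots,h(x_i)$, and then makes $m_2$ right moves reading $\alpha(|\alpha|-1),\dots,\alpha(i)$ with chosen observation classes $h(x_{|\alpha|-1}),\dots,h(x_i)$; the run $\rho$ itself certifies that every intermediate $\suc(\cdot)\cap X_{(\cdot)}$ and $\post(\cdot)\cap X_{(\cdot)}$ used is nonempty, so the sequence is legal in $\Sigma_V$. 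Calling $(q^1,q^2)$ the state reached, Proposition~\ref{prop:main} identifies $q^1$ with the observationally consistent forward-reachable set over $\alpha(0)\cdots\alpha(i-1)$ and $q^2$ with the observationally consistent set of states that can generate $\alpha(i)\cdots\alpha(|\alpha|-1)$; thus $x_i\in q^1\cap q^2\neq\emptyset$, and any $x\in q^1\cap q^2$ yields, by gluing a $q^1$-witness prefix to a $q^2$-witness suffix, an output-equivalent run $\rho'$ with $\rho'(i)=x$, forcing $x\in S$. Hence $q^1\cap q^2\subseteq S$, and the number of right moves is $m_2=|\alpha|-i\leq K$, so the constructed sequence has exactly the claimed form.

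The step I expect to be fussiest is the index bookkeeping: one must keep the length of the second-component word pinned to $|\alpha|-i$ (the gap between the secret state and the observed horizon) while matching up the labels $h_{V,1}$ on the left and $h_{V,2}$ on the right with the corresponding outputs of $\rho$, and one must make sure that $\epsilon$-moves (which induce only harmless stuttering in Proposition~\ref{prop:main}) do not corrupt this count. Once the ``left moves first, then right moves'' normal form is in hand via the pure-shuffle property, the remainder is a transcription of the argument for infinite-step opacity.
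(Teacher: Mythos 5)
Your proposal is correct and follows essentially the same route as the paper: the paper likewise disposes of this theorem by reusing the proof of Theorem~\ref{thm3_opacity_TwoWayObserver}, noting only that the length $m$ of the second-component input word equals the number of steps separating the secret state from the end of the run, so that the constraint $|u_0^2\dots u_{n-1}^2|\le K$ corresponds exactly to the index window $i\in[\max\{0,|\a|-K\},|\a|]$ in Definition~\ref{def_K-step_opacity}. Your version just spells out the bookkeeping that the paper leaves implicit.
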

\begin{proof}
The proof is similar to the proof of Theorem~\ref{thm3_opacity_TwoWayObserver}.
Specifically, in the ``$\Rightarrow$" direction, since $|u_0^2\dots u_{n-1}^2|\leq K$ and
infinite-step opacity is stronger than $K$-step opacity, one concludes ``$\Rightarrow$" direction from Theorem~\ref{thm3_opacity_TwoWayObserver}.
Similarly, in the ``$\Leftarrow$" direction,
the violation of $K$-step opacity allows us to choose $m$ such that $m\leq K$.
\end{proof}
\begin{remark}
To verify infinite-step opacity, we need to construct automaton $\Sig_V$ completely for both components.
Hence, the complexity is $O(|Y||U|4^{|X|})$.
To verify $K$-step opacity, we need to construct
parts of $\Sig_V$ that can be reached from initial states within $K$-steps in the second component.
Therefore, the complexity for verifying $K$-step opacity is
$O(\min\{ 2^{|X|},(|U||Y|)^K\}|U||Y|2^{|X|})$.

\end{remark}

\section{Example}
\label{sec:application}

In this section, we show an example to illustrate how to use the main results of this paper to verify
the opacity of an infinite transition system.

Consider the following discrete-time control system
\begin{equation}\label{eqn30:opacity}
	\begin{split}
		x_1(t+1) =& f_1(x_1(t),x_2(t)), \\
		x_2(t+1) =& f_2(x_1(t),x_2(t)), \\
	y(t) =& h(x_1(t),x_2(t)),
	\end{split}
\end{equation}
where $t\in\N$; $x_1(t),x_2(t),y(t)\in\R$;
\begin{align*}
	\begin{bmatrix}f_1(x_1,x_2)\\f_2(x_1,x_2)\end{bmatrix}&=\left\{
\begin{array}[]{ll}
	\begin{bmatrix}
	0 & -2\\ 2 & 0
\end{bmatrix}\begin{bmatrix}
		x_1\\ x_2
	\end{bmatrix}&\text{ if }x_1> 0,x_2\ge 0,\\
	\begin{bmatrix}
	0 & -\frac{1}{2}\\ \frac{1}{2} & 0
\end{bmatrix}\begin{bmatrix}
		x_1\\ x_2
	\end{bmatrix}&\text{ if }x_1\le 0,x_2> 0,\\
	\begin{bmatrix}
	0 & -1\\ 1 & 0
\end{bmatrix}\begin{bmatrix}
		x_1\\ x_2
	\end{bmatrix}&\text{ otherwise,}
\end{array}\right.
\end{align*}
\begin{align}\label{eqn6:opacity}
	h(x_1,x_2)&=\left\{
	\begin{array}[]{ll}
		1 &\text{ if }(x_1,x_2)\in A_1\cup A_2,\\
		2 &\text{ if }(x_1,x_2)\in B_1\cup B_2,\\
		3 &\text{ if }(x_1,x_2)\in C_1\cup C_2,\\
		4 &\text{ if }(x_1,x_2)\in D_1\cup D_2,\\
		5 &\text{ if }(x_1,x_2)\in E,
	\end{array}
	\right.
\end{align}
\begin{align*}
	A_1 &= \left\{(x_1,x_2)\in\R^2\left|0<x_1\le \frac{1}{2},0\le x_2\le \frac{1}{2}\right.\right\},\\
	A_2 &= \left\{(x_1,x_2)\in\R^2\left|0<x_1\le 1,0\le x_2\le 1\right.\right\}\setminus A_1,\\
	B_1&= \left\{(x_1,x_2)\in\R^2\left| -1\le x_1\le 0,0<x_2\le 1 \right.\right\},\\
	B_2&= \left\{(x_1,x_2)\in\R^2\left| -2\le x_1\le 0,0<x_2\le 2 \right.\right\}\setminus B_1,\\
	C_1 &= \left\{(x_1,x_2)\in\R^2\left|-\frac{1}{2}\le x_1<0 ,-\frac{1}{2}\le x_2\le 0\right.\right\},\\
	C_2 &= \left\{(x_1,x_2)\in\R^2\left|-1\le x_1<0,-1\le x_2\le 0\right.\right\}\setminus C_1,\\
	D_1 &= \left\{(x_1,x_2)\in\R^2\left|0\le x_1\le \frac{1}{2} ,-\frac{1}{2}\le x_2< 0\right.\right\},\\
	D_2 &= \left\{(x_1,x_2)\in\R^2\left|0\le x_1\le 1,-1\le x_2<0\right.\right\}\setminus D_1,\\
	E &= \R^2\setminus(A_1\cup A_2\cup B_1\cup B_2\cup C_1\cup C_2\cup D_1\cup D_2).
\end{align*}

If we let each state of \eqref{eqn30:opacity} be initial and choose the secret state set $A_1=:S$, then
\eqref{eqn30:opacity} can be written as the following NTS
\begin{equation}
	\Sig_{A_1}=(\R^2,\R^2,A_1,\{u\},\to,Y,h),
\end{equation}
where for all $x,x'\in \R^2$,
$(x,u,x')\in\to$ if and only if $x'=f(x)$, $f=(f_1,f_2)$; $Y=\{1,2,3,4,5\}$; and $h$ is as in \eqref{eqn6:opacity}.

Next, we use the main results obtained in this paper to verify the opacity of this system.
We define an equivalence relation $\sim$ on $\R^2$ for all $x,x'\in\R^2$ as $(x,x')\in\sim$ if and only
if $x$ and $x'$ both belong to the same one of $A_1,A_2,B_1,B_2,C_1,C_2,D_1,D_2$, and $E$.
Then $\sim$ is an InfSOP bisimulation relation between $\Sig_{A_1}$ and itself
according to Definition \ref{def_OpacityPreservingBisimulation}.
By Theorem \ref{thm2:opacityNFTS}, the corresponding quotient relation $\sim_\mathsf{Q}=\{(x,[x])|x\in\R^2\}$
is an InfSOP bisimulation relation between $\Sig_{A_1}$ and its quotient system
\begin{equation}\label{eqn5:opacity}
	\Sig_{A_1\sim}=(X,X,\{A_1\},\{u\},\to_{\sim},Y,h_{\sim}),
\end{equation}
where $X=\{A_1,A_2,B_1,B_2,C_1,C_2,D_1,D_2,E\}$,
$\to_{\sim}=\{(A_i,u,B_i),(B_i,u,C_i),(C_i,u,D_i),(D_i,u,A_i),(E,u,E)|i=1,2\}$,
for each $\bar x\in X$, $h_{\sim}(\bar x)=h(x)$, where $x\in \bar x$.

It is not difficult to obtain that system $\Sig_{A_1\sim}$ is infinite-step opaque,
so $\Sig_{A_1}$ is also infinite-step opaque by Theorem \ref{thm1:opacityNFTS}.
Then by Theorem \ref{thm3:opacityNFTS}, it is also initial-state opaque, current-state opaque, and $K$-step opaque
for any positive integer $K$.

\section{Conclusion}\label{sec:conc}

In this paper, we proposed several notions of opacity-preserving (bi)simulation relations
from an NTS to another NTS, and used quotient system construction to potentially compute such relations.
Hence, although the verification of opacity of NTSs is generally undecidable,
if we find such a relation from an NTS to an NFTS, we can
verify the opacity of the NTS over the NFTS which is decidable. We also propose a two-way observer method to verify
the opacity of NFTSs.
In addition, we verify the opacity of an infinite transition system to illustrate the main results in Section
\ref{sec:application}.

Although the construction of proposed relations here based on quotient systems can be used to deal with some classes of
NTSs, generally
it is not easy to check the existence of appropriate quotient relations implementing them. So in order to make these
opacity-preserving (bi)simulation relations applicable to more classes of NTSs, in the future we will investigate different algorithms on the construction of NFTSs for NTSs.


\end{document}